\journalname{Acta Informatica}
\begin{document}

\sloppy

\title{Instruction Sequence Processing Operators}

\author{J.A. Bergstra \and C.A. Middelburg}

\institute{Informatics Institute, Faculty of Science,
           University of Amsterdam, \\
           Science Park~904, 1098~XH Amsterdam, the Netherlands \\
           \email{J.A.Bergstra@uva.nl, C.A.Middelburg@uva.nl}}

\date{}

\maketitle

\begin{abstract}
%% 223 %%
Instruction sequence is a key concept in practice, but it has as yet
not come prominently into the picture in theoretical circles.
This paper concerns instruction sequences, the behaviours produced by
them under execution, the interaction between these behaviours and
components of the execution environment, and two issues relating to
computability theory.
Positioning Turing's result regarding the undecidability of the halting
problem as a result about programs rather than machines, and taking
instruction sequences as programs, we analyse the autosolvability
requirement that a program of a certain kind must solve the halting
problem for all programs of that kind.
We present novel results concerning this autosolvability requirement.
The analysis is streamlined by using the notion of a functional unit,
which is an abstract state-based model of a machine.
In the case where the behaviours exhibited by a component of an
execution environment can be viewed as the behaviours of a machine in
its different states, the behaviours concerned are completely
determined by a functional unit.
The above-mentioned analysis involves functional units whose possible
states represent the possible contents of the tapes of Turing machines
with a particular tape alphabet.
We also investigate functional units whose possible states are the
natural numbers.
This investigation yields a novel computability result, viz.\ the
existence of a universal computable functional unit for natural
numbers.
\keywords{instruction sequence processing, functional unit,
          universality, halting problem, autosolvability}
% \CRclass{D.3.3 \and F.1.1 \and F.4.1}
\end{abstract}

\section{Introduction}
\label{sect-intro}

The concept of an instruction sequence is a very primitive concept in
computing.
Instruction sequence execution has always been part of computing
because of the fact that it underlies virtually all past and current
generations of computers.
It happens that, given a precise definition of an appropriate notion
of an instruction sequence, many issues in computer science can be
clearly explained in terms of instruction sequences, from issues of a
computer-architectural kind to issues of a computation-theoretic kind.
A simple yet interesting example is that a program can simply be
defined as a text that denotes an instruction sequence.
Such a definition corresponds to an appealing empirical perspective
found among practitioners.

In theoretical computer science, the meaning of programs usually plays a
prominent part in the explanation of many issues concerning programs.
Moreover, what is taken for the meaning of programs is mathematical by
nature.
On the other hand, it is customary that practitioners do not fall back
on the mathematical meaning of programs in case explanation of issues
concerning programs is needed.
They phrase their explanations from an empirical perspective.
An appealing empirical perspective is the one that a program is in
essence an instruction sequence and an instruction sequence under
execution produces a behaviour that is controlled by its execution
environment: each step performed actuates the processing of an
instruction by the execution environment and a reply returned at
completion of the processing determines how the behaviour proceeds.

The work presented in this paper belongs to a line of research which
started with an attempt to approach the semantics of programming
languages from the perspective mentioned above.
The first published paper on this approach is~\cite{BL00a}.
That paper is superseded by~\cite{BL02a} with regard to the groundwork
for the approach: program algebra, an algebraic theory of single-pass
instruction sequences, and basic thread algebra, an algebraic theory of
mathematical objects that represent in a direct way the behaviours
produced by instruction sequences under execution.%
\footnote
{In~\cite{BL02a}, basic thread algebra is introduced under the name
 basic polarized process algebra.
}
The main advantages of the approach are that it does not require a lot
of mathematical background and that it is more appealing to
practitioners than the main approaches to programming language
semantics: the operational approach, the denotational approach and the
axiomatic approach.
For an overview of these approaches, see e.g.~\cite{Mos06a}.

As a continuation of the work on a new approach to programming language
semantics, the notion of an instruction sequence was subjected to
systematic and precise analysis using the groundwork laid earlier.
This led among other things to expressiveness results about the
instruction sequences considered and variations of the instruction
sequences considered (see e.g.~\cite{PZ06a,BM07g,BP09a,BM08h,BM08b}).
Instruction sequences are under discussion for many years in diverse
work on computer architecture, as witnessed by
e.g.~\cite{Lun77a,PD80a,HJP82a,Bak91a,XT96a,BH97a,NH97a,OH00a,TW07a},
but the notion of an instruction sequence has never been subjected to
any precise analysis.

As another continuation of the work on a new approach to programming
language semantics, selected issues relating to well-known subjects
from the theory of computation and the area of computer architecture
were rigorously investigated thinking in terms of instruction sequences
(see e.g.~\cite{BM08g,BM09i,BM07c,BM11c}).
The subjects from the theory of computation, namely the halting problem
and non-uniform computational complexity, are usually investigated
thinking in terms of a common model of computation such as Turing
machines and Boolean circuits (see e.g.~\cite{Her65a,Sip06a,AB09a}).
The subjects from the area of computer architecture, namely instruction
sequence performance, instruction set architectures and remote
instruction processing, are usually not investigated in a rigorous way
at all.

This paper concerns among other things an investigation of issues
relating to the halting problem thinking in terms of instruction
sequences.
Positioning Turing's result regarding the undecidability of the halting
problem (see e.g.~\cite{Tur37a}) as a result about programs rather than
machines, and taking instruction sequences as programs, we analyse the
autosolvability requirement that a program of a certain kind must solve
the halting problem for all programs of that kind.
We present positive and negative results concerning the autosolvability
of the halting problem for programs.
To our knowledge, these results are new and unusual.

Most work done in the line of research sketched above requires that
basic thread algebra, i.e.\ the algebraic theory of mathematical
objects that represent in a direct way the behaviours produced by
instruction sequences under execution, is extended to deal with the
interaction between instruction sequences under execution and
components of their execution environment concerning the processing of
instructions.
The first published paper on such an extended theory is~\cite{BP02a}.
A substantial re-design of the extended theory presented in that paper
is presented in the current paper.
The changes introduced allow for the material from quite a part of the
work done in the line of research sketched above to be streamlined.

Further streamlining is achieved in this paper by introducing and using
the notion of a functional unit.
In the extended theory, a rather abstract view of the behaviours
exhibited by components of execution environments is taken.
The view is just sufficiently concrete for the purpose of the theory.
A functional unit is an abstract model of a machine.
Under the abstract view of the behaviours exhibited by a component of
an execution environment, the behaviours concerned are completely
determined by a functional unit in the frequently occurring case that
they can be viewed as the behaviours of a machine in its different
states.
The current paper also concerns an investigation of functional units
whose possible states are the natural numbers.
This investigation yields a computability result that is new and
unusual as far as we know, namely the existence of a universal
computable functional unit for natural numbers.

The investigations carried out in the line of research sketched above
demonstrate that the concept of an instruction sequence offers a novel
and useful viewpoint on issues relating to diverse subjects.
In view of the very primitive nature of this concept, it is in fact
rather surprising that instruction sequences have never been a theme in
computer science.
A theoretical understanding of issues in terms of instruction sequences
will probably become increasingly more important to a growing number of
developments in computer science.
Among them are for instance the developments with respect to techniques
for high-performance program execution on classical or non-classical
computers and techniques for estimating execution times of hard
real-time systems.
For these and other such developments, the abstractions usually made do
not allow for all relevant details to be considered.

Some marginal notes are in order.
In this paper, we use an extension of a program notation rooted in
program algebra instead of an extension of program algebra itself.
The program notation in question has been chosen because it turned out
to be appropriate.
However, in principle any program notation that is as expressive as the
closed terms of program algebra would do.
The above-mentioned analysis of the autosolvability requirement
inherent in Turing's result regarding the undecidability of the halting
problem involves functional units whose possible states are objects
that represent the possible contents of the tapes of Turing machines
with a particular tape alphabet.

Henceforth, objects that represent in a direct way the behaviours
produced by instruction sequences under execution are called threads,
objects that represent the behaviours exhibited by components of
execution environments are called services, and collections of named
services are called service families.
In order to deal with the different aspects of the interaction between
instruction sequences under execution and components of their execution
environment concerning the processing of instructions, three operators
are added to basic thread algebra.
Because these operators are primarily intended to be used to describe
and analyse instruction sequence processing, they are loosely referred
to by the term instruction sequence processing operators.

This paper is organized as follows.
First, we give a survey of the program notation used in this paper
(Section~\ref{sect-PGLBbt}) and define its semantics using basic thread
algebra (Section~\ref{sect-BTAbt}).
Next, we introduce services and a composition operator for services
families (Section~\ref{sect-SF}), and the three operators that are
related to the processing of instructions by a service family
(Section~\ref{sect-TSI}).
Then, we propose to comply with conventions that exclude the use of
terms that are not really intended to denote anything
(Sections~\ref{sect-RUC}).
After that, we give an example related to the processing of instructions
by a service family (Section~\ref{sect-example}).
Further, we present an interesting variant of one of the above-mentioned
operators related to the processing of instructions
(Section~\ref{sect-abstr-use}).
Thereafter, we introduce the concept of a functional unit and related
concepts (Section~\ref{sect-func-unit}).
Subsequently, we investigate functional units for natural numbers
(Section~\ref{sect-func-unit-nat}).
Then, we define autosolvability and related notions in terms of
functional units related to Turing machine tapes
(Section~\ref{sect-func-unit-sbs}).
After that, we discuss the weakness of interpreters when it comes to
solving the halting problem (Section~\ref{sect-interpreters}) and give
positive and negative results concerning the autosolvability of the
halting problem (Section~\ref{sect-autosolvability}).
Finally, we make some concluding remarks (Section~\ref{sect-concl}).

This paper consolidates material from the
reports~\cite{BM09m,BM09l,BM09k}.

\section{PGLB with Boolean Termination}
\label{sect-PGLBbt}

In this section, we introduce the program notation \PGLBbt\ (PGLB with
Boolean termination).
In~\cite{BL02a}, a hierarchy of program notations rooted in program
algebra is presented.
One of the program notations that belong to this hierarchy is \PGLB\
(ProGramming Language B).
This program notation is close to existing assembly languages and has
relative jump instructions.
\PGLBbt\ is \PGLB\ extended with two termination instructions that allow
for the execution of instruction sequences to yield a Boolean value at
termination.
The extension makes it possible to deal naturally with instruction
sequences that implement some test, which is relevant throughout the
paper.

In \PGLBbt, it is assumed that a fixed but arbitrary non-empty finite
set $\BInstr$ of \emph{basic instructions} has been given.
The intuition is that the execution of a basic instruction in most
instances modifies a state and in all instances produces a reply at its
completion.
The possible replies are $\True$ (standing for true) and $\False$
(standing for false), and the actual reply is in most instances
state-dependent.
Therefore, successive executions of the same basic instruction may
produce different replies.
The set $\BInstr$ is the basis for the set of all instructions that may
appear in the instruction sequences considered in \PGLBbt.
These instructions are called primitive instructions.

\PGLBbt\ has the following primitive instructions:
\begin{itemize}
\item
for each $a \in \BInstr$, a \emph{plain basic instruction} $a$;
\item
for each $a \in \BInstr$, a \emph{positive test instruction} $\ptst{a}$;
\item
for each $a \in \BInstr$, a \emph{negative test instruction} $\ntst{a}$;
\item
for each $l \in \Nat$, a \emph{forward jump instruction}
$\fjmp{l}$;
\item
for each $l \in \Nat$, a \emph{backward jump instruction}
$\bjmp{l}$;
\item
a \emph{plain termination instruction} $\halt$;
\item
a \emph{positive termination instruction} $\haltP$;
\item
a \emph{negative termination instruction} $\haltN$.
\end{itemize}
\PGLBbt\ instruction sequences have the form
$u_1 \conc \ldots \conc u_k$, where $u_1,\ldots,u_k$ are primitive
instructions of \PGLBbt.

On execution of a \PGLBbt\ instruction sequence, these primitive
instructions have the following effects:
\begin{itemize}
\item
the effect of a positive test instruction $\ptst{a}$ is that basic
instruction $a$ is executed and execution proceeds with the next
primitive instruction if $\True$ is produced and otherwise the next
primitive instruction is skipped and execution proceeds with the
primitive instruction following the skipped one -- if there is no
primitive instructions to proceed with, deadlock occurs;
\item
the effect of a negative test instruction $\ntst{a}$ is the same as the
effect of $\ptst{a}$, but with the role of the value produced reversed;
\item
the effect of a plain basic instruction $a$ is the same as the effect of
$\ptst{a}$, but execution always proceeds as if $\True$ is produced;
\item
the effect of a forward jump instruction $\fjmp{l}$ is that execution
proceeds with the $l^\mathrm{th}$ next primitive instruction -- if $l$
equals $0$ or there is no primitive instructions to proceed with,
deadlock occurs;
\item
the effect of a backward jump instruction $\bjmp{l}$ is that execution
proceeds with the $l^\mathrm{th}$ previous primitive instruction -- if
$l$ equals $0$ or there is no primitive instructions to proceed with,
deadlock occurs;
\item
the effect of the plain termination instruction $\halt$ is that
execution terminates and in doing so does not deliver a value;
\item
the effect of the positive termination instruction $\haltP$ is that
execution terminates and in doing so delivers the Boolean value $\True$;
\item
the effect of the negative termination instruction $\haltN$ is that
execution terminates and in doing so delivers the Boolean value
$\False$.
\end{itemize}

A simple example of a \PGLBbt\ instruction sequence is
\begin{ldispl}
\ptst{a} \conc \fjmp{2} \conc \bjmp{2} \conc b \conc \haltP\;.
\end{ldispl}%
On execution of this instruction sequence, first the basic instruction
$a$ is executed repeatedly until its execution produces the reply
$\True$, next the basic instruction $b$ is executed, and after that
execution terminates with delivery of the value $\True$.

From Section~\ref{sect-func-unit}, we will use a restricted version of
\PGLBbt\ called \PGLBsbt\ (\PGLB\ with strict Boolean termination).
The primitive instructions of \PGLBsbt\ are the primitive instructions
of \PGLBbt\ with the exception of the plain termination instruction.
Thus, \PGLBsbt\ instruction sequences are \PGLBbt\ instruction sequences
in which the plain termination instruction does not occur.

In Section~\ref{sect-example}, we will give examples of instruction
sequences for which the delivery of a Boolean value at termination of
their execution is natural.
There, we will write $\Conc{i = 1}{n} P_i$, where $P_1,\ldots,P_n$ are
\PGLBbt\ instruction sequences, for the \PGLBbt\ instruction sequence
$P_1 \conc \ldots \conc P_n$.

\section{Thread Extraction}
\label{sect-BTAbt}

In this section, we make precise in the setting of \BTAbt\ (Basic Thread
Algebra with Boolean termination) which behaviours are exhibited on
execution by \PGLBbt\ instruction sequences.
We start by introducing \BTAbt.
In~\cite{BL02a}, \BPPA\ (Basic Polarized Process Algebra) is introduced
as a setting for modelling the behaviours exhibited by instruction
sequences under execution.
Later, \BPPA\ has been renamed to \BTA\ (Basic Thread Algebra).
\BTAbt\ is \BTA\ extended with two constants for termination at which a
Boolean value is yielded.

In \BTAbt, it is assumed that a fixed but arbitrary non-empty finite set
$\BAct$ of \emph{basic actions}, with $\Tau \not\in \BAct$, has been
given.
We write $\BActTau$ for $\BAct \union \set{\Tau}$.
The members of $\BActTau$ are referred to as \emph{actions}.

A thread is a behaviour which consists of performing actions in a
sequential fashion.
Upon each basic action performed, a reply from an execution environment
determines how the thread proceeds.
The possible replies are the Boolean values $\True$ and $\False$.
Performing the action $\Tau$ will always lead to the reply $\True$.

\BTAbt\ has one sort: the sort $\Thr$ of \emph{threads}.
We make this sort explicit because we will extend \BTAbt\ with
additional sorts in Section~\ref{sect-TSI}.
To build terms of sort $\Thr$, \BTAbt\ has the following constants and
operators:
\begin{itemize}
\item
the \emph{deadlock} constant $\const{\DeadEnd}{\Thr}$;
\item
the \emph{plain termination} constant $\const{\Stop}{\Thr}$;
\item
the \emph{positive termination} constant $\const{\StopP}{\Thr}$;
\item
the \emph{negative termination} constant $\const{\StopN}{\Thr}$;
\item
for each $a \in \BActTau$, the binary \emph{postconditional composition}
operator $\funct{\pcc{\ph}{a}{\ph}}{\Thr \x \Thr}{\Thr}$.
\end{itemize}
We assume that there is a countably infinite set of variables of sort
$\Thr$ which includes $x,y,z$.
Terms of sort $\Thr$ are built as usual.
We use infix notation for postconditional composition.
We introduce \emph{action prefixing} as an abbreviation: $a \bapf p$,
where $p$ is a term of sort $\Thr$, abbreviates $\pcc{p}{a}{p}$.

The thread denoted by a closed term of the form $\pcc{p}{a}{q}$ will
first perform $a$, and then proceed as the thread denoted by $p$
if the reply from the execution environment is $\True$ and proceed as
the thread denoted by $q$ if the reply from the execution environment is
$\False$.
The thread denoted by $\DeadEnd$ will become inactive, the thread
denoted by $\Stop$ will terminate without yielding a value, and the
threads denoted by $\StopP$ and $\StopN$ will terminate and with that
yield the Boolean values $\True$ and $\False$, respectively.

A simple example of a closed \BTAbt\ term is
\begin{ldispl}
\pcc{(b \bapf \StopP)}{a}{(c \bapf \StopN)}\;.
\end{ldispl}%
This term denotes the thread that first performs basic action $a$, if
the reply from the execution environment on performing $a$ is $\True$,
next performs the basic action $b$ and after that terminates with
delivery of the Boolean value $\True$, and if the reply from the
execution environment on performing $a$ is $\False$, next performs the
basic action $c$ and after that terminates with delivery of the Boolean
value $\False$.

\BTAbt\ has only one axiom.
This axiom is given in Table~\ref{axioms-BTAbt}.%
\begin{table}[!t]
\caption{Axiom of \BTAbt}
\label{axioms-BTAbt}
\begin{eqntbl}
\begin{axcol}
\pcc{x}{\Tau}{y} = \pcc{x}{\Tau}{x}                     & \axiom{T1}
\end{axcol}
\end{eqntbl}
\end{table}

Each closed \BTA\ term denotes a finite thread, i.e.\ a thread with a
finite upper bound to the number of actions that it can perform.
Infinite threads, i.e.\ threads without a finite upper bound to the
number of actions that it can perform, can be described by guarded
recursion.

A \emph{guarded recursive specification} over \BTAbt\ is a set of
recursion equations $E = \set{x = t_x \where x \in V}$, where $V$ is a
set of variables of sort $\Thr$ and each $t_x$ is a \BTAbt\ term of the
form $\DeadEnd$, $\Stop$, $\StopP$, $\StopN$ or $\pcc{t}{a}{t'}$ with
$t$ and $t'$ that contain only variables from $V$.
We are only interested in models of \BTAbt\ in which guarded recursive
specifications have unique solutions, such as the appropriate expansion
of the projective limit model of \BTA\ presented in~\cite{BB03a}.

A simple example of a guarded recursive specification is the one
consisting of the following two equations:
\begin{ldispl}
x = \pcc{x}{a}{y}\;, \qquad
y = \pcc{y}{b}{\Stop}\;.
\end{ldispl}%
The $x$-component of the solution of this guarded recursive
specification is the thread that first performs basic action $a$
repeatedly until the reply from the execution environment on performing
$a$ is $\False$, next performs basic action $b$ repeatedly until the
reply from the execution environment on performing $b$ is $\False$, and
after that terminates without delivery of a Boolean value.

To reason about infinite threads, we assume the infinitary conditional
equation AIP (Approximation Induction Principle).
AIP is based on the view that two threads are identical if their
approximations up to any finite depth are identical.
The approximation up to depth $n$ of a thread is obtained by cutting it
off after it has performed $n$ actions.
In AIP, the approximation up to depth $n$ is phrased in terms of the
unary \emph{projection} operator $\funct{\projop{n}}{\Thr}{\Thr}$.
AIP and the axioms for the projection operators are given in
Table~\ref{axioms-AIP}.%
\begin{table}[!t]
\caption{Approximation induction principle}
\label{axioms-AIP}
\begin{eqntbl}
\begin{axcol}
\AND{n \geq 0}{} \proj{n}{x} = \proj{n}{y} \Implies
                                                  x = y & \axiom{AIP} \\
\proj{0}{x} = \DeadEnd                                  & \axiom{P0} \\
\proj{n+1}{\StopP} = \StopP                             & \axiom{P1a} \\
\proj{n+1}{\StopN} = \StopN                             & \axiom{P1b} \\
\proj{n+1}{\Stop} = \Stop                               & \axiom{P1c} \\
\proj{n+1}{\DeadEnd} = \DeadEnd                         & \axiom{P2} \\
\proj{n+1}{\pcc{x}{a}{y}} =
                      \pcc{\proj{n}{x}}{a}{\proj{n}{y}} & \axiom{P3}
\end{axcol}
\end{eqntbl}
\end{table}
In this table, $a$ stands for an arbitrary action from $\BActTau$ and
$n$ stands for an arbitrary natural number.

We can prove that the projections of solutions of guarded recursive
specifications over \BTAbt\ are representable by closed \BTAbt\ terms of
sort $\Thr$.
\begin{lemma}
\label{lemma-projections-BTAbt}
Let $E$ be a guarded recursive specification over \BTAbt, and
let $x$ be a variable occurring in $E$.
Then, for all $n \in \Nat$, there exists a closed \BTAbt\ term $p$ of
sort $\Thr$ such that $\proj{n}{x} = p$ is derivable from $E$ and the
axioms for the projection operators.
\end{lemma}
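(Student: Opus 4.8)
The plan is to prove the statement by induction on $n$, but to make the induction go through I would first strengthen it: rather than proving the claim only for a variable $x$, I would prove that for all $n \in \Nat$ and every \BTAbt\ term $t$ of sort $\Thr$ whose variables all belong to $V$, there is a closed \BTAbt\ term $p$ with $E \Implies \proj{n}{t} = p$. The original lemma is then the instance $t = x$. The reason for the strengthening is that the right-hand sides $t_x$ of the equations in $E$ are, apart from the constant cases, postconditional compositions $\pcc{t}{a}{t'}$ in which $t$ and $t'$ are arbitrary terms over $V$, not merely variables; an induction hypothesis phrased only about variables would not apply to the subterms that arise upon unfolding.

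For the base case $n = 0$, axiom P0 gives $E \Implies \proj{0}{t} = \DeadEnd$ for every $t$, and $\DeadEnd$ is closed, so $p = \DeadEnd$ works. For the inductive step I would assume the strengthened claim for $n$ and establish it for $n+1$ by a case distinction on the outermost form of $t$. If $t$ is one of the constants $\DeadEnd$, $\Stop$, $\StopP$, $\StopN$, then one of the axioms P2, P1c, P1a, P1b immediately rewrites $\proj{n+1}{t}$ to that same closed constant. If $t$ has the form $\pcc{t_1}{a}{t_2}$, then P3 gives $E \Implies \proj{n+1}{t} = \pcc{\proj{n}{t_1}}{a}{\proj{n}{t_2}}$; since $t_1$ and $t_2$ are again terms over $V$, the induction hypothesis for $n$ supplies closed terms $p_1, p_2$ with $E \Implies \proj{n}{t_i} = p_i$, so that $p = \pcc{p_1}{a}{p_2}$ is closed and as required.

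The remaining, and genuinely essential, case is $t = y$ for a variable $y \in V$. Here I would use the equation $y = t_y$ from $E$ to obtain $E \Implies \proj{n+1}{y} = \proj{n+1}{t_y}$, and then appeal to guardedness: by the definition of a guarded recursive specification, $t_y$ is never a bare variable but is one of the constants or a postconditional composition. Hence $\proj{n+1}{t_y}$ falls under one of the cases already treated, and the same axioms reduce it to a closed term. I expect this to be the crux of the argument: it is precisely guardedness that guarantees that unfolding a variable is immediately absorbed by a projection axiom, so each unfolding strictly decreases the projection index and the procedure terminates rather than looping on the recursion. Assembling the cases completes the inductive step, and instantiating the strengthened claim at $t = x$ yields the lemma.
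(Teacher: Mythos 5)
Your proof is correct, but it is not directly comparable to the paper's own proof, because the paper gives none: it proves Lemma~\ref{lemma-projections-BTAbt} by citation, observing that the corresponding fact for \BTA\ is established inside the proof of Theorem~1 of~\cite{BM06a} and asserting that that argument goes through for \BTAbt. Your self-contained induction is the standard argument and is presumably essentially what the cited proof does. Your two key moves are exactly right: the strengthening from variables to arbitrary terms with variables in $V$ is forced, because the right-hand sides $\pcc{t}{a}{t'}$ of equations in $E$ have arbitrary such terms (not just variables) as arguments, so an induction hypothesis about variables alone would not apply after unfolding; and the crux is indeed guardedness, which guarantees that unfolding $y = t_y$ never yields a bare variable, so every unfolding is immediately absorbed by one of the projection axioms and the projection index strictly decreases, ruling out circular unfolding. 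Your case analysis also makes visible why the paper's claim that the \BTA\ proof ``goes through'' for \BTAbt\ is true: the only new ingredients are the constants $\StopP$ and $\StopN$, which are dispatched by the new axioms P1a and P1b exactly as $\Stop$ is dispatched by P1c. What your route buys is a lemma verifiable within the paper itself; what the paper's route buys is brevity and no duplication of material from~\cite{BM06a}.
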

\begin{proof}
In the case of \BTA, this is proved in~\cite{BM06a} as part of the proof
of Theorem~1 from that paper.
The proof concerned goes through in the case of \BTAbt.
\qed
\end{proof}
For example, let $E$ be the guarded recursive specification
consisting of the equation $x = \pcc{x}{a}{\Stop}$ only.
Then the projections of $x$ are as follows:
\begin{ldispl}
\begin{aeqns}
\proj{0}{x} & = & \DeadEnd\;,
\\
\proj{1}{x} & = & \pcc{\DeadEnd}{a}{\Stop}\;,
\\
\proj{2}{x} & = & \pcc{(\pcc{\DeadEnd}{a}{\Stop})}{a}{\Stop}\;,
\\
\proj{3}{x} & = &
\pcc{(\pcc{(\pcc{\DeadEnd}{a}{\Stop})}{a}{\Stop})}{a}{\Stop}\;,
\\ & \vdots &
\end{aeqns}
\end{ldispl}%

Henceforth, we will write \BTAbtp\ for \BTAbt\ extended with the
projection operators, the axioms for the projection operators, and AIP.

The behaviours exhibited on execution by \PGLBbt\ instruction sequences
are considered to be threads, with the basic instructions taken for
basic actions.
The \emph{thread extraction} operation $\extr{\ph}$ defines, for each
\PGLBbt\ instruction sequence, the behaviour exhibited on its execution.
The thread extraction operation is defined by
$\extr{u_1 \conc \ldots \conc u_k} =
 \extr{1,u_1 \conc \ldots \conc u_k}$,
where $\extr{\ph,\ph}$ is defined by the equations given in
Table~\ref{axioms-thread-extr} (for $a \in \BInstr$ and $l,i \in \Nat$)%
\begin{table}[!t]
\caption{Defining equations for the thread extraction operation}
\label{axioms-thread-extr}
\begin{eqntbl}
\begin{aceqns}
\extr{i,u_1 \conc \ldots \conc u_k} & = & \DeadEnd
& \mif \mathrm{not}\; 1 \leq i \leq k \\
\extr{i,u_1 \conc \ldots \conc u_k} & = &
a \bapf \extr{i+1,u_1 \conc \ldots \conc u_k}
& \mif u_i = a \\
\extr{i,u_1 \conc \ldots \conc u_k} & = &
\pcc{\extr{i+1,u_1 \conc \ldots \conc u_k}}{a}
    {\extr{i+2,u_1 \conc \ldots \conc u_k}}
& \mif u_i = +a \\
\extr{i,u_1 \conc \ldots \conc u_k} & = &
\pcc{\extr{i+2,u_1 \conc \ldots \conc u_k}}{a}
    {\extr{i+1,u_1 \conc \ldots \conc u_k}}
& \mif u_i = -a \\
\extr{i,u_1 \conc \ldots \conc u_k} & = &
\extr{i+l,u_1 \conc \ldots \conc u_k}
& \mif u_i = \fjmp{l} \\
\extr{i,u_1 \conc \ldots \conc u_k} & = &
\extr{i \monus l,u_1 \conc \ldots \conc u_k}
& \mif u_i = \bjmp{l} \\
\extr{i,u_1 \conc \ldots \conc u_k} & = & \Stop
& \mif u_i = \halt \\
\extr{i,u_1 \conc \ldots \conc u_k} & = & \StopP
& \mif u_i = \haltP \\
\extr{i,u_1 \conc \ldots \conc u_k} & = & \StopN
& \mif u_i = \haltN
\end{aceqns}
\end{eqntbl}
\end{table}%
\footnote
{As usual, we write $i \monus j$ for the monus of $i$ and $j$, i.e.\
 $i \monus j = i - j$ if $i \geq j$ and $i \monus j = 0$ otherwise.
}
and the rule that $\extr{i,u_1 \conc \ldots \conc u_k} = \DeadEnd$ if
$u_i$ is the beginning of an infinite jump chain.%
\footnote
{This rule can be formalized, cf.~\cite{BM07g}.}

If $1 \leq i \leq k$, $\extr{i,u_1 \conc \ldots \conc u_k}$ can be read
as the behaviour exhibited by $u_1 \conc \ldots \conc u_k$ on execution
if execution starts at the $i^\mathrm{th}$ primitive instruction, i.e.\
$u_i$.
By default, execution starts at the first primitive instruction.

Some simple examples of thread extraction are
\begin{ldispl}
\begin{aeqns}
\extr{\ptst{a} \conc \fjmp{2} \conc \fjmp{3} \conc b \conc \haltP}
& = &
\pcc{(b \bapf \StopP)}{a}{\DeadEnd}\;,
\\
\extr{\ptst{a} \conc \ntst{b} \conc c \conc \halt}
& = &
\pcc{(\pcc{\Stop}{b}{(c \bapf \Stop)})}{a}{(c \bapf \Stop)}\;.
\end{aeqns}
\end{ldispl}%
The behaviour exhibited on execution may also be an infinite thread.
For example,
\begin{ldispl}
\extr{a \conc
      \ptst{b} \conc \fjmp{2} \conc \fjmp{3} \conc
      c \conc \fjmp{4} \conc
      \ntst{d} \conc \halt \conc a \conc \bjmp{8}}
\end{ldispl}%
denotes the $x$-component of the solution of the guarded recursive
specification consisting of the following two equations:
\begin{ldispl}
x = a \bapf y\;, \qquad
y = \pcc{(c \bapf y)}{b}{(\pcc{x}{d}{\Stop})}\;.
\end{ldispl}%

\section{Services and Service Families}
\label{sect-SF}

In this section, we introduce service families and a composition
operator for service families.
We start by introducing services.

It is assumed that a fixed but arbitrary non-empty finite set $\Meth$
of \emph{methods} has been given.
A service is able to process certain methods.
The processing of a method may involve a change of the service.
At completion of the processing of a method, the service produces a
reply value.
The set $\Replies$ of \emph{reply values} is the $3$-element set
$\set{\True,\False,\Div}$.
The reply value $\Div$ stands for divergent.

For example, a service may be able to process methods for pushing a
natural number on a stack ($\push{n}$), testing whether the top of the
stack equals a natural number ($\topeq{n}$), and popping the top element
from the stack ($\pop$).
Processing of a pushing method or a popping method changes the service,
because it changes the stack with which it deals, and produces the reply
value $\True$ if no stack overflow or stack underflow occurs and
$\False$ otherwise.
Processing of a testing method does not change the service, because it
does not changes the stack with which it deals, and produces the reply
value $\True$ if the test succeeds and $\False$ otherwise.
Attempted processing of a method that the service is not able to process
changes the service into one that is not able to process any method and
produces the reply $\Div$.

In \SFA, the algebraic theory of service families introduced below, the
following is assumed with respect to services:
\begin{itemize}
\item
a set $\Services$ of services has been given together with:
\begin{itemize}
\item
for each $m \in \Meth$,
a total function $\funct{\effect{m}}{\Services}{\Services}$;
\item
for each $m \in \Meth$,
a total function $\funct{\sreply{m}}{\Services}{\Replies}$;
\end{itemize}
satisfying the condition that there exists a unique $S \in \Services$
with $\effect{m}(S) = S$ and $\sreply{m}(S) = \Div$ for all
$m \in \Meth$;
\item
a signature $\Sig{\Services}$ has been given that includes the following
sort:
\begin{itemize}
\item
the sort $\Serv$ of \emph{services};
\end{itemize}
and the following constant and operators:
\begin{itemize}
\item
the \emph{empty service} constant $\const{\emptyserv}{\Serv}$;
\item
for each $m \in \Meth$,
the \emph{derived service} operator
$\funct{\derive{m}}{\Serv}{\Serv}$;
\end{itemize}
\item
$\Services$ and $\Sig{\Services}$ are such that:
\begin{itemize}
\item
each service in $\Services$ can be denoted by a closed term of sort
$\Serv$;
\item
the constant $\emptyserv$ denotes the unique $S \in \Services$ such
that $\effect{m}(S) = S$ and $\sreply{m}(S) = \Div$ for all
$m \in \Meth$;
\item
if closed term $t$ denotes service $S$, then $\derive{m}(t)$ denotes
service $\effect{m}(S)$.
\end{itemize}
\end{itemize}

When a request is made to service $S$ to process method $m$:
\begin{itemize}
\item
if $\sreply{m}(S) \neq \Div$, then $S$ processes $m$, produces the reply
$\sreply{m}(S)$, and next proceeds as $\effect{m}(S)$;
\item
if $\sreply{m}(S) = \Div$, then $S$ is not able to process method $m$
and proceeds as $\emptyserv$.
\end{itemize}
The empty service $\emptyserv$ is unable to process any method.

It is also assumed that a fixed but arbitrary non-empty finite set
$\Foci$ of \emph{foci} has been given.
Foci play the role of names of services in the service family offered by
an execution environment.
A service family is a set of named services where each name occurs only
once.

\SFA\ has the sorts, constants and operators in $\Sig{\Services}$
and in addition the following sort:
\begin{itemize}
\item
the sort $\ServFam$ of \emph{service families};
\end{itemize}
and the following constant and operators:
\begin{itemize}
\item
the \emph{empty service family} constant $\const{\emptysf}{\ServFam}$;
\item
for each $f \in \Foci$, the unary \emph{singleton service family}
operator $\funct{\mathop{f{.}} \ph}{\Serv}{\ServFam}$;
\item
the binary \emph{service family composition} operator
$\funct{\ph \sfcomp \ph}{\ServFam \x \ServFam}{\ServFam}$;
\item
for each $F \subseteq \Foci$, the unary \emph{encapsulation} operator
$\funct{\encap{F}}{\ServFam}{\ServFam}$.
\end{itemize}
We assume that there is a countably infinite set of variables of sort
$\ServFam$ which includes $u,v,w$.
Terms are built as usual in the many-sorted case
(see e.g.~\cite{Wir90a,ST99a}).
We use prefix notation for the singleton service family operators and
infix nota\-tion for the service family composition operator.

The service family denoted by $\emptysf$ is the empty service family.
The service family denoted by a closed term of the form $f.H$ consists
of one named service only, the service concerned is the service denoted
by $H$, and the name of this service is $f$.
The service family denoted by a closed term of the form $C \sfcomp D$
consists of all named services that belong to either the service family
denoted by $C$ or the service family denoted by $D$.
In the case where a named service from the service family denoted by $C$
and a named service from the service family denoted by $D$ have the same
name, they collapse to an empty service with the name concerned.
The service family denoted by a closed term of the form $\encap{F}(C)$
consists of all named services with a name not in $F$ that belong to the
service family denoted by $C$.
Thus, the service families denoted by closed terms of the forms $f.H$
and $\encap{\set{f}}(C)$ do not collapse to an empty service in service
family composition.

Using the singleton service family operators and the service family
composition operator, any finite number of possibly identical services
can be brought together in a service family provided that the services
concerned are given different names.
In Section~\ref{sect-example}, we will give an example of the use of
the singleton service family operators and the service family
composition operator.
The empty service family constant and the encapsulation operators are
primarily meant to axiomatize the operators that are introduced in
Section~\ref{sect-TSI}.

The service family composition operator takes the place of the
non-interfering combination operator from~\cite{BP02a}.
As suggested by the name, service family composition is composition of
service families.
Non-interfering combination is composition of services.
The non-interfering combination of services can process all methods
that can be processed by only one of the services.
This has the disadvantage that its usefulness is rather limited without
an additional renaming mechanism.
For example, a finite number of identical services cannot be brought
together in a service by means of non-interfering combination.

The axioms of \SFA\ are given in Table~\ref{axioms-SFA}.%
\begin{table}[!t]
\caption{Axioms of \SFA}
\label{axioms-SFA}
\begin{eqntbl}
\begin{axcol}
u \sfcomp \emptysf = u                                 & \axiom{SFC1} \\
u \sfcomp v = v \sfcomp u                              & \axiom{SFC2} \\
(u \sfcomp v) \sfcomp w = u \sfcomp (v \sfcomp w)      & \axiom{SFC3} \\
f.H \sfcomp f.H' = f.\emptyserv                        & \axiom{SFC4}
\end{axcol}
\qquad
\begin{saxcol}
\encap{F}(\emptysf) = \emptysf                       & & \axiom{SFE1} \\
\encap{F}(f.H) = \emptysf & \mif f \in F               & \axiom{SFE2} \\
\encap{F}(f.H) = f.H      & \mif f \notin F            & \axiom{SFE3} \\
\encap{F}(u \sfcomp v) =
\encap{F}(u) \sfcomp \encap{F}(v)                    & & \axiom{SFE4}
\end{saxcol}
\end{eqntbl}
\end{table}
In this table, $f$ stands for an arbitrary focus from $\Foci$ and $H$
and $H'$ stand for arbitrary closed terms of sort $\Serv$.
The axioms of \SFA\ simply formalize the informal explanation given above.

In Section~\ref{sect-example}, we will give an example of the use of the
service family composition operator.
There, we will write $\Sfcomp{i = 1}{n} C_i$, where $C_1,\ldots,C_n$ are
terms of sort $\ServFam$, for the term $C_1 \sfcomp \ldots \sfcomp C_n$.

\section{Use, Apply and Reply}
\label{sect-TSI}

A thread may interact with the named services from the service family
offered by an execution environment.
That is, a thread may perform a basic action for the purpose of
requesting a named service to process a method and to return a reply
value at completion of the processing of the method.
In this section, we combine \BTAbtp\ with \SFA\ and extend the
combination with three operators that relate to this kind of interaction
between threads and services, resulting in \TAbt.

The operators in question are called the use operator, the apply
operator, and the reply operator.
The difference between the use operator and the apply operator is a
matter of perspective: the use operator is concerned with the effects of
service families on threads and therefore produces threads, whereas the
apply operator is concerned with the effects of threads on service
families and therefore produces service families.
The reply operator is concerned with the effects of service families on
the Boolean values that threads possibly deliver at their termination.
The use operator and the apply operator introduced here are mainly
adaptations of the use operators and the apply operators introduced
in~\cite{BP02a} to service families.
The reply operator has no counterpart in~\cite{BP02a}.

The reply operator produces special values in the case where no Boolean
value is delivered at termination or no termination takes place.
Thus, it is accomplished that all terms with occurrences of the reply
operator denote something.
However, we prefer to use the reply operator only if it is known that
termination with delivery of a Boolean value takes place
(see also Section~\ref{sect-RUC}).

For the set $\BAct$ of basic actions, we take the set
$\set{f.m \where f \in \Foci, m \in \Meth}$.
All three operators mentioned above are concerned with the processing of
methods by services from a service family in pursuance of basic actions
performed by a thread.
The service involved in the processing of a method is the service whose
name is the focus of the basic action in question.

\TAbt\ has the sorts, constants and operators of both \BTAbtp\ and \SFA\
and in addition the following sort:
\begin{itemize}
\item
the sort $\Repl$ of \emph{replies};
\end{itemize}
and the following constants and operators:
\begin{itemize}
\item
the \emph{reply} constants $\const{\True,\False,\Div,\Mea}{\Repl}$;
\item
the binary \emph{use} operator
$\funct{\ph \sfuse \ph}{\Thr \x \ServFam}{\Thr}$;
\item
the binary \emph{apply} operator
$\funct{\ph \sfapply \ph}{\Thr \x \ServFam}{\ServFam}$;
\item
the binary \emph{reply} operator
$\funct{\ph \sfreply \ph}{\Thr \x \ServFam}{\Repl}$.
\end{itemize}
We use infix notation for the use, apply and reply operators.

The thread denoted by a closed term of the form $p \sfuse C$ and the
service family denoted by a closed term of the form $p \sfapply C$ are
the thread and service family, respectively, that result from processing
the method of each basic action performed by the thread denoted by $p$
by the service in the service family denoted by $C$ with the focus
of the basic action as its name if such a service exists.
When the method of a basic action performed by a thread is processed by
a service, the service changes in accordance with the method concerned,
and affects the thread as follows: the basic action turns into the
internal action $\Tau$ and the two ways to proceed reduce to one on the
basis of the reply value produced by the service.
The value denoted by a closed term of the form $p \sfreply C$ is the
Boolean value that the thread denoted by $p \sfuse C$ delivers at its
termination if it terminates and delivers a Boolean value at
termination, the value denoted by $\Mea$ (standing for meaningless) if
it terminates and does not deliver a Boolean value at termination, and
the value denoted by $\Div$ (standing for divergent) if it does not
terminate.
We are only interested in models of \TAbt\ in which the cardinality of
the domain associated with the sort $\Repl$ is $4$ and the elements of
this domain denoted by the constants $\True$, $\False$, $\Div$ and
$\Mea$ are mutually different.

A simple example of the application of the use operator, the apply
operator and the reply operator is
\begin{ldispl}
(\pcc{(\nns.\pop \bapf \StopP)}{\nns.\topeq{0}}{\StopN}) \sfuse
\nns.\NNS(0 \sigma)\;,
\\
(\pcc{(\nns.\pop \bapf \StopP)}{\nns.\topeq{0}}{\StopN}) \sfapply
\nns.\NNS(0 \sigma)\;,
\\
(\pcc{(\nns.\pop \bapf \StopP)}{\nns.\topeq{0}}{\StopN}) \sfreply
\nns.\NNS(0 \sigma)\;,
\end{ldispl}%
where $\NNS(\sigma)$ denotes a stack service as described in
Section~\ref{sect-SF} dealing with a stack whose content is represented
by the sequence $\sigma$.
The first term denotes the thread that performs $\Tau$ twice and then
terminates with delivery of the Boolean value $\True$.
The second term denotes the stack service dealing with a stack whose
content is $\sigma$, i.e.\ the content at termination of this thread,
and the third term denotes the reply value $\True$, i.e.\ the reply
value delivered at termination of this thread.

The axioms of \TAbt\ are the axioms of \BTAbtp, the axioms of \SFA, and
the axioms given in Tables~\ref{axioms-use}, \ref{axioms-apply}
and~\ref{axioms-reply}.%
\begin{table}[!t]
\caption{Axioms for the use operator}
\label{axioms-use}
\begin{eqntbl}
\begin{saxcol}
\StopP \sfuse u = \StopP                               & & \axiom{U1} \\
\StopN \sfuse u = \StopN                               & & \axiom{U2} \\
\Stop  \sfuse u = \Stop                                & & \axiom{U3} \\
\DeadEnd \sfuse u = \DeadEnd                           & & \axiom{U4} \\
(\Tau \bapf x) \sfuse u = \Tau \bapf (x \sfuse u)      & & \axiom{U5} \\
(\pcc{x}{f.m}{y}) \sfuse \encap{\set{f}}(u) =
\pcc{(x \sfuse \encap{\set{f}}(u))}{f.m}{(y \sfuse \encap{\set{f}}(u))}
                                                       & & \axiom{U6} \\
(\pcc{x}{f.m}{y}) \sfuse (f.H \sfcomp \encap{\set{f}}(u)) =
\Tau \bapf (x \sfuse (f.\derive{m}H \sfcomp \encap{\set{f}}(u)))
                           & \mif \sreply{m}(H) = \True  & \axiom{U7} \\
(\pcc{x}{f.m}{y}) \sfuse (f.H \sfcomp \encap{\set{f}}(u)) =
\Tau \bapf (y \sfuse (f.\derive{m}H \sfcomp \encap{\set{f}}(u)))
                           & \mif \sreply{m}(H) = \False & \axiom{U8} \\
(\pcc{x}{f.m}{y}) \sfuse (f.H \sfcomp \encap{\set{f}}(u)) = \DeadEnd
                           & \mif \sreply{m}(H) = \Div   & \axiom{U9} \\
\proj{n}{x \sfuse u} = \proj{n}{x} \sfuse u            & & \axiom{U10}
\end{saxcol}
\end{eqntbl}
\end{table}%
\begin{table}[!t]
\caption{Axioms for the apply operator}
\label{axioms-apply}
\begin{eqntbl}
\begin{saxcol}
\StopP \sfapply u = u                                  & & \axiom{A1} \\
\StopN \sfapply u = u                                  & & \axiom{A2} \\
\Stop  \sfapply u = u                                  & & \axiom{A3} \\
\DeadEnd \sfapply u = \emptysf                         & & \axiom{A4} \\
(\Tau \bapf x) \sfapply u = x \sfapply u               & & \axiom{A5} \\
(\pcc{x}{f.m}{y}) \sfapply \encap{\set{f}}(u) = \emptysf
                                                       & & \axiom{A6} \\
(\pcc{x}{f.m}{y}) \sfapply (f.H \sfcomp \encap{\set{f}}(u)) =
x \sfapply (f.\derive{m}H \sfcomp \encap{\set{f}}(u))
                           & \mif \sreply{m}(H) = \True  & \axiom{A7} \\
(\pcc{x}{f.m}{y}) \sfapply (f.H \sfcomp \encap{\set{f}}(u)) =
y \sfapply (f.\derive{m}H \sfcomp \encap{\set{f}}(u))
                           & \mif \sreply{m}(H) = \False & \axiom{A8} \\
(\pcc{x}{f.m}{y}) \sfapply (f.H \sfcomp \encap{\set{f}}(u)) = \emptysf
                           & \mif \sreply{m}(H) = \Div   & \axiom{A9} \\
\AND{n \geq 0}{} \proj{n}{x} \sfapply u = \proj{n}{y} \sfapply v
                 \Implies x \sfapply u = y  \sfapply v & & \axiom{A10}
\end{saxcol}
\end{eqntbl}
\end{table}%
\begin{table}[!t]
\caption{Axioms for the reply operator}
\label{axioms-reply}
\begin{eqntbl}
\begin{saxcol}
\StopP \sfreply u = \True                              & & \axiom{R1} \\
\StopN \sfreply u = \False                             & & \axiom{R2} \\
\Stop  \sfreply u = \Mea                               & & \axiom{R3} \\
\DeadEnd \sfreply u = \Div                             & & \axiom{R4} \\
(\Tau \bapf x) \sfreply u = x \sfreply u               & & \axiom{R5} \\
(\pcc{x}{f.m}{y}) \sfreply \encap{\set{f}}(u) = \Div   & & \axiom{R6} \\
(\pcc{x}{f.m}{y}) \sfreply (f.H \sfcomp \encap{\set{f}}(u)) =
x \sfreply (f.\derive{m}H \sfcomp \encap{\set{f}}(u))
                           & \mif \sreply{m}(H) = \True  & \axiom{R7} \\
(\pcc{x}{f.m}{y}) \sfreply (f.H \sfcomp \encap{\set{f}}(u)) =
y \sfreply (f.\derive{m}H \sfcomp \encap{\set{f}}(u))
                           & \mif \sreply{m}(H) = \False & \axiom{R8} \\
(\pcc{x}{f.m}{y}) \sfreply (f.H \sfcomp \encap{\set{f}}(u)) = \Div
                           & \mif \sreply{m}(H) = \Div   & \axiom{R9} \\
\AND{n \geq 0}{} \proj{n}{x} \sfreply u = \proj{n}{y} \sfreply v
                 \Implies x \sfreply u = y  \sfreply v & & \axiom{R10}
\end{saxcol}
\end{eqntbl}
\end{table}
In these tables, $f$ stands for an arbitrary focus from $\Foci$, $m$
stands for an arbitrary method from $\Meth$, $H$ stands for an arbitrary
term of sort $\Serv$, and $n$ stands for an arbitrary natural number.
The axioms simply formalize the informal explanation given above and in
addition stipulate what is the result of use, apply and reply if
inappropriate foci or methods are involved.
Axioms A10 and R10 allow for reasoning about infinite threads in the
contexts of apply and reply, respectively.
The counterpart of A10 and R10 for use, i.e.
\begin{ldispl}
\AND{n \geq 0}{} \proj{n}{x} \sfuse u = \proj{n}{y} \sfuse v
                \Implies x \sfuse u = y  \sfuse v\;,
\end{ldispl}%
follows from AIP and U10.

We can prove that each closed \TAbt\ term of sort $\Thr$ can be reduced
to a closed \BTAbt\ term of sort $\Thr$.
\begin{lemma}
\label{lemma-elimination}
For all closed \TAbt\ terms $p$ of sort $\Thr$, there exists a closed
\BTAbt\ term $q$ of sort $\Thr$ such that $p = q$ is derivable from the
axioms of \TAbt.
\end{lemma}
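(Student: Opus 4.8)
The plan is to prove the statement by induction on the structure of closed \TAbt\ terms, but to make the induction go through I would strengthen it to a simultaneous claim covering the mutual dependence between threads and service families: \emph{every closed \TAbt\ term of sort $\Thr$ is derivably equal to a closed \BTAbt\ term of sort $\Thr$, and every closed \TAbt\ term of sort $\ServFam$ is derivably equal to a closed \SFA\ term of sort $\ServFam$.} The strengthening is forced by the signature: a thread term $p \sfuse C$ may contain a service-family argument $C$ that itself contains an apply $p' \sfapply C'$, whose thread argument may again contain a use, and so on. Terms of sort $\Serv$ need no treatment, since the only $\Serv$ operators are $\emptyserv$ and $\derive{m}$, so every closed $\Serv$ term is already pure; and terms of sort $\Repl$ never occur inside $\Thr$ or $\ServFam$ terms, because no operator consumes a $\Repl$ argument. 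In the induction, the constants and postconditional composition keep us inside \BTAbt, the singleton, composition and encapsulation operators keep us inside \SFA, and the two remaining cases -- use and apply -- are where the work lies.

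For both of these cases the induction hypothesis first reduces the thread argument to a closed \BTAbt\ term $q$ and the service-family argument to a closed \SFA\ term $C$; it then remains to eliminate a single use, $q \sfuse C$, or a single apply, $q \sfapply C$. I would dispatch each by a separate auxiliary lemma, proved by induction on the structure of the \BTAbt\ term $q$ with $C$ left as a parameter. The base cases are immediate from U1--U4 (respectively A1--A4); the case $q = \pcc{q'}{\Tau}{q''}$ is handled by rewriting with T1 to $\Tau \bapf q'$ and then applying U5 (respectively A5); and the case $q = \pcc{q'}{f.m}{q''}$ is handled by U6--U9 (respectively A6--A9), after which the use or apply has been pushed onto the proper subterms $q'$, $q''$ and the inner induction hypothesis applies. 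Projections are removed along the way using P0--P3, so no $\proj{n}{\ph}$ survives.

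The two ingredients that make the $f.m$-case work are the following. First, since $H$ is a closed $\Serv$ term it denotes a definite service, so $\sreply{m}(H)$ is a determinate element of $\set{\True,\False,\Div}$ and exactly one of U7--U9 (respectively A7--A9) is applicable. Second, to bring $C$ into the shape demanded by those axioms I would prove a normal-form lemma for \SFA: using SFC1--SFC4 and SFE1--SFE4, every closed \SFA\ term is derivably equal to $\emptysf$ or to a composition $\Sfcomp{i=1}{k} f_i.H_i$ with the $f_i$ distinct, and consequently, for each focus $f$, either $C = \encap{\set{f}}(C)$ is derivable (when $f$ does not occur in $C$) or $C = f.H \sfcomp \encap{\set{f}}(C')$ is derivable for some closed $\Serv$ term $H$ and closed \SFA\ term $C'$ (when it does). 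The service families produced by U6--U9 and A6--A9, namely $\encap{\set{f}}(C)$ and $f.\derive{m}H \sfcomp \encap{\set{f}}(C')$, are again closed \SFA\ terms, so the inner induction stays within the scope of these lemmas.

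The main obstacle I expect is bookkeeping rather than depth: one must organise the induction so that the mutual use/apply dependence does not loop. The point that resolves it is that, inside the use- and apply-elimination lemmas, the thread has already been reduced to a pure \BTAbt\ term and every service family that arises is a pure \SFA\ term, so no new use, apply or reply operator is ever introduced; the only cross-sort recursion lives in the outer structural induction, where each appeal to the induction hypothesis is on a proper subterm. Well-foundedness is therefore clear, and finiteness of closed \BTAbt\ terms means the inner inductions terminate without any recourse to AIP.
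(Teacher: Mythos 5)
Your proof is correct, and it is considerably more explicit than what the paper itself offers: the paper's proof of this lemma is a citation of the singleton-service-family case from earlier work together with the assertion that ``the proof of the general case goes essentially the same.'' What you supply is precisely the content hidden in that assertion. The two ingredients that genuinely go beyond the singleton case --- the strengthening to a simultaneous claim about closed terms of sort $\ServFam$ (forced by the mutual nesting of $\sfuse$ and $\sfapply$), and the normal-form lemma for closed \SFA\ terms needed to bring the service-family argument into the shape $\encap{\set{f}}(u)$ or $f.H \sfcomp \encap{\set{f}}(u)$ demanded by U6--U9 and A6--A9 --- are correctly identified and correctly discharged. Your well-foundedness argument is also sound: the inner use- and apply-elimination inductions are on the structure of an already purified \BTAbt\ thread with the service family as a parameter, the rewritten families $f.\derive{m}H \sfcomp \encap{\set{f}}(u)$ remain closed \SFA\ terms, and no instruction sequence processing operator is ever reintroduced, so only the outer structural induction carries the cross-sort recursion and it descends on proper subterms. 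The observation that closed terms need none of the infinitary principles (AIP, U10, A10, R10) is also the right one to make.
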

\begin{proof}
In the special case of singleton service families, this is in fact
proved in~\cite{BM06a} as part of the proof of Theorem~3 from that
paper.
The proof of the general case follows essentially the same lines.
\qed
\end{proof}

In the case of \TAbt, the notion of a guarded recursive specification
is somewhat adapted.
A \emph{guarded recursive specification} over \TAbt\ is a set of
recursion equations $E = \set{x = t_x \where x \in V}$, where $V$ is a
set of variables of sort $\Thr$ and each $t_x$ is a \TAbt\ term of sort
$\Thr$ that can be rewritten, using the axioms of \TAbt, to a term of
the form $\DeadEnd$, $\Stop$, $\StopP$, $\StopN$ or $\pcc{t}{a}{t'}$
with $t$ and $t'$ that contain only variables from $V$.
We are only interested in models of \TAbt\ in which guarded recursive
specifications have unique solutions.

A thread $p$ in a model $\gM$ of \TAbt\ in which guarded recursive
specifications over \TAbt\ have unique solutions is \emph{definable} if
it is the solution in $\gM$ of a guarded recursive specification over
\TAbt.
It is easy to see that a thread is definable if it is representable by
a closed \TAbt\ term of sort $\Thr$.

Henceforth, we assume that a model $\gM$ of \TAbt\ has been given in
which guarded recursive specifications over \TAbt\ have unique
solutions, all threads are definable, all service families are
representable by a closed \TAbt\ term of sort $\ServFam$, and all
replies are representable by a closed \TAbt\ term of sort $\Repl$.

Below, we will formulate a proposition about the use, apply and reply
operators using the \emph{foci} operation $\foci$ defined by the
equations in Table~\ref{eqns-foci}
(for foci $f \in \Foci$ and terms $H$ of sort $\Serv$).%
\begin{table}[!t]
\caption{Defining equations for the foci operation}
\label{eqns-foci}
\begin{eqntbl}
\begin{eqncol}
\foci(\emptysf) = \emptyset                                           \\
\foci(f.H) = \set{f}                                                  \\
\foci(u \sfcomp v) = \foci(u) \union \foci(v)
\end{eqncol}
\end{eqntbl}
\end{table}
The operation $\foci$ gives, for each service family, the set of all
foci that serve as names of named services belonging to the service
family.
We will make use of the following properties of $\foci$ in the proof of
the proposition:
\begin{enumerate}
\item
$\foci(u) \inter \foci(v) = \emptyset$ iff
$f \notin \foci(u)$ or $f \notin \foci(v)$ for all $f \in \Foci$;
\item
$f \not\in \foci(u)$ iff $\encap{\set{f}}(u) = u$.
\end{enumerate}

\begin{proposition}
\label{prop-ispo}
If $\foci(u) \inter \foci(v) = \emptyset$, then:
\begin{enumerate}
\item
$x \sfuse (u \sfcomp v) = (x \sfuse u) \sfuse v$;
\item
$x \sfreply (u \sfcomp v) = (x \sfuse u) \sfreply v$;
\item
$\encap{\foci(u)}(x \sfapply (u \sfcomp v)) = (x \sfuse u) \sfapply v$.
\end{enumerate}
\end{proposition}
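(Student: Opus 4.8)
The plan is to prove the three equations simultaneously, first for the case that $x$ is a closed \BTAbt\ term of sort $\Thr$ and then for arbitrary definable $x$. For the reduction I would note that the projection $\proj{n}{x}$ of any definable thread $x$ is representable by a closed \BTAbt\ term: if $x$ is representable by a closed \TAbt\ term this follows from Lemma~\ref{lemma-elimination} together with axioms P0--P3, and if $x$ is the solution of a guarded recursive specification it follows from Lemma~\ref{lemma-projections-BTAbt}. Throughout, $u$ and $v$ are treated as arbitrary service families with $\foci(u) \inter \foci(v) = \emptyset$, so that the induction hypothesis is available for all such pairs. I would also use two structural facts about service families derivable from the axioms of \SFA, namely $\encap{\foci(u)}(u) = \emptysf$ and that $f \in \foci(u)$ implies $u = f.H \sfcomp \encap{\set{f}}(u)$ for some service $H$.

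For closed \BTAbt\ terms I would argue by structural induction. The cases $\DeadEnd$, $\Stop$, $\StopP$, $\StopN$ are settled directly by the termination axioms U1--U4, R1--R4, A1--A4, the distributivity of encapsulation SFE4, and the facts $\encap{\foci(u)}(u) = \emptysf$, $\encap{\foci(u)}(v) = v$; for instance $\encap{\foci(u)}(\StopP \sfapply (u \sfcomp v)) = \encap{\foci(u)}(u \sfcomp v) = v = (\StopP \sfuse u) \sfapply v$. The case $\Tau \bapf x'$ is immediate from U5, R5, A5 and the induction hypothesis. The heart of the argument is the case $\pcc{x_1}{f.m}{x_2}$, where I would split on the location of $f$: by property~2 of $\foci$ and disjointness there are three possibilities, $f \notin \foci(u) \union \foci(v)$, $f \in \foci(u)$ (whence $f \notin \foci(v)$), and $f \in \foci(v)$ (whence $f \notin \foci(u)$). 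In the first case both encapsulations act as the identity and U6, R6, A6 apply on both sides. In the other two cases I would write the family containing $f$ as $f.H \sfcomp \encap{\set{f}}(\cdot)$, rewrite $u \sfcomp v$ accordingly using SFC2, SFC3, SFE4, and apply U7--U9, R7--R9, A7--A9 according to $\sreply{m}(H)$. When the reply is $\True$ or $\False$ both sides reduce, via the relevant axiom on the left and that axiom followed by U5/R5/A5 on the right, to instances of the hypothesis for $x_1$ or $x_2$ with the updated family $f.\derive{m}H \sfcomp \encap{\set{f}}(\cdot)$; since the update leaves the foci unchanged, the hypothesis applies and $\encap{\foci(u)}$ still matches. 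When the reply is $\Div$ both sides collapse by U9/R9/A9 (and U4, A4).

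To pass from closed \BTAbt\ terms to arbitrary definable threads I would use the projection principles. For the first equation, AIP reduces the goal to $\proj{n}{x} \sfuse (u \sfcomp v) = (\proj{n}{x} \sfuse u) \sfuse v$ for all $n$ (applying U10 twice), an instance of the closed-term result for the closed \BTAbt\ term $\proj{n}{x}$. For the second equation I would invoke R10 with $y := x \sfuse u$ and $v$ unchanged: by U10 its premise reads $\proj{n}{x} \sfreply (u \sfcomp v) = (\proj{n}{x} \sfuse u) \sfreply v$, again an instance of the closed-term result.

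The delicate step, which I expect to be the main obstacle, is extending the third equation to infinite threads. The closed-term result gives $\encap{\foci(u)}(\proj{n}{x} \sfapply (u \sfcomp v)) = \proj{n}{x \sfuse u} \sfapply v$ for every $n$ (using U10 on the right), so the approximations of the two sides agree. However, A10 characterizes the limit of the right-hand apply directly, whereas the left-hand side is an \emph{encapsulated} apply, and no axiom strips $\encap{\foci(u)}$ off an apply; concretely $\encap{\foci(u)}(x \sfapply (u \sfcomp v))$ is not of the form $z \sfapply w'$, so A10 cannot be applied to it verbatim. I would therefore close this step in the models under consideration, where the operators are continuous — in particular the projective limit model named earlier — so that $\encap{\foci(u)}$ commutes with the limit that A10 supplies; combined with the displayed agreement of approximations and A10 applied to the apply $(x \sfuse u) \sfapply v$, this yields the third equation. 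The remaining \SFA-level facts $\encap{\foci(u)}(u) = \emptysf$ and $u = f.H \sfcomp \encap{\set{f}}(u)$ for $f \in \foci(u)$ I would dispatch by bringing service families into the obvious normal form.
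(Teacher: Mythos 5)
Your proposal follows essentially the same route as the paper's proof: reduce to closed \BTAbt\ terms of sort $\Thr$ via Lemmas~\ref{lemma-projections-BTAbt} and~\ref{lemma-elimination} and the projection axioms, establish the three equations by structural induction using the two stated properties of $\foci$, and lift to arbitrary definable threads with AIP, U10, R10 and A10. The only place you go beyond the paper is the lifting of the third equation: the paper simply cites A10 for that step, whereas you correctly note that A10 does not apply verbatim to the encapsulated apply on the left-hand side and close the gap semantically in the projective limit model, which is a legitimate way to discharge a step the paper leaves implicit.
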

\begin{proof}
By the definition of a guarded recursive specification over \TAbt,
Lemmas~\ref{lemma-projections-BTAbt} and~\ref{lemma-elimination}, and
axioms AIP, U10, A10 and R10, it is sufficient to prove for all closed
\BTAbt\ term $p$ of sort $\Thr$:
\begin{enumerate}
\item[]
$p \sfuse (u \sfcomp v) = (p \sfuse u) \sfuse v$;
\item[]
$p \sfreply (u \sfcomp v) = (p \sfuse u) \sfreply v$;
\item[]
$\encap{\foci(u)}(p \sfapply (u \sfcomp v)) = (p \sfuse u) \sfapply v$.
\end{enumerate}
This is straightforward by induction on the structure of $p$, using the
above-mentioned properties of $\foci$.
\qed
\end{proof}

Let $p$ and $C$ be \TAbt\ terms of sort $\Thr$ and $\ServFam$,
respectively.
Then $p$ \emph{converges on} $C$, written $p \cvg C$, is inductively
defined by the following clauses:
\begin{enumerate}
\item
$\Stop \cvg u$;
\item
$\StopP \cvg u$ and $\StopN \cvg u$;
\item
if $x \cvg u$, then $(\Tau \bapf x) \cvg u$;
\item
if $\sreply{m}(H) = \True$ and
$x \cvg (f.\derive{m}H \sfcomp \encap{\set{f}}(u))$, then
$(\pcc{x}{f.m}{y}) \cvg (f.H \sfcomp \encap{\set{f}}(u))$;
\item
if $\sreply{m}(H) = \False$ and
$y \cvg (f.\derive{m}H \sfcomp \encap{\set{f}}(u))$, then
$(\pcc{x}{f.m}{y}) \cvg (f.H \sfcomp \encap{\set{f}}(u))$;
\item
if $\proj{n}{x} \cvg u$, then $x \cvg u$;
\end{enumerate}
and $p$ \emph{diverges on} $C$, written $p \dvg C$, is defined by
$p \dvg C$ iff not $p \cvg C$.
Moreover, $p$ \emph{converges on} $C$ \emph{with Boolean reply}, written
$p \cvgb C$, is inductively defined by the clauses $2,\ldots,6$ for
$\cvg$ with everywhere $\cvg$ replaced by $\cvgb$.

The following two propositions concern the connection between
convergence and the reply operator.
\begin{proposition}
\label{prop-bt}
Let $p$ be a closed \TAbt\ term of sort $\Thr$.
Then:
\begin{enumerate}
\item
if $p \cvg u$, $\StopP$ occurs in $p$ and both $\StopN$ and $\Stop$ do
not occur in $p$, then $p \sfreply u = \True$;
\item
if $p \cvg u$, $\StopN$ occurs in $p$ and both $\StopP$ and $\Stop$ do
not occur in $p$, then $p \sfreply u = \False$;
\item
if $p \cvg u$, $\Stop$ occurs in $p$ and both $\StopP$ and $\StopN$ do
not occur in $p$, then $p \sfreply u = \Mea$.
\end{enumerate}
\end{proposition}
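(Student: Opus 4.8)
The plan is to mirror the proof of Proposition~\ref{prop-ispo}: by Lemma~\ref{lemma-elimination} it is enough to treat the case in which $p$ is a closed \BTAbt\ term, since such a term is provably equal to the given \TAbt\ term and hence has the same reply value on $u$. A closed \BTAbt\ term denotes a thread that terminates or becomes inactive after finitely many actions, so $p$ is finite, and for a finite term every convergence statement can be derived using only clauses $1$--$5$ of the definition of $\cvg$; the projection clause~$6$ is redundant here. I would then prove part~$1$ by induction on the structure of $p$ (equivalently, on the clause-$1$--$5$ derivation of $p \cvg u$), parts~$2$ and~$3$ being obtained by replacing $\StopP$, axiom R1 and $\True$ throughout by $\StopN$, R2, $\False$, respectively by $\Stop$, R3, $\Mea$. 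The guiding idea is that the clauses defining $\cvg$ run exactly parallel to the reply axioms: clause~$1$ to R3, clause~$2$ to R1 and R2, clause~$3$ to R5, clause~$4$ to R7, and clause~$5$ to R8, while convergence precisely excludes the divergent axioms R4, R6 and R9.

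The base and $\Tau$ cases are immediate. If $p$ is $\StopP$ (clause~$2$), axiom R1 gives $p \sfreply u = \True$; the possibilities $p = \Stop$ and $p = \StopN$ fall under parts~$3$ and~$2$ and make the hypothesis of part~$1$ false. If $p = \Tau \bapf x$ (clause~$3$) from $x \cvg u$, then since $\Tau \bapf x = \pcc{x}{\Tau}{x}$ the term $p$ has exactly the termination constants of $x$, so the hypotheses of part~$1$ hold for $x$; the induction hypothesis gives $x \sfreply u = \True$ and R5 transports this to $p$. If $p = \pcc{x}{f.m}{y}$ and $p \cvg u$, then convergence forces $u$ to have the form $f.H \sfcomp \encap{\set{f}}(u')$ with $\sreply{m}(H) \neq \Div$, so that R7 or R8 (not R6 or R9) applies; in the case $\sreply{m}(H) = \True$ (clause~$4$) the premise is $x \cvg (f.\derive{m}H \sfcomp \encap{\set{f}}(u'))$, and R7 equates $p \sfreply u$ with $x \sfreply (f.\derive{m}H \sfcomp \encap{\set{f}}(u'))$, which the induction hypothesis evaluates to $\True$; the subcase $\sreply{m}(H) = \False$ is symmetric via R8.

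The step that needs real care is propagating the occurrence conditions to the sub-thread entered by the induction hypothesis. In clauses~$4$ and~$5$ we descend into only one branch of $\pcc{x}{f.m}{y}$, say $x$. Since $\StopN$ and $\Stop$ do not occur in $p$, they do not occur in $x$ either. That $\StopP$ \emph{does} occur in $x$ is where convergence is used essentially: every clause-$1$--$5$ derivation bottoms out at an instance of clause~$1$ or clause~$2$, so a converging thread must reach one of $\Stop$, $\StopP$, $\StopN$ along its (unique) path, and as this constant is a literal subterm of $x$ it can only be $\StopP$. Hence the hypotheses of part~$1$ hold for $x$ and the induction hypothesis is applicable. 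I expect this bookkeeping---keeping the global syntactic occurrence conditions aligned with the single path actually followed under convergence---to be the main obstacle; the reduction to finite closed \BTAbt\ terms is what makes it go through, since it both eliminates the projection clause~$6$ (which would otherwise replace subterms by $\DeadEnd$ and break the "reaches a termination constant that occurs in the term'' argument) and guarantees that the reached termination constant is a genuine subterm. Once this observation is in place, each of the remaining cases is a one-line appeal to the matching reply axiom.
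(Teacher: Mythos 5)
Your proposal matches the paper's proof exactly: the paper reduces to closed \BTAbt\ terms via Lemma~\ref{lemma-elimination} and then states that the result follows ``by induction on the structure of $p$''. Your additional bookkeeping---showing that convergence forces the derivation to bottom out at a termination constant that is a genuine subterm, so the occurrence hypotheses propagate to the branch entered---is precisely the detail the paper leaves implicit, and it is correct.
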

\begin{proof}
By Lemma~\ref{lemma-elimination}, it is sufficient to prove it for all
closed \BTAbt\ terms $p$ of sort $\Thr$.
This is straightforward by induction on the structure of $p$.
\qed
\end{proof}
\begin{proposition}
\label{prop-cvg-sfreply}
We have that
$x \cvg u$ iff
$x \sfreply u = \True$ or $x \sfreply u = \False$ or
$x \sfreply u = \Mea$.
\end{proposition}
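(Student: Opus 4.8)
The plan is to reduce the equivalence to the case of closed \BTAbt\ terms, which are finite, and then to lift it to an arbitrary definable thread through its projections. Since $x$ is definable it is either a closed \TAbt\ term---hence, by Lemma~\ref{lemma-elimination}, representable by a closed \BTAbt\ term---or the solution of a guarded recursive specification, in which case, by Lemma~\ref{lemma-projections-BTAbt}, each projection $\proj{n}{x}$ is representable by a closed \BTAbt\ term. In both cases all projections $\proj{n}{x}$ are representable by closed \BTAbt\ terms, and clause~6 of the definition of $\cvg$ together with axiom R10 will allow the finite case to be transferred to $x$. It is convenient to read the statement as $x \cvg u$ iff $x \sfreply u \neq \Div$.

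First I would settle the finite case: for each closed \BTAbt\ term $p$ and each $u$, $p \cvg u$ iff $p \sfreply u \neq \Div$. This goes by induction on the structure of $p$, with the claim universally quantified over $u$ because the service family changes in the postconditional case. The constants are immediate: $\Stop$, $\StopP$, $\StopN$ converge and give $\Mea$, $\True$, $\False$ by R3, R1, R2, while $\DeadEnd$ diverges and gives $\Div$ by R4. For $p = \pcc{p_1}{a}{p_2}$ with $a = \Tau$ one uses T1, R5 and clause~3; with $a = f.m$ one splits on whether $f$ names a service in $u$ and on the reply $\sreply{m}(H)$ of that service $H$. When $\sreply{m}(H) = \True$ or $\sreply{m}(H) = \False$, axioms R7 and R8 and clauses~4 and~5 reduce the claim to $p_1$, respectively $p_2$, against the derived service family, so the induction hypothesis applies; when $f$ names no service in $u$ (so $\encap{\set{f}}(u) = u$) or $\sreply{m}(H) = \Div$, axioms R6 and R9 give reply $\Div$ and no clause applies, so $p \dvg u$. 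For finite $p$ clause~6 is never needed.

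It remains to lift both implications to a definable thread $x$. For $x \cvg u \Rightarrow x \sfreply u \neq \Div$ I would induct on the derivation of $x \cvg u$: clauses~1--5 are discharged directly by R1, R2, R3, R5, R7, R8 exactly as above, and the only real work is clause~6, where the premise $\proj{n}{x} \cvg u$ yields $\proj{n}{x} \sfreply u \neq \Div$ by the induction hypothesis and one then needs $x \sfreply u = \proj{n}{x} \sfreply u$. For the converse, $x \sfreply u \neq \Div$ should first produce some $n$ with $\proj{n}{x} \sfreply u \neq \Div$; the finite case then gives $\proj{n}{x} \cvg u$, and clause~6 gives $x \cvg u$.

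The main obstacle is thus the interaction between projection and the reply operator, which I would isolate as two lemmas. The first---if $\proj{n}{x} \sfreply u \neq \Div$ then $x \sfreply u = \proj{n}{x} \sfreply u$---is a stability property proved by induction on $n$: using P0--P3 to expose the head of $\proj{n+1}{x}$ and R5, R7, R8 to strip one action, a non-divergent reply can only arise once termination is reached strictly before the cut-off, so the cut-off is immaterial. The second---if $\proj{n}{x} \sfreply u = \Div$ for all $n$ then $x \sfreply u = \Div$---follows from R10 by comparison with $\DeadEnd$, whose projections all reply $\Div$ by P0, P2 and R4. These supply precisely the missing step in clause~6 and the witnessing $n$ in the converse; the finite structural induction, which closely parallels the proof of Proposition~\ref{prop-bt}, is otherwise routine.
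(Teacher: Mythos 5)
Your proposal is correct and follows essentially the same route as the paper: the paper likewise reduces the claim, via the definition of definable thread, clause~6 of the definition of $\cvg$, Lemmas~\ref{lemma-projections-BTAbt} and~\ref{lemma-elimination}, and axiom R10, to the case of closed \BTAbt\ terms, and then disposes of that case by structural induction. Your two auxiliary lemmas on the interaction of projection with $\sfreply$ merely make explicit the step the paper leaves implicit in its appeal to R10.
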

\begin{proof}
By the definition of a guarded recursive specification over \TAbt, the
last clause of the inductive definition of $\cvg$,
Lemmas~\ref{lemma-projections-BTAbt} and~\ref{lemma-elimination}, and
axiom R10, it is sufficient to prove $p \cvg u$ iff
$p \sfreply u = \True$ or $p \sfreply u = \False$ or
$p \sfreply u = \Mea$ for all closed \BTAbt\ terms $p$ of sort $\Thr$.
This is straightforward by induction on the structure of $p$.
\qed
\end{proof}

Because the use operator, apply operator and reply operator are
primarily intended to be used to describe and analyse instruction
sequence processing, they are called \emph{instruction sequence
processing operators}.

We introduce the apply operator and reply operator in the setting of
\PGLBbt\ by defining:
\begin{ldispl}
P \sfuse   u = \extr{P} \sfuse   u\;, \quad
P \sfapply u = \extr{P} \sfapply u\;, \quad
P \sfreply u = \extr{P} \sfreply u
\end{ldispl}%
for all \PGLBbt\ instruction sequences $P$.
Similarly, we introduce convergence in the setting of \PGLBbt\ by
defining:
\begin{ldispl}
P \cvg u = \extr{P} \cvg u
\end{ldispl}%
for all \PGLBbt\ instruction sequences $P$.

\section{Relevant Use Conventions}
\label{sect-RUC}

In the setting of service families, sets of foci play the role of
interfaces.
The set of all foci that serve as names of named services in a service
family is regarded as the interface of that service family.
There are cases in which processing does not terminate or, even worse
(because it is statically detectable), interfaces of services families
do not match.
In the case of non-termination, there is nothing that we intend to
denote by a term of the form $p \sfapply C$ or $p \sfreply C$.
In the case of non-matching services families, there is nothing that we
intend to denote by a term of the form $C \sfcomp D$.
Moreover, in the case of termination without a Boolean reply, there is
nothing that we intend to denote by a term of the form $p \sfreply C$.

We propose to comply with the following \emph{relevant use conventions}:
\begin{itemize}
\item
$p \sfapply C$ is only used if it is known that $p \cvg C$;
\item
$p \sfreply C$ is only used if it is known that $p \cvgb C$;
\item
$C \sfcomp D$ is only used if it is known that
$\foci(C) \inter \foci(D) = \emptyset$.
\end{itemize}

The condition found in the first convention is justified by the fact
that $x \sfapply u = \emptysf$ if $x \dvg u$.
We do not have $x \sfapply u = \emptysf$ only if $x \dvg u$.
For instance, $\StopP \sfapply \emptysf = \emptysf$ whereas
$\StopP \cvg \emptysf$.
Similar remarks apply to the condition found in the second convention.

The idea of relevant use conventions is taken from~\cite{BM09g}, where
it plays a central role in an account of the way in which mathematicians
usually deal with division by zero in mathematical texts.
According to~\cite{BM09g}, mathematicians deal with this issue by
complying with the convention that $p \mdiv q$ is only used if it is
known that $q \neq 0$.
This approach is justified by the fact that there is nothing that
mathematicians intend to denote by $p \mdiv q$ if $q = 0$.
It yields simpler mathematical texts than the popular approach in
theoretical computer science, which is characterized by complete
formality in definitions, statements and proofs.
In this computer science approach, division is considered a partial
function and some logic of partial functions is used.
In~\cite{BT07a}, deviating from this, division is considered a total
function whose value is zero in all cases of division by zero.
It may be imagined that this notion of division is the one with which
mathematicians make themselves familiar before they start to read and
write mathematical texts professionally.

We think that the idea to comply with conventions that exclude the use
of terms that are not really intended to denote anything is not only of
importance in mathematics, but also in theoretical computer science.
For example, the consequence of adapting Proposition~\ref{prop-ispo} to
comply with the relevant use conventions described above, by adding
appropriate conditions to the three properties, is that we do not have
to consider in the proof of the proposition the equality of terms by
which we do not intend to denote anything.

In the sequel, we will comply with the relevant use conventions
described above.

We can define the use operators introduced earlier
in~\cite{BM07g,BM06c},%
\footnote
{The use operators introduced in~\cite{BP02a} are counterparts of the
 abstracting use operator introduced later in
 Section~\ref{sect-abstr-use}.}
the apply operators introduced earlier in~\cite{BP02a}, and similar
counterparts of the reply operator as follows:
\begin{ldispl}
\use{x}{f}{H} = x \sfuse f.H\;, \\
\apply{x}{f}{H} = x \sfapply f.H\;, \\
\reply{x}{f}{H} = x \sfreply f.H\;.
\end{ldispl}%
These definitions give rise to the derived conventions that
$\apply{p}{f}{H}$ is only used if it is known that $p \cvg f.H$ and
$\reply{p}{f}{H}$ is only used if it is known that $p \cvgb f.H$.

\section{Example}
\label{sect-example}

In this section, we use an implementation of a bounded counter by means
of a number of Boolean registers as an example to show that it is easy
to bring a number of identical services together in a service family by
means of the service family composition operator.
Accomplishing something resemblant with the non-interfering service
combination operation from~\cite{BP02a} is quite involved.
We also show in this example that there are cases in which the delivery
of a Boolean value at termination of the execution of an instruction
sequence is quite natural.

First, we describe services that make up Boolean registers.
The Boolean register services are able to process the following methods:
\begin{itemize}
\item
the \emph{set to true method} $\setbr{\True}$;
\item
the \emph{set to false method} $\setbr{\False}$;
\item
the \emph{get method} $\getbr$.
\end{itemize}
It is assumed that $\setbr{\True},\setbr{\False},\getbr \in \Meth$.

The methods that Boolean register services are able to process can be
explained as follows:
\begin{itemize}
\item
$\setbr{\True}$\,:
the contents of the Boolean register becomes $\True$ and the reply is
$\True$;
\item
$\setbr{\False}$\,:
the contents of the Boolean register becomes $\False$ and the reply is
$\False$;
\item
$\getbr$\,:
nothing changes and the reply is the contents of the Boolean register.
\end{itemize}

For the set $\Services$ of services, we take the set
$\set{\BR{\True},\BR{\False},\BR{\Div}}$ of \emph{Boolean register
services}.
For each $m \in \Meth$, we take the functions $\effect{m}$ and
$\sreply{m}$ such that ($b \in \set{\True,\False}$):
\begin{ldispl}
\begin{geqns}
\effect{\setbr{\True}}(\BR{b})  = \BR{\True}\;,
\\
\effect{\setbr{\False}}(\BR{b}) = \BR{\False}\;,
\\
\effect{\getbr}(\BR{b}) = \BR{b}\;,
\eqnsep
\sreply{\setbr{\True}}(\BR{b})  = \True\;,
\\
\sreply{\setbr{\False}}(\BR{b}) = \False\;,
\\
\sreply{\getbr}(\BR{b}) = b\;,
\end{geqns}
\qquad\qquad
\begin{gceqns}
\effect{m}(\BR{b}) = \BR{\Div}
 & \mif m \not\in \set{\setbr{\True},\setbr{\False},\getbr}\;,
\\
\effect{m}(\BR{\Div}) = \BR{\Div}\;,
\\
{}
\eqnsep
\sreply{m}(\BR{b}) = \Div
 & \mif m \not\in \set{\setbr{\True},\setbr{\False},\getbr}\;,
\\
\sreply{m}(\BR{\Div}) = \Div\;.
\end{gceqns}
\end{ldispl}%
Moreover, we take the names used above to denote the services in
$\Services$ for constants of sort $\Serv$.

We continue with the implementation of a bounded counter by means of a
number of Boolean registers.
We consider a counter that can contain a natural number in the interval
$[0,2^n - 1]$ for some $n > 0$.
To implement the counter, we represent its content in binary using a
collection of $n$ Boolean registers named $\br{0},\ldots,\br{n{-}1}$.
We take $\True$ for $0$ and $\False$ for $1$, and we take the bit
represented by the content of the Boolean register named $\br{i}$ for a
less significant bit than the bit represented by the content of the
Boolean register named $\br{j}$ if $i < j$.

The following instruction sequences implement set to zero, increment by
one, decrement by one, and test on zero, respectively:
\begin{ldispl}
\begin{aeqns}
\nm{SETZERO} & = &
\Conc{i=0}{n-1} (\br{i}.\setbr{\True}) \conc \haltP\;,
\eqnsep
\nm{SUCC}    & = &
\Conc{i=0}{n-1}
 (\ntst{\br{i}.\getbr} \conc \fjmp{3} \conc \br{i}.\setbr{\False} \conc
  \haltP \conc \br{i}.\setbr{\True}) \conc \haltN\;,
\eqnsep
\nm{PRED}    & = &
\Conc{i=0}{n-1}
 (\ptst{\br{i}.\getbr} \conc \fjmp{3} \conc \br{i}.\setbr{\True} \conc
  \haltP \conc \br{i}.\setbr{\False}) \conc \haltN\;,
\eqnsep
\nm{ISZERO}  & = &
\Conc{i=0}{n-1}
 (\ntst{\br{i}.\getbr} \conc \haltN) \conc \haltP\;.
\end{aeqns}
\end{ldispl}%
Concerning the Boolean values delivered at termination of executions of
these instruction sequences, we have that:
\begin{ldispl}
\begin{aeqns}
\nm{SETZERO} \sfreply \bigl(\Sfcomp{i=0}{n-1} \br{i}.\BR{s_i}\bigr)
 & = & \True\;,
\beqnsep
\nm{SUCC} \sfreply \bigl(\Sfcomp{i=0}{n-1} \br{i}.\BR{s_i}\bigr)
 & = &
\Biggl\{
\begin{array}[c]{@{}l@{\;\;}l@{}}
\True  & \mif \OR{i=0}{n-1} s_i  = \True
\\
\False & \mif \AND{i=0}{n-1} s_i = \False\;,
\end{array}
\beqnsep
\nm{PRED} \sfreply \bigl(\Sfcomp{i=0}{n-1} \br{i}.\BR{s_i}\bigr)
 & = &
\Biggl\{
\begin{array}[c]{@{}l@{\;\;}l@{}}
\True  & \mif \OR{i=0}{n-1} s_i = \False
\\
\False & \mif \AND{i=0}{n-1} s_i  = \True\;,
\end{array}
\beqnsep
\nm{ISZERO} \sfreply \bigl(\Sfcomp{i=0}{n-1} \br{i}.\BR{s_i}\bigr)
 & = &
\Biggl\{
\begin{array}[c]{@{}l@{\;\;}l@{}}
\True  & \mif \AND{i=0}{n-1} s_i = \True
\\
\False & \mif \OR{i=0}{n-1} s_i  = \False\;.
\end{array}
\end{aeqns}
\end{ldispl}%
It is obvious that $\True$ is delivered at termination of an execution
of $\nm{SETZERO}$ and that $\True$ or $\False$ is delivered at
termination of an execution of $\nm{ISZERO}$ depending on whether the
content of the counter is zero or not.
Increment by one and decrement by one are both modulo $2^n$.
For that reason, $\True$ or $\False$ is delivered at termination of an
execution of $\nm{SUCC}$ or $\nm{PRED}$ depending on whether the
content of the counter is really incremented or decremented by one or
not.

\section{Abstracting Use}
\label{sect-abstr-use}

With the use operator introduced in Section~\ref{sect-TSI}, the action
$\Tau$ is left as a trace of a basic action that has led to the
processing of a method, like with the use operators on services
introduced in e.g.~\cite{BM07g,BM06c}.
However, with the use operators on services introduced in~\cite{BP02a},
nothing is left as a trace of a basic action that has led to the
processing of a method.
Thus, these use operators abstract fully from internal activity.
In other words, they are abstracting use operators.
For completeness, we introduce an abstracting variant of the use
operator introduced in Section~\ref{sect-TSI}.

That is, we introduce the following additional operator:
\begin{itemize}
\item
the binary \emph{abstracting use} operator
$\funct{\ph \sfause \ph}{\Thr \x \ServFam}{\Thr}$.
\end{itemize}
We use infix notation for the abstracting use operator.

The axioms for the abstracting use operator are given in
Table~\ref{axioms-ause}.%
\begin{table}[!t]
\caption{Axioms for the abstracting use operator}
\label{axioms-ause}
\begin{eqntbl}
\begin{saxcol}
\StopP \sfause u = \StopP                             & & \axiom{AU1} \\
\StopN \sfause u = \StopN                             & & \axiom{AU2} \\
\Stop  \sfause u = \Stop                              & & \axiom{AU3} \\
\DeadEnd \sfause u = \DeadEnd                         & & \axiom{AU4} \\
(\Tau \bapf x) \sfause u = \Tau \bapf (x \sfause u)   & & \axiom{AU5} \\
(\pcc{x}{f.m}{y}) \sfause \encap{\set{f}}(u) =
\pcc{(x \sfause \encap{\set{f}}(u))}
 {f.m}{(y \sfause \encap{\set{f}}(u))}                & & \axiom{AU6} \\
(\pcc{x}{f.m}{y}) \sfause (f.H \sfcomp \encap{\set{f}}(u)) =
x \sfause (f.\derive{m}H \sfcomp \encap{\set{f}}(u))
                          & \mif \sreply{m}(H) = \True  & \axiom{AU7} \\
(\pcc{x}{f.m}{y}) \sfause (f.H \sfcomp \encap{\set{f}}(u)) =
y \sfause (f.\derive{m}H \sfcomp \encap{\set{f}}(u))
                          & \mif \sreply{m}(H) = \False & \axiom{AU8} \\
(\pcc{x}{f.m}{y}) \sfause (f.H \sfcomp \encap{\set{f}}(u)) = \DeadEnd
                          & \mif \sreply{m}(H) = \Div   & \axiom{AU9} \\
\AND{n \geq 0}{} \proj{n}{x} \sfause u = \proj{n}{y} \sfause v
                   \Implies x \sfause u = y \sfause v & & \axiom{AU10}
\end{saxcol}
\end{eqntbl}
\end{table}
Owing to the possible concealment of actions by abstracting use,
$\proj{n}{x \sfause u} = \proj{n}{x} \sfause u$ is not a plausible
axiom.
However, axiom AU10 allows for reasoning about infinite threads in the
context of abstracting use.

\section{Functional Units}
\label{sect-func-unit}

In this section, we introduce the concept of a functional unit and
related concepts.

It is assumed that a non-empty finite or countably infinite set $\FUS$
of \emph{states} has been given.
As before, it is assumed that a non-empty finite set $\MN$ of methods
has been given.
However, in the setting of functional units, methods serve as names of
operations on a state space.
For that reason, the members of $\MN$ will henceforth be called
\emph{method names}.

A \emph{method operation} on $\FUS$ is a total function from $\FUS$ to
$\Bool \x \FUS$.
A \emph{partial method operation} on $\FUS$ is a partial function from
$\FUS$ to $\Bool \x \FUS$.
We write $\MO(\FUS)$ for the set of all method operations on $\FUS$.
We write $M^r$ and $M^e$, where $M \in \MO(\FUS)$, for the unique
functions $\funct{R}{\FUS}{\Bool}$ and $\funct{E}{\FUS}{\FUS}$,
respectively, such that $M(s) = \tup{R(s),E(s)}$ for all $s \in \FUS$.

A \emph{functional unit} for $\FUS$ is a finite subset $\cH$ of
$\MN \x \MO(\FUS)$ such that $\tup{m,M} \in \cH$ and
$\tup{m,M'} \in \cH$ implies $M = M'$.
We write $\FU(\FUS)$ for the set of all functional units for $\FUS$.
We write $\IF(\cH)$, where $\cH \in \FU(\FUS)$, for the set
$\set{m \in \MN \where \Exists{M \in \MO(\FUS)}{\tup{m,M} \in \cH}}$.
We write $m_\cH$, where $\cH \in \FU(\FUS)$ and $m \in \IF(\cH)$, for
the unique $M \in \MO(\FUS)$ such that $\tup{m,M} \in \cH$.

We look upon the set $\IF(\cH)$, where $\cH \in \FU(\FUS)$, as the
interface of $\cH$.
It looks to be convenient to have a notation for the restriction of a
functional unit to a subset of its interface.
We write $\tup{I,\cH}$, where $\cH \in \FU(\FUS)$ and
$I \subseteq \IF(\cH)$, for the functional unit
$\set{\tup{m,M} \in \cH \where m \in I}$.

Let $\cH \in \FU(\FUS)$.
Then an \emph{extension} of $\cH$ is an $\cH' \in \FU(\FUS)$ such that
$\cH \subseteq \cH'$.

The following is a simple illustration of the use of functional units.
An unbounded counter can be modelled by a functional unit for $\Nat$
with method operations for set to zero, increment by one, decrement by
one, and test on zero.

According to the definition of a functional unit,
$\emptyset \in \FU(\FUS)$.
By that we have a unique functional unit with an empty interface, which
is not very interesting in itself.
However, when considering services that behave according to functional
units, $\emptyset$ is exactly the functional unit according to which the
empty service $\emptyserv$ (the service that is not able to process any
method) behaves.

The method names attached to method operations in functional units
should not be confused with the names used to denote specific method
operations in describing functional units.
Therefore, we will comply with the convention to use names beginning
with a lower-case letter in the former case and names beginning with an
upper-case letter in the latter case.

We will use \PGLBsbt\ instruction sequences to derive partial method
operations from the method operations of a functional unit.
We write $\Lf{I}$, where $I \subseteq \MN$, for the set of all \PGLBsbt\
instruction sequences, taking the set $\set{f.m \where m \in I}$ as the
set $\BInstr$ of basic instructions.

The derivation of partial method operations from the method operations
of a functional unit involves services whose processing of methods
amounts to replies and service changes according to corresponding method
operations of the functional unit concerned.
These services can be viewed as the behaviours of a machine, on which
the processing in question takes place, in its different states.
We take the set $\FU(\FUS) \x \FUS$ as the set $\Services$ of services.
We write $\cH(s)$, where $\cH \in \FU(\FUS)$ and $s \in \FUS$, for the
service $\tup{\cH,s}$.
The functions $\effect{m}$ and $\sreply{m}$ are defined as follows:
\pagebreak[2]
\begin{ldispl}
\begin{aeqns}
\effect{m}(\cH(s)) & = &
\Biggl\{
\begin{array}[c]{@{}l@{\;\;}l@{}}
\cH(m_\cH^e(s))            & \mif m \in \IF(\cH) \\
{\emptyset}(s')            & \mif m \notin \IF(\cH)\;,
\end{array}
\beqnsep
\sreply{m}(\cH(s))  & = &
\Biggl\{
\begin{array}[c]{@{}l@{\;\;}l@{}}
m_\cH^r(s) \phantom{\cH()} & \mif m \in \IF(\cH) \\
\Div                       & \mif m \notin \IF(\cH)\;,
\end{array}
\end{aeqns}
\end{ldispl}%
where $s'$ is a fixed but arbitrary state in $S$.
In order to be able to make use of the axioms for the apply operator
and the reply operator from Section~\ref{sect-TSI} hereafter, we want
to use these operators for the services being considered here when
making the idea of deriving a partial method operation by means of an
instruction sequence precise.
Therefore, we assume that there is a constant of sort $\Serv$ for each
$\cH(s) \in \Services$.%
\footnote
{This may lead to an uncountable number of constants, which is
 unproblematic and quite normal in model theory.
}
In this connection, we use the following notational convention: for
each $\cH(s) \in \Services$, we write $\cterm{\cH(s)}$ for the constant
of sort $\Serv$ whose interpretation is $\cH(s)$.
Note that the service ${\emptyset}(\sigma')$ is the interpretation of
the empty service constant $\emptyserv$.

Let $\cH \in \FU(\FUS)$, and let $I \subseteq \IF(\cH)$.
Then an instruction sequence $x \in \Lf{I}$ produces a partial method
operation $\moextr{x}{\cH}$ as follows:
\begin{ldispl}
\begin{aceqns}
\moextr{x}{\cH}(s) & = &
\tup{\moextr{x}{\cH}^r(s),\moextr{x}{\cH}^e(s)}
 & \mif \moextr{x}{\cH}^r(s) = \True \Or
        \moextr{x}{\cH}^r(s) = \False\;, \\
\moextr{x}{\cH}(s) & \mathrm{is} & \mathrm{undefined}
 & \mif \moextr{x}{\cH}^r(s) = \Div\;,
\end{aceqns}
\end{ldispl}%
where
\begin{ldispl}
\begin{aeqns}
\moextr{x}{\cH}^r(s) & = & x \sfreply f.\cterm{\cH(s)}\;, \\
\moextr{x}{\cH}^e(s) & = &
\mathrm{the\;unique}\; s' \in S\; \mathrm{such\;that}\;
 x \sfapply f.\cterm{\cH(s)} = f.\cterm{\cH(s')}\;.
\end{aeqns}
\end{ldispl}%
If $\moextr{x}{\cH}$ is total, then it is called a
\emph{derived method operation} of $\cH$.

The binary relation $\below$ on $\FU(\FUS)$ is defined by
$\cH \below \cH'$ iff for all $\tup{m,M} \in \cH$, $M$ is a derived
method operation of $\cH'$.
The binary relation $\equiv$ on $\FU(\FUS)$ is defined by
$\cH \equiv \cH'$ iff $\cH \below \cH'$ and $\cH' \below \cH$.

\begin{theorem}
\label{theorem-below-equiv}
\mbox{}
\begin{enumerate}
\item
$\below$ is transitive;
\item
$\equiv$ is an equivalence relation.
\end{enumerate}
\end{theorem}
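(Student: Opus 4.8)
The plan is to prove the transitivity of $\below$ (part~1), which is the substantial part, and then assemble part~2 from it together with the easy reflexivity and symmetry of $\equiv$. Symmetry of $\equiv$ is immediate from its definition, and transitivity of $\equiv$ follows from part~1: if $\cH \equiv \cH'$ and $\cH' \equiv \cH''$, then $\cH \below \cH'$, $\cH' \below \cH''$, $\cH'' \below \cH'$ and $\cH' \below \cH$, so transitivity of $\below$ gives $\cH \below \cH''$ and $\cH'' \below \cH$, i.e.\ $\cH \equiv \cH''$. For reflexivity of $\equiv$ it suffices to show $\cH \below \cH$, i.e.\ that each $m_\cH$ with $m \in \IF(\cH)$ is a derived method operation of $\cH$. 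For this I would exhibit the instruction sequence $\ptst{f.m} \conc \haltP \conc \haltN \in \Lf{\set{m}}$: its execution processes $m$ once, thereby changing the state to $m_\cH^e(s)$, and then delivers $\True$ if the reply $m_\cH^r(s)$ is $\True$ and $\False$ if it is $\False$; hence the method operation it produces is total and equals $m_\cH$.

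For the transitivity of $\below$, suppose $\cH \below \cH'$ and $\cH' \below \cH''$, and let $\tup{m,M} \in \cH$. From $\cH \below \cH'$ I obtain $x \in \Lf{I}$ with $I \subseteq \IF(\cH')$ and $\moextr{x}{\cH'} = M$ total. From $\cH' \below \cH''$, each $m' \in I$ comes with $y_{m'} \in \Lf{J_{m'}}$, $J_{m'} \subseteq \IF(\cH'')$, such that $\moextr{y_{m'}}{\cH''} = m'_{\cH'}$ is total. The plan is to build a single $z \in \Lf{K}$, with $K = \bigcup_{m' \in I} J_{m'} \subseteq \IF(\cH'')$, by \emph{inlining}: every occurrence in $x$ of an instruction of the form $f.m'$, $\ptst{f.m'}$ or $\ntst{f.m'}$ (with $m' \in I$) is replaced by a fresh copy of $y_{m'}$ in which the two Boolean termination instructions are turned into jumps to the appropriate continuation. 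Because each $y_{m'}$ is a \PGLBsbt\ instruction sequence computing a \emph{total} method operation, it always terminates via $\haltP$ or $\haltN$ (the instruction $\halt$ being absent, and deadlock and divergence being excluded by totality); so each inlined block has exactly two exits, which I redirect to the same successor for a plain $f.m'$, to the true- and false-successors for $\ptst{f.m'}$, and in the opposite way for $\ntst{f.m'}$. The termination instructions of $x$ itself are kept, producing the final reply of $M$, and all forward and backward jump offsets in the embedded copies of $x$ and of the $y_{m'}$ are recomputed to account for the insertions.

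To verify $\moextr{z}{\cH''} = M$, I would show that for every state $s$ the execution of $z$ on $\cH''(s)$ simulates the execution of $x$ on $\cH'(s)$ step for step. The key point is that the services of $\cH'$ and of $\cH''$ share the same state space $\FUS$, and that by construction each inlined block $y_{m'}$, started on $\cH''(s)$, delivers the reply $(m'_{\cH'})^r(s)$ and leaves the machine in the service $\cH''((m'_{\cH'})^e(s))$---exactly the reply and state change produced by a single invocation of $m'$ on $\cH'(s)$. Consequently the two executions visit corresponding states and agree on their final Boolean reply, so $\moextr{z}{\cH''}$ is total and equals $\moextr{x}{\cH'} = M$. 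Since $\tup{m,M} \in \cH$ was arbitrary, $\cH \below \cH''$.

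I expect the main obstacle to lie in the bookkeeping of the inlining step: making the replacement of method-invocation instructions by $y_{m'}$-blocks precise, recomputing all jump offsets correctly, and turning the informal step-for-step simulation into a rigorous argument, most naturally by appeal to the thread-extraction semantics of Section~\ref{sect-BTAbt} together with the apply- and reply-based definition of $\moextr{\ph}{\ph}$. Conceptually the argument is transparent once one observes that the totality of each $y_{m'}$ guarantees precisely the two Boolean exits needed to splice it in, so that no spurious deadlock or nontermination can be introduced by the transformation.
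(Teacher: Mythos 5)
Your proposal is correct and follows essentially the same route as the paper: transitivity of $\below$ is established by inlining, for each method name of $\cH'$ used in an instruction sequence over $\IF(\cH')$, the instruction sequence over $\IF(\cH'')$ producing the corresponding method operation, with jump offsets adjusted and the two Boolean exits of each inlined block (guaranteed by totality) redirected to the appropriate continuations. The only presentational differences are that the paper first normalizes every instruction sequence to the form $u_1 \conc \ldots \conc u_k \conc \haltP \conc \haltN$ with only positive test and jump instructions in the body, so that the two exits of an inlined block line up exactly with the two successors of the replaced positive test instruction, whereas you treat plain, positive and negative test instructions separately, and that you make the reflexivity of $\below$ explicit via the witness $\ptst{f.m} \conc \haltP \conc \haltN$, a point the paper attributes to the definition of $\below$ without exhibiting a witness.
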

\begin{proof}
Property~1:
We have to prove that $\cH \below \cH'$ and $\cH' \below \cH''$ implies
$\cH \below \cH''$.
It is sufficient to show that we can obtain instruction sequences in
$\Lf{\IF(\cH'')}$ that produce the method operations of $\cH$ from the
instruction sequences in $\Lf{\IF(\cH')}$ that produce the method
operations of $\cH$ and the instruction sequences in $\Lf{\IF(\cH'')}$
that produce the method operations of $\cH'$.
Without loss of generality, we may assume that all instruction sequences
are of the form $u_1 \conc \ldots \conc u_k \conc \haltP \conc \haltN$,
where, for each $i \in [1,k]$, $u_i$ is a positive test instruction, a
forward jump instruction or a backward jump instruction.
Let $m \in \IF(\cH)$,
let $M$ be such that $\tup{m,M} \in \cH$, and
let $x_m \in \Lf{\IF(\cH')}$ be such that $M = \moextr{x_m}{\cH'}$.
Suppose that $\IF(\cH') = \set{m'_1,\ldots,m'_n}$.
For each $i \in [1,n]$,
let $M'_i$ be such that $\tup{m'_i,M'_i} \in \cH'$ and
let
$x_{m'_i} = u_1^i \conc \ldots \conc u_{k_i}^i \conc \haltP \conc \haltN
  \in \Lf{\IF(\cH'')}$ be such that $M'_i = \moextr{x_{m'_i}}{\cH''}$.
Consider the $x'_m \in \Lf{\IF(\cH'')}$ obtained from $x_m$ as follows:
for each $i \in [1,n]$,
(i)~first increase each jump over the leftmost occurrence of
$\ptst{f.m'_i}$ in $x_m$ with $k_i + 1$, and next replace this
instruction by $u_1^i \conc \ldots \conc u_{k_i}^i$;
(ii)~repeat the previous step as long as their are occurrences of
$\ptst{f.m'_i}$.
It is easy to see that $M = \moextr{x'_m}{\cH''}$.

Property~2:
It follows immediately from the definition of $\equiv$ that $\equiv$ is
symmetric and from the definition of $\below$ that $\below$ is
reflexive.
From these properties, Property~1 and the definition of $\equiv$, it
follows immediately that $\equiv$ is symmetric, reflexive and
transitive.
\qed
\end{proof}

The members of the quotient set $\FU(\FUS) \mdiv {\equiv}$ are called
\emph{functional unit degrees}.
Let $\cH \in \FU(\FUS)$ and $\cD \in \FU(\FUS) \mdiv {\equiv}$.
Then $\cD$ is a \emph{functional unit degree below}~$\cH$ if there
exists an $\cH' \in \cD$ such that $\cH' \below \cH$.

Two functional units $\cH$ and $\cH'$ belong to the same functional
unit degree if and only if $\cH$ and $\cH'$ have the same derived
method operations.
A functional unit degree $\cD$ is below a functional unit $\cH$ if and
only if all derived method operations of some member of $\cD$ are
derived method operations of $\cH$.

The binary relation $\below$ on $\FU(\FUS)$ is reminiscent of the
relative computability relation $\below$ on algebras introduced
in~\cite{LB81a} because functional units can be looked upon as
algebras of a special kind.
In the definition of this relative computability relation on algebras,
the role of instruction sequences is filled by flow charts.
A more striking difference is that the relation allows for algebras
with different domains to be related.
This corresponds to a relation on functional units that allows for the
states from one state space to be represented by the states from
another state space.
To the best of our knowledge, the work presented in~\cite{LB81a} and a
few preceding papers of the same authors is the only work on
computability that is concerned with a relation comparable to the
relation $\below$ on $\FU(\FUS)$ defined above.

\section{Functional Units for Natural Numbers}
\label{sect-func-unit-nat}

In this section, we investigate functional units for natural numbers.
The main consequences of considering the special case where the state
space is $\Nat$ are the following:
(i)~$\Nat$ is infinite,
(ii)~there is a notion of computability known which can be used without
further preparations.

An example of a functional unit in $\FU(\Nat)$ is an unbounded counter.
The method names involved are $\setzero$, $\incr$, $\decr$, and
$\iszero$.
The method operations involved are the functions $\Setzero$, $\Incr$,
$\Decr$, $\funct{\Iszero}{\Nat}{\Bool \x \Nat}$ defined as follows:
\begin{ldispl}
\begin{aeqns}
\Setzero(x) & = & \tup{\True,0}\;,     \\
\Incr(x)    & = & \tup{\True,x + 1}\;, \\
\Decr(x)    & = &
\Biggl\{
\begin{array}[c]{@{}l@{\;\;}l@{}}
\tup{\True,x - 1} & \mif x > 0\;, \\
\tup{\False,0}    & \mif x = 0\;,
\end{array}
\beqnsep
\Iszero(x) & = &
\Biggl\{
\begin{array}[c]{@{}l@{\;\;}l@{}}
\tup{\True,x} \!\phantom{{} - 1} & \mif x = 0\;, \\
\tup{\False,x}                   & \mif x > 0\;.
\end{array}
\end{aeqns}
\end{ldispl}%
The functional unit $\Counter$ is defined as follows:
\begin{ldispl}
\Counter =
\set{\tup{\setzero,\Setzero},\tup{\incr,\Incr},
     \tup{\decr,\Decr},\tup{\iszero,\Iszero}}\;.
\end{ldispl}%
The following proposition shows that there are infinitely many
functional units for natural numbers with mutually different sets of
derived method operations whose method operations are derived method
operations of a major restriction of the functional unit $\Counter$.
\begin{proposition}
\label{prop-inf-many-degrees}
\sloppy
There are infinitely many functional unit degrees below
$\tup{\set{\decr,\iszero},\linebreak[2]\Counter}$.
\end{proposition}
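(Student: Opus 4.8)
The functional unit $\tup{\set{\decr,\iszero},\Counter}$ offers only two method operations: $\Decr$ (which decrements and replies whether the number was positive) and $\Iszero$ (which leaves the number unchanged and replies whether it is zero). Crucially, the set-to-zero and increment operations are *not* available. So starting from a state $n \in \Nat$, any instruction sequence in $\Lf{\set{\decr,\iszero}}$ can only repeatedly test for zero and decrement; it can never increase the stored number. This monotonicity is the key structural fact I would exploit.

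Let me think about what derived method operations are available. An instruction sequence $x \in \Lf{\set{\decr,\iszero}}$ applied in state $n$ walks down: it can observe whether $n=0$, and it can decrement (observing whether the number was positive). After termination the stored state is some $n' \le n$, and the reply is a Boolean. So the final state of any derived method operation $\moextr{x}{\tup{\set{\decr,\iszero},\Counter}}$ is always $\le$ the input state.

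So the proof has to be about counting equivalence classes under $\equiv$, where $\below$ is derivability of method operations. Let me think about which functional units lie below this one.

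---

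My goal is to prove there are infinitely many $\equiv$-classes $\cD$ with some representative $\cH' \below \tup{\set{\decr,\iszero},\Counter}$.

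**The plan.** For each $k \ge 1$ I would exhibit a functional unit $\cH_k$ that is below $\tup{\set{\decr,\iszero},\Counter}$, and then show that $\cH_k \not\equiv \cH_l$ for $k \neq l$, thereby producing infinitely many distinct degrees. The natural candidate for $\cH_k$ is a functional unit equipped with a single method operation that decrements by $k$ — call it $\Decr_k$ — defined by $\Decr_k(x) = \tup{\True, x \monus k}$ if $x \ge k$ and $\tup{\False, x}$ (or $\tup{\False,0}$) otherwise. Each such $\Decr_k$ is clearly a derived method operation of $\tup{\set{\decr,\iszero},\Counter}$, since "decrement $k$ times, testing appropriately" is written as a short \PGLBsbt\ instruction sequence using only $\decr$ and $\iszero$; hence each $\cH_k \below \tup{\set{\decr,\iszero},\Counter}$.

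**First I would** fix the definitions: set $\cH_k = \set{\tup{m,\Decr_k}}$ for a single method name $m \in \MN$ and the "decrement by $k$" operation, and verify $\cH_k \below \tup{\set{\decr,\iszero},\Counter}$ by writing the witnessing instruction sequence (a $k$-fold repetition of a $\decr$ test followed by the appropriate termination). This step is routine given Section~\ref{sect-func-unit}. The witnessing instruction sequence of $\cH_k$ will need to reply $\True$ exactly when the number was at least $k$ and reply $\False$ otherwise, which is easily arranged by testing after each of the $k$ decrements.

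**Next, and this is the crux,** I would show $\cH_k \not\equiv \cH_l$ for $k \neq l$. It suffices to show $\cH_k \not\below \cH_l$ when $k < l$. The obstacle, and the heart of the argument, is to establish a rigidity property: every derived method operation of $\cH_l$ has the form "replies according to some threshold condition and decrements the state by a multiple of $l$", so that the possible amounts by which the state decreases are constrained to the lattice $l\Nat$. Because $\cH_l$ offers only the single operation $\Decr_l$, any instruction sequence in $\Lf{\IF(\cH_l)}$ can only subtract multiples of $l$ from the input state (and only observe the sign bits produced along the way). Consequently, for a derived method operation $M = \moextr{x}{\cH_l}$ and any input $s$, the difference $s - M^e(s)$ is a multiple of $l$. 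Since $\Decr_k$ decreases the state by $k$ on large inputs and $k$ is not a multiple of $l$ when $0 < k < l$, we get $\Decr_k \not\in \set{\text{derived method operations of } \cH_l}$, so $\cH_k \not\below \cH_l$.

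**The main obstacle** is making the rigidity claim precise and proving it, because the instruction sequence computing a derived method operation may branch on the Boolean replies and so is not literally a fixed number of decrements: it is a computation tree. I would handle this by induction on the run of $x$ in the service $\tup{\cH_l,s}$ (equivalently, on the convergence derivation $x \cvg f.\cterm{\cH_l(s)}$ from Section~\ref{sect-TSI}), tracking the invariant that at every reachable configuration the current state is $s$ minus a multiple of $l$. Since the only state-changing method is $\Decr_l$, every transition either leaves the state fixed (the $m$ call that replies $\False$ at a sub-$l$ state, or any non-state-changing step) or subtracts exactly $l$; the invariant is preserved, and at termination $M^e(s) \equiv s \pmod l$. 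With this modular invariant in hand the separation of the $\cH_k$ is immediate, and infinitely many distinct degrees below $\tup{\set{\decr,\iszero},\Counter}$ follow. I would close by noting each $\cH_k$ lies below the target functional unit, so each determines a functional unit degree below $\tup{\set{\decr,\iszero},\Counter}$, and these degrees are pairwise distinct.
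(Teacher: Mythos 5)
Your approach is the same as the paper's: exhibit, for each $k$, a ``decrement by $k$'' functional unit $\cH_k$ below $\tup{\set{\decr,\iszero},\Counter}$ and separate the $\cH_k$ by showing that the state changes realizable over $\cH_l$ cannot reproduce $\Decrn{k}$ for $k<l$. (The paper keeps $\Iszero$ in each $\cH_n$; dropping it, as you do, is harmless.) There is, however, one point you must not leave open: the ``else'' clause of your operation. If you take $\Decrn{k}(x)=\tup{\False,x}$ for $x<k$, then step one already fails for $k\geq 2$: with only $\decr$ and $\iszero$ available, distinguishing $x=1$ from $x\geq 2$ forces at least one decrement, and the lost value cannot be restored, so no $y\in\Lf{\set{\decr,\iszero}}$ satisfies both $\moextr{y}{\cdot}(1)=\tup{\False,1}$ and $\moextr{y}{\cdot}(2)=\tup{\True,0}$; this variant is not a derived method operation of the counter restriction. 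You must take $\Decrn{k}(x)=\tup{\False,0}$ for $x<k$, which is what the paper does and which is realized by $(\ntst{f.\decr}\conc\haltN)^k\conc\haltP$.

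With that choice made, your ``multiple of $l$'' invariant is not quite right either: when the state drops below $l$ and the operation is applied once more, the state collapses to $0$, which need not differ from $s$ by a multiple of $l$. The correct rigidity statement is that the states reachable from $s$ over $\cH_l$ lie in $\set{s-jl \where 0\leq j\leq \lfloor s/l\rfloor}\union\set{0}$. The separation survives: evaluating at $s=l$, the reachable final states are only $l$ and $0$, while $\Decrn{k}(l)=\tup{\True,l-k}$ with $0<l-k<l$, so $\Decrn{k}$ is not a derived method operation of $\cH_l$ and $\cH_k\not\below\cH_l$. This is precisely the paper's one-line argument (it evaluates $\Decrn{n}$ at the single input $m$ and observes that $\Decrn{m}(m)=\tup{\True,0}$ leaves only $m$ and $0$ reachable), so once the two loose ends above are pinned down your proof coincides with the paper's.
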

\begin{proof}
For each $n \in \Nat$, we define a functional unit $\cH_n \in \FU(\Nat)$
such that $\cH_n \leq \tup{\set{\decr,\iszero},\Counter}$ as follows:
\begin{ldispl}
\cH_n = \set{\tup{\decrn{n},\Decrn{n}},\tup{\iszero,\Iszero}}\;,
\end{ldispl}%
where
\begin{ldispl}
\Decrn{n}(x) =
\biggl\{
\begin{array}[c]{@{}l@{\;\;}l@{}}
\tup{\True, x - n} & \mif x \geq n \\
\tup{\False,0}     & \mif x < n\;.
\end{array}
\end{ldispl}%
Let $n,m \in \Nat$ be such that $n < m$.
Then $\Decrn{n}(m) = \tup{\True,m - n}$.
However, there does not exist an $x \in \Lf{\IF(\cH_m)}$ such that
$\moextr{x}{\cH_m}(m) = \tup{\True,m - n}$ because
$\Decrn{m}(m) = \tup{\True,0}$.
Hence, $\cH_n \not\leq \cH_m$ for all $n,m \in \Nat$ with $n < m$.
\qed
\end{proof}

A method operation $M \in \MO(\Nat)$ is \emph{computable} if there exist
computable functions $\funct{F,G}{\Nat}{\Nat}$ such that
$M(n) = \tup{\beta(F(n)),G(n)}$ for all $n \in \Nat$,
where $\funct{\beta}{\Nat}{\Bool}$ is inductively defined by
$\beta(0) = \True$ and $\beta(n + 1) = \False$.
A functional unit $\cH \in \FU(\Nat)$ is \emph{computable} if, for each
$\tup{m,M} \in \cH$, $M$ is computable.

\begin{theorem}
\label{theorem-computable}
Let $\cH,\cH' \in \FU(\Nat)$ be such that $\cH \below \cH'$.
Then $\cH$ is computable if $\cH'$ is computable.
\end{theorem}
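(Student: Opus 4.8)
The plan is to prove the theorem pointwise. Since $\cH$ is computable precisely when every method operation occurring in it is computable, it suffices to fix an arbitrary $\tup{m,M} \in \cH$ and show that $M$ is computable under the assumption that $\cH'$ is computable. Because $\cH \below \cH'$, the method operation $M$ is a derived method operation of $\cH'$, so there is an instruction sequence $x \in \Lf{\IF(\cH')}$ with $M = \moextr{x}{\cH'}$ and $\moextr{x}{\cH'}$ total. The idea is to exhibit computable functions $\funct{F,G}{\Nat}{\Nat}$ with $M(n) = \tup{\beta(F(n)),G(n)}$ by turning the execution of $x$ against the services that behave according to $\cH'$ into an explicit step-by-step simulation.

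First I would set up the simulation. Write $x = u_1 \conc \ldots \conc u_k$ and let $f$ be the fixed focus used in $\Lf{\IF(\cH')}$. A configuration is a pair $\tup{i,s}$ consisting of an instruction pointer $i$ and a state $s \in \Nat$, together with distinguished terminated configurations recording a Boolean reply and a final state. Since $\IF(\cH')$ is finite and $\cH'$ is computable, choose for each $m' \in \IF(\cH')$ computable functions $F_{m'},G_{m'}$ with $m'_{\cH'}(s) = \tup{\beta(F_{m'}(s)),G_{m'}(s)}$. The single-step function on configurations is then computable: it is a finite case distinction on the fixed instruction $u_i$, where jump instructions only update $i$ (to $i+l$ or $i \monus l$), the termination instructions $\haltP$ and $\haltN$ yield terminated configurations with reply $\True$ respectively $\False$, and the test and plain basic instructions compute the reply and new state of $u_i$ at $s$ via the chosen functions and update both $i$ and $s$ accordingly. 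Note that, as every method invoked by $x$ lies in $\IF(\cH')$, the reply is always Boolean and never $\Div$, so no deadlock arises from a rejected request. Iterating this step function a given number $t$ of times is computable in $n$ and $t$.

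Next I would extract $F$ and $G$ by minimization. Because $\moextr{x}{\cH'}$ is total, Proposition~\ref{prop-cvg-sfreply} together with the absence of the plain termination instruction in \PGLBsbt\ (so that no $\Mea$ reply can arise) gives that $x$ converges on $f.\cterm{\cH'(s)}$ for every $s$; hence the simulation started from $\tup{1,n}$ reaches a terminated configuration after finitely many steps for every $n$, and the divergent cases corresponding to $\DeadEnd$ do not occur. The least number of steps needed is therefore a total function of $n$, and as the minimization of a computable predicate it is itself computable. Reading off the terminated configuration then yields the reply and the final state as computable functions of $n$; setting $G(n)$ to be the final state and $F(n)$ to be $0$ or $1$ according as the reply is $\True$ or $\False$ gives computable $F,G$ with $M(n) = \tup{\beta(F(n)),G(n)}$, as required. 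As $\tup{m,M}$ was arbitrary, $\cH$ is computable.

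The main obstacle I expect is not the computability bookkeeping but justifying that this configuration simulation faithfully computes $\moextr{x}{\cH'}$, that is, that the transitions of the step function agree with the values assigned by the reply and apply operators through $\extr{x}$. This amounts to unwinding the defining equations for thread extraction (which supply the instruction-pointer dynamics) together with the axioms for $\sfreply$ and $\sfapply$ (whose processing clauses $R7,R8$ and $A7,A8$ are exactly the method-processing step $s \mapsto m'^e_{\cH'}(s)$ of the simulation), with termination handled by $R1,R2$ and $A1,A2$; convergence from totality then guarantees that only finitely many such unwindings are needed, so the infinitary axioms are not required for the computation itself. This correspondence is routine but must be made explicit, after which the conclusion follows from the standard closure properties of the computable functions.
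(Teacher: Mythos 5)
Your argument is correct, but it takes a genuinely different route from the paper's. The paper does not simulate configurations step by step: it observes that $\extr{P}$ is the solution of a finite \emph{linear} recursive specification over \BTAbt\ (with $\DeadEnd$ absent from the right-hand sides, which is justified by totality of $\moextr{P}{\cH'}$), and then transcribes that specification directly into a system of functional equations
$F_i(s) = F_{l(i)}({m_i}_{\cH'}^e(s)) \mmul \nsg(\chi_i(s)) + F_{r(i)}({m_i}_{\cH'}^e(s)) \mmul \sg(\chi_i(s))$
whose solution for $F_1$ is $\moextr{P}{\cH'}^e$, invoking Kleene's equation calculus to conclude general recursiveness (and similarly for $\moextr{P}{\cH'}^r$). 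Your proof instead runs the standard ``simulate and minimize'' argument: a computable single-step function on instruction-pointer/state configurations, a $\mu$-search for the halting time made total by the totality of $\moextr{x}{\cH'}$, and a read-off of reply and final state. The two approaches trade off as follows. The paper's stays entirely within the algebraic formalism and needs no step counter, but the legitimacy of its equation system as a recursive definition leans on Kleene's framework and on the suppression of $\DeadEnd$; your version is more elementary and makes fully explicit where totality is used (to guarantee that the minimization terminates, covering divergence by infinite jump chains or by falling off the end of the instruction sequence). Both proofs share the same unavoidable core, which you correctly identify as the real content: the correspondence between the operational dynamics given by thread extraction together with axioms A1--A9/R1--R9 and the recursion-theoretic object being constructed. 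Your observations that replies are never $\Div$ because $x \in \Lf{\IF(\cH')}$, and that $\Mea$ cannot arise because \PGLBsbt\ lacks the plain termination instruction, are exactly the right side conditions to discharge.
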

\begin{proof}
We will show that all derived method operations of $\cH'$ are
computable.

Take an arbitrary $P \in \Lf{\IF(\cH')}$ such that $\moextr{P}{\cH'}$ is
a derived method operations of $\cH'$.
It follows immediately from the definition of thread extraction that
$\extr{P}$ is the solution of a finite linear recursive specification
over \BTAbt, i.e.\ a finite guarded recursive specification over \BTAbt\
in which the right-hand side of each equation is a \BTAbt\ term of the
form $\DeadEnd$, $\StopP$, $\StopN$ or $\pcc{x}{a}{y}$ where $x$ and $y$
are variables of sort $\Thr$.
Let $E$ be a finite linear recursive specification over \BTAbt\ of which
the solution for $x_1$ is $\extr{P}$.
Because $\moextr{P}{\cH'}$ is total, it may be assumed without loss of
generality that $\DeadEnd$ does not occur as the right-hand side of an
equation in $E$.
Suppose that
\begin{ldispl}
E =
\set{x_i = \pcc{x_{l(i)}}{f.m_i}{x_{r(i)}} \where i \in [1,n]} \union
\set{x_{n+1} = \StopP, x_{n+2} = \StopN}\;.
\end{ldispl}%
From this set of equations, using the relevant axioms and definitions,
we obtain a set of equations of which the solution for $F_1$ is
$\moextr{P}{\cH'}^e$:
\begin{ldispl}
\set{F_i(s) = F_{l(i)}({m_i}_{\cH'}^e(s)) \mmul \nsg(\chi_i(s)) +
              F_{r(i)}({m_i}_{\cH'}^e(s)) \mmul  \sg(\chi_i(s))
      \where i \in [1,n]}
\\ \quad {} \union
\set{F_{n+1}(s) = s, F_{n+2}(s) = s}\;,
\end{ldispl}%
where, for every $i \in [1,n]$, the function
$\funct{\chi_i}{\Nat}{\Nat}$ is such that for all $s \in \Nat$:
\begin{ldispl}
\chi_i(s) = 0 \;\Iff\; {m_i}_{\cH'}^r(s) = \True\;,
\end{ldispl}%
and the functions $\funct{\sg,\nsg}{\Nat}{\Nat}$ are defined as usual:
\begin{ldispl}
\begin{aeqns}
\sg(0)     & = & 0\;, \\
\sg(n + 1) & = & 1\;,
\end{aeqns}
\qquad\qquad
\begin{aeqns}
\nsg(0)     & = & 1\;, \\
\nsg(n + 1) & = & 0\;.
\end{aeqns}
\end{ldispl}%
It follows from the way in which this set of equations is obtained from
$E$, the fact that ${m_i}_{\cH'}^e$ and $\chi_i$ are computable for each
$i \in [1,n]$, and the fact that $\sg$ and $\nsg$ are computable, that
this set of equations is equivalent to a set of equations by which
$\moextr{P}{\cH'}^e$ is defined recursively in the sense
of~\cite{Kle36a}.
This means that $\moextr{P}{\cH'}^e$ is general recursive, and hence
computable.

In a similar way, it is proved that $\moextr{P}{\cH'}^r$ is computable.
\qed
\end{proof}

A computable $\cH \in \FU(\Nat)$ is \emph{universal} if for each
computable $\cH' \in \FU(\Nat)$, we have $\cH' \below \cH$.
There exists a universal computable functional unit for natural numbers.
\begin{theorem}
\label{theorem-universal-fu}
There exists a computable $\cH \in \FU(\Nat)$ that is universal.
\end{theorem}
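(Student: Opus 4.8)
The plan is to construct a single computable functional unit $\cH \in \FU(\Nat)$ whose derived method operations comprise \emph{all} computable method operations on $\Nat$, so that $\cL \below \cH$ holds for every computable $\cL \in \FU(\Nat)$. By Theorem~\ref{theorem-computable} the derived method operations of any computable functional unit are themselves computable, so $\cH$ can derive no more than the computable method operations; the entire content of the statement is the converse, namely that a suitable $\cH$ derives all of them.

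First I would reduce the goal to computing arbitrary computable functions into the state. A computable method operation $M$ is given by total computable functions $F,G \colon \Nat \to \Nat$ with $M(n) = \tup{\beta(F(n)),G(n)}$. Combining them by a fixed computable pairing, set $H(s) = \tup{\sg(F(s)),G(s)}$, which is again total and computable. If $\cH$ is equipped with auxiliary method operations for testing the first component of a pair and for extracting the second, then an instruction sequence that (i)~transforms the state from $s$ to $H(s)$, (ii)~tests the stored bit $\sg(F(s))$ by a single method call, and (iii)~in the two resulting branches overwrites the state by its second projection $G(s)$ and terminates with $\haltP$ or $\haltN$ accordingly, realises exactly $\moextr{x}{\cH} = M$. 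These auxiliary operations are plainly total and computable, so the crux is step~(i): to design the remaining method operations of $\cH$ so that, for \emph{every} total computable $H$, some instruction sequence in $\Lf{\IF(\cH)}$ maps the state $s$ to $H(s)$.

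For step~(i) I would exploit that \PGLBsbt\ instruction sequences already provide unbounded iteration through backward jump instructions and branching through (negative) test instructions, while the natural-number state can encode an entire machine configuration by a pairing. Accordingly I would let $\cH$ carry method operations implementing one step of a fixed universal machine together with a halting predicate: a method whose effect function advances the encoded configuration by a single step and whose reply reports whether the configuration is halting, plus methods to install an index and input and to read off the output. Each such operation is primitive recursive, hence total and computable, so $\cH$ is computable. Given a total computable $H = \varphi_e$, the instruction sequence loads $e$ and $s$ into the configuration, repeatedly calls the step method inside a loop closed by a backward jump, using a negative test on the halting predicate to leave the loop precisely when the computation halts, and finally reads off $H(s)$; since $H$ is total the loop always terminates. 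Carrying the minimisation in the instruction sequence's loop---rather than inside a method---is exactly what reconciles the totality required of method operations with the unbounded search inherent in general recursive functions.

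The main obstacle I expect is the bookkeeping in establishing that this $\cH$ is genuinely universal: choosing an encoding of configurations and a step operation that are simultaneously primitive recursive (so $\cH$ stays computable) and faithful to a universal model, and then verifying in detail that the loaded instruction sequence computes $\varphi_e(s)$ and leaves the state in the intended value. Once step~(i) is in place, the reduction of the second paragraph assembles the full method operation $M$, the auxiliary test and projection methods close steps~(ii) and~(iii), and universality follows because the index $e$ of $H$, and hence the instruction sequence, can be obtained uniformly from any presentation of the computable functions $F$ and $G$ underlying a given computable $\cL$.
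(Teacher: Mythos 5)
Your proposal is correct, but it reaches universality by a genuinely different route than the paper. The paper's proof needs no universal machine and no indexing of the computable functions: invoking the result of~\cite{SS63a} that every computable $M \in \MO(\Nat)$ is computed by a register machine with six registers, it defines a functional unit $\Univ$ whose method operations are just the individual register operations (increment, decrement and zero-test on each register, with register contents coded as prime-power exponents of the natural-number state, plus a load operation $x \mapsto 2^x$ and a read-out operation), and then \emph{compiles} each register machine program $P \in \RML$ into an instruction sequence $\rmlful(P) \in \Lf{\IF(\Univ)}$ whose test and jump structure mirrors that of $P$, so that $\moextr{\rmlful(P)}{\Univ} = M$. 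You instead make the program \emph{data}: your functional unit contains a one-step universal machine (a step method whose reply is a halting predicate, essentially Kleene's $T$-predicate), and the per-$M$ instruction sequence is a uniform driver loop preceded by a segment that installs the index $e$, with the unbounded search carried by the backward-jump loop. Both proofs share the crucial insight that method operations must be total, so the unbounded iteration has to live in the instruction sequence rather than in the methods, and both finish with the same test-then-project-then-$\haltP$/$\haltN$ coda. What the paper's compilation buys is self-containedness (no enumeration theorem and no verification that a universal step function is primitive recursive) and a concrete $20$-method unit that the subsequent theorem compresses to three methods; what your interpreter buys is near-uniformity of the instruction sequences, only the index-loading segment depending on $M$, at the price of the heavier though standard machinery of an acceptable indexing. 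One detail you should make explicit: since the interface of $\cH$ is finite while $e$ is unbounded, the index must be installed by finitely many calls to fixed methods (e.g.\ $e$ repetitions of an increment-index method, or logarithmically many digit-append methods), so the length of your instruction sequence, like the paper's, genuinely depends on $M$.
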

\begin{proof}
We will show that there exists a computable $\cH \in \FU(\Nat)$ with the
property that each computable $M \in \MO(\Nat)$ is a derived method
operation of $\cH$.

As a corollary of Theorem~10.3 from~\cite{SS63a},%
\footnote{That theorem can be looked upon as a corollary of Theorem~Ia
          from~\cite{Min61a}.}
we have that each computable $M \in \MO(\Nat)$ can be computed by means
of a register machine with six registers, say $\rmreg{0}$, $\rmreg{1}$,
$\rmreg{2}$, $\rmreg{3}$, $\rmreg{4}$, and $\rmreg{5}$.
The registers are used as follows:
$\rmreg{0}$ as input register;
$\rmreg{1}$ as output register for the output in $\Bool$;
$\rmreg{2}$ as output register for the output in $\Nat$;
$\rmreg{3}$, $\rmreg{4}$ and $\rmreg{5}$ as auxiliary registers.
The content of $\rmreg{1}$ represents the Boolean output as follows:
$0$ represents $\True$ and all other natural numbers represent $\False$.
For each $i \in [0,5]$, register $\rmreg{i}$ can be
incremented by one, decremented by one, and tested for zero by means of
instructions
$\rmreg{i}.\rmincr$, $\rmreg{i}.\rmdecr$ and $\rmreg{i}.\rmiszero$,
respectively.
We write $\RML$ for the set of all \PGLBsbt\ instruction sequences,
taking the set
$\set{\rmreg{i}.\rmincr,\rmreg{i}.\rmdecr,\rmreg{i}.\rmiszero \where
      i \in [0,5]}$
as the set $\BInstr$ of basic instructions.
Clearly, $\RML$ is adequate to represent all register machine programs
using six registers.

We define a computable functional unit $\Univ \in \FU(\Nat)$ whose
method operations can simulate the effects of the register machine
instructions by encoding the register machine states by natural numbers
such that the contents of the registers can reconstructed by prime
factorization.
This functional unit is defined as follows:
\begin{ldispl}
\begin{aeqns}
\Univ & = &
\set{\tup{\expii,\Expii},\tup{\factv,\Factv}}
\\    & \union &
\set{\tup{\rmmn{i}{succ},\rmmo{i}{succ}},
     \tup{\rmmn{i}{pred},\rmmo{i}{pred}},
     \tup{\rmmn{i}{iszero},\rmmo{i}{iszero}} \where
     i \in [0,5]}\,,
\end{aeqns}
\end{ldispl}%
where the method operations are defined as follows:
\begin{ldispl}
\begin{aeqns}
\Expii(x) & = & \tup{\True,2^x}\;, \\
\Factv(x) & = &
\tup{\True,\max \set{y \where \Exists{z}{x = 5^y \mmul z}}}
\end{aeqns}
\end{ldispl}%
and, for each $i \in [0,5]$:%
\footnote
{As usual, we write $x \divs y$ for $y$ is divisible by $x$.}
\begin{ldispl}
\begin{aeqns}
\rmmo{i}{succ}(x) & = & \tup{\True,\prim_i \mmul x}\;, \\
\rmmo{i}{pred}(x) & = &
\Biggl\{
\begin{array}[c]{@{}l@{\;\;}l@{}}
\tup{\True, x \mdiv \prim_i} & \mif \prim_i \divs x \\
\tup{\False,x}               & \mif \Not (\prim_i \divs x)\;,
\end{array}
\beqnsep
\rmmo{i}{iszero}(x) & = &
\Biggl\{
\begin{array}[c]{@{}l@{\;\;}l@{}}
\tup{\True, x} \phantom{{} \mdiv \prim_i}
                             & \mif \Not (\prim_i \divs x) \\
\tup{\False,x}               & \mif \prim_i \divs x\;,
\end{array}
\end{aeqns}
\end{ldispl}%
where $\prim_i$ is the $(i{+}1)$th prime number, i.e.\
$\prim_0 = 2$, $\prim_1 = 3$, $\prim_2 = 5$, \ldots\ .

We define a function $\rmlful$ from $\RML$ to $\Lf{\IF(\Univ)}$, which
gives, for each instruction sequence $P$ in $\RML$, the instruction
sequence in $\Lf{\IF(\Univ)}$ by which the effect produced by $P$
on a register machine with six registers can be simulated on $\Univ$.
This function is defined as follows:
\begin{ldispl}
\rmlful(u_1 \conc \ldots \conc u_k)
\\ \quad {} =
f.\expii \conc \phi(u_1) \conc \ldots \conc \phi(u_k) \conc
\ntst{f.\rmmn{1}{iszero}} \conc \fjmp{3} \conc
f.\factv \conc \haltP \conc f.\factv \conc \haltN\;,
\end{ldispl}%
where
\begin{ldispl}
\begin{aceqns}
\phi(a) & = & \psi(a)\;, \\
\phi(\ptst{a}) & = & \ptst{\psi(a)}\;, \\
\phi(\ntst{a}) & = & \ntst{\psi(a)}\;, \\
\phi(u) & = & u
 & \mif u \;\mathrm{is\;a\;jump\;or\;termination\;instruction}\;,
\end{aceqns}
\end{ldispl}%
where, for each $i \in [0,5]$:
\begin{ldispl}
\begin{aceqns}
\psi(\rmreg{i}.\rmincr)   & = & f.\rmmn{i}{succ}\;, \\
\psi(\rmreg{i}.\rmdecr)   & = & f.\rmmn{i}{pred}\;, \\
\psi(\rmreg{i}.\rmiszero) & = & f.\rmmn{i}{iszero}\;.
\end{aceqns}
\end{ldispl}%

Take an arbitrary computable $M \in \MO(\Nat)$.
Then there exists an instruction sequence in $\RML$ that computes $M$.
Take an arbitrary $P \in \RML$ that computes $M$.
Then $\moextr{\rmlful(P)}{\Univ} = M$.
Hence, $M$ is a derived method operation of $\Univ$.
\qed
\end{proof}
The universal computable functional unit $\Univ$ defined in the proof of
Theorem~\ref{theorem-universal-fu} has $20$ method operations.
However, three method operations suffice.
\begin{theorem}
\label{theorem-universal-fu-three-meths}
There exists a computable $\cH \in \FU(\Nat)$ with only three method
operations that is universal.
\end{theorem}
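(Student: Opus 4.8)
The plan is to reduce the statement to a simulation argument that mirrors the proof of Theorem~\ref{theorem-universal-fu}. First I would unwind the definition of universality. A computable $\cH \in \FU(\Nat)$ is universal exactly when every computable $M \in \MO(\Nat)$ is a derived method operation of $\cH$: each such $M$ is the single method operation of some computable $\cL \in \FU(\Nat)$, and $\cL \below \cH$ means precisely that every method operation of $\cL$ is derived from $\cH$. Hence it suffices to construct a computable $\cH$ with $\IF(\cH)$ of size three such that each computable $M \in \MO(\Nat)$ is a derived method operation of $\cH$. Equivalently, by transitivity of $\below$ (part~1 of Theorem~\ref{theorem-below-equiv}) together with the universality of $\Univ$ (Theorem~\ref{theorem-universal-fu}), it would be enough to show $\Univ \below \cH$; I expect the direct simulation route to be cleaner, but would keep the transitivity route in reserve.

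The key observation is that the twenty method operations of $\Univ$ squander method names: the eighteen operations $\rmmo{i}{succ}$, $\rmmo{i}{pred}$, $\rmmo{i}{iszero}$ differ only in the prime $\prim_i$ they address. I would therefore replace the explicit per-register operations by a single \emph{pointer} stored inside the state, so that a constant number of method operations can increment, decrement and test arbitrarily many prime-indexed registers. Concretely, I would fix a computable coding of the state as a pair $\tup{p,m}$ of a pointer $p$ and a working number $m$, reserving one prime for $p$ and using the remaining primes as registers, and take three method operations: one that advances the pointer ($p \mapsto p+1$, reply $\True$); one that multiplies the working number by $\prim_{p}$ and resets the pointer to $0$ (reply $\True$); and one that, when $\prim_{p} \divs m$, divides $m$ by $\prim_{p}$ and resets the pointer (reply $\True$), and otherwise leaves $m$ unchanged and resets the pointer (reply $\False$). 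Incrementing, decrementing and testing register $i$ are then realised by advancing the pointer $i$ times followed by the appropriate one of the two remaining operations, and a nondestructive zero test is obtained by a divide followed by a compensating multiply. Thus three method operations recover the full power of the prime-factorisation register machine used for $\Univ$.

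To conclude, given any computable $M \in \MO(\Nat)$ I would take a register-machine program computing it, exactly as in Theorem~\ref{theorem-universal-fu}, and translate it in the spirit of the map $\rmlful$ into an instruction sequence $y \in \Lf{\IF(\cH)}$: each register instruction expands into a fixed block of pointer-advances followed by a multiply, a divide, or a test, and the program is wrapped in a prologue installing the raw input as the content of register~$0$ and an epilogue reading the output register back out. One then checks that $\moextr{y}{\cH}$ is total and equals $M$. The main obstacle I anticipate is the bookkeeping required to make exactly three total computable functions carry three duties simultaneously — pointer management, the three primitive register actions, and the passage between the raw state and the internal prime encoding (the roles that $\Expii$ and $\Factv$ play for $\Univ$) — and in particular getting the encode/decode phases to run with no scratch other than the state itself and verifying that the pointer is always cleared, so that the final state is exactly the intended output. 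Computability of $\cH$ is then immediate, since its three operations are given by explicit primitive-recursive definitions.
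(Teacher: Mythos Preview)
Your high-level plan---reduce to Theorem~\ref{theorem-universal-fu} and exhibit a three-operation $\cH$ with $\Univ \below \cH$, then invoke transitivity of $\below$---is exactly the paper's. The gap is in your concrete choice of the three operations. The pointer-advance/multiply/divide triple you describe does simulate the eighteen register operations $\rmmo{i}{succ}$, $\rmmo{i}{pred}$, $\rmmo{i}{iszero}$ of $\Univ$, but it cannot simulate $\Expii$ or $\Factv$, and without those you do not get $\Univ \below \cH$. The obstruction is concrete: all three of your operations send working number $m = 0$ to working number $0$ (multiplying or dividing $0$ by a prime yields $0$, and advancing the pointer leaves $m$ untouched). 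Under the prime-exponent coding you describe, this makes the input state $0$ a fixed point of every operation, so no method operation $M$ with $M^e(0) \neq 0$ is derivable; under a general pairing, the input state coding $\tup{0,0}$ can only reach states coding some $\tup{p,0}$, a proper subset of $\Nat$. You correctly flag the encode/decode phase as ``the main obstacle'', but with the three operations as stated it is not merely bookkeeping---it is impossible.

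The paper's construction avoids this by \emph{not} trying to address an unbounded family of registers. It exploits that $\Univ$ has only finitely many method operations, say $M_0,\ldots,M_{19}$, and spends the three slots on encode, select, and apply-with-decode: $G_1(x) = \tup{\True,2^x}$ encodes the raw input; $G_2$ multiplies by $3$ (cycling modulo $3^{20}$) to pick an index $i \in [0,19]$; and $G_3(x) = M_{\factiii(x)}(\factii(x))$ reads off the selector and the encoded argument and applies the chosen $M_i$, returning a raw state. Then $f.\gi \conc {f.\gii}^{\,i} \conc \ptst{f.\giii} \conc \haltP \conc \haltN$ derives each $M_i$, and transitivity finishes. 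The point your proposal misses is that a \emph{bounded} selector suffices, freeing one of the three operations to perform the raw-to-encoded step that your scheme has no room for.
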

\begin{proof}
We know from the proof of Theorem~\ref{theorem-universal-fu} that there
exists a computable $\cH \in \FU(\Nat)$ with $20$ method operations, say
$M_0$, \ldots, $M_{19}$.
We will show that there exists a computable $\cH' \in \FU(\Nat)$ with
only three method operations such that $\cH \leq \cH'$.

We define a computable functional unit $\Univiii \in \FU(\Nat)$ with
only three method operations such that $\Univ \leq \Univiii$ as follows:
\begin{ldispl}
\Univiii = \set{\tup{\gi,\Gi},\tup{\gii,\Gii},\tup{\giii,\Giii}}\;,
\end{ldispl}%
where the method operations are defined as follows:
\begin{ldispl}
\begin{aeqns}
\Gi(x)   & = & \tup{\True,2^x}\;, \\
\Gii(x)  & = &
\left\{
\begin{array}[c]{@{}l@{\,}l@{}}
\tup{\True, 3 \mmul x}
 & \mif \Not (3^{19} \divs x) \And
        \Exists{y,z}{x = 3^y \mmul 2^z} \\
\tup{\True, x \mdiv 3^{19}}
 & \mif 3^{19} \divs x \And \Not (3^{20} \divs x) \And
        \Exists{y,z}{x = 3^y \mmul 2^z} \\
\tup{\False,0}
 & \mif 3^{20} \divs x \Or
        \Not \Exists{y,z}{x = 3^y \mmul 2^z}\;,
\end{array}
\right.
\beqnsep
\Giii(x) & = & M_{\factiii(x)}(\factii(x))\;,
\end{aeqns}
\end{ldispl}%
where
\begin{ldispl}
\begin{aeqns}
\factii(x)  & = & \max \set{y \where \Exists{z}{x = 2^y \mmul z}}\;, \\
\factiii(x) & = & \max \set{y \where \Exists{z}{x = 3^y \mmul z}}\;.
\end{aeqns}
\end{ldispl}%

We have that $M_i(x) = \Giii(3^i \mmul 2^x)$ for each $i \in [0,19]$.
Moreover, state $3^i \mmul 2^x$ can be obtained from state $x$ by first
applying $\Gi$ once and next applying $\Gii$ $i$ times.
Hence, for each $i \in [0,19]$,
$\moextr
  {f.\gi \conc {f.\gii}^{\,i} \conc
   \ptst{f.\giii} \conc \haltP \conc \haltN}
  {\Univiii} = M_i$.%
\footnote
{For each primitive instruction $u$, the instruction sequence $u^n$ is
 defined by induction on $n$ as follows: $u^0 = \fjmp{1}$, $u^1 = u$ and
 $u^{n+2} = u \conc u^{n+1}$.}
This means that $M_0$, \ldots, $M_{19}$ are derived method operations of
$\Univiii$.
\qed
\end{proof}
The universal computable functional unit $\Univiii$ defined in the proof
of Theorem~\ref{theorem-universal-fu-three-meths} has three method
operations.
We can show that one method operation does not suffice.
\begin{theorem}
\label{theorem-universal-fu-one-meth}
There does not exist a computable $\cH \in \FU(\Nat)$ with only one
method operation that is universal.
\end{theorem}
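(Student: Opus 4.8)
The plan is to reduce the universality requirement to a purely combinatorial impossibility about forward orbits of a single function. First I would observe that, by the definitions of $\below$ and of universality, if $\cH = \set{\tup{m,M}}$ were a universal computable functional unit, then every computable method operation $M' \in \MO(\Nat)$ would have to be a derived method operation of $\cH$: for any computable $M'$ one takes $\cL = \set{\tup{m_0,M'}}$, which is a computable functional unit, and $\cL \below \cH$ forces $M'$ to be realised as $\moextr{x}{\cH}$ for some $x \in \Lf{\set{m}}$ with $\moextr{x}{\cH}$ total. The crux is then a structural constraint on such derived operations. Since $\IF(\cH) = \set{m}$, the only basic instruction occurring in $x$ is $f.m$, and every execution of $f.m$ advances the service state by $M^e$ while reading the reply $M^r$ of the current state to steer the finite control. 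I would therefore prove the key lemma: whenever $\moextr{x}{\cH}$ is total, its effect function satisfies $\moextr{x}{\cH}^e(s) = (M^e)^k(s)$ for some $k \in \Nat$ depending on $x$ and $s$; that is, the output state always lies in the forward orbit $O(s) = \set{(M^e)^n(s) \where n \in \Nat}$ of $s$ under $M^e$.

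To establish this lemma I would argue from the semantics of thread extraction together with the apply axioms. Totality of $\moextr{x}{\cH}$ means $\moextr{x}{\cH}^r(s) \in \set{\True,\False}$ for every $s$, so by Proposition~\ref{prop-cvg-sfreply} the computation converges; in particular it never deadlocks and never diverges, and hence performs the action $f.m$ only finitely often. Tracing $x \sfapply f.\cterm{\cH(s)}$ through the apply axioms, each postconditional on $f.m$ is resolved by the current service $\cH(s')$: the reply is $M^r(s')$, which is always in $\Bool$ because $m \in \IF(\cH)$, and the state advances to $M^e(s')$. A converging computation thus applies $f.m$ exactly $k$ times for some finite $k$, so the resulting service is $f.\cterm{\cH((M^e)^k(s))}$, giving $\moextr{x}{\cH}^e(s) = (M^e)^k(s)$ as claimed. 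This step is essentially a straightforward induction in the style of the elimination and normal-form arguments already used for Lemma~\ref{lemma-elimination} and Proposition~\ref{prop-bt}.

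With the lemma in hand the contradiction is short. For each $c \in \Nat$ the constant operation $s \mapsto \tup{\True,c}$ is computable, hence by the reduction above it is a derived method operation of $\cH$, so the lemma yields $c \in O(s)$ for all $s \in \Nat$. Instantiating $s = 0$ and letting $c$ range over $\Nat$ forces $O(0) = \Nat$; but $O(0)$ is the value set of the single sequence $0, M^e(0), (M^e)^2(0), \ldots$, which can be infinite only if this sequence never repeats, so all the $(M^e)^n(0)$ are pairwise distinct. In particular $0 = (M^e)^0(0)$ differs from every $(M^e)^n(0)$ with $n \geq 1$, whence $0 \notin O(M^e(0))$. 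Taking $c = 0$ and $s = M^e(0)$, however, the constant-operation fact requires $0 \in O(M^e(0))$ -- a contradiction, so no such universal $\cH$ exists. I expect the main obstacle to be the clean formulation and proof of the orbit lemma, namely pinning down that totality excludes both deadlock and divergence and that no instruction other than $f.m$ can alter the state; once the forward-orbit restriction is secured, the diagonal argument against the constant operations is immediate.
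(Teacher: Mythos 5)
Your proposal is correct, and it rests on the same structural insight as the paper's proof: when $\IF(\cH)=\set{m}$, every total derived method operation can only move a state $s$ along the forward orbit $s, M^e(s), (M^e)^2(s),\ldots$ of the single effect function, because each processing of $f.m$ advances the state by $M^e$ and convergence bounds the number of processings. Where you differ is in the computable operations used to contradict this constraint. The paper takes $\cL = \Counter$ and uses $\Incr$ and $\Decr$: reaching state $1$ from $0$ in $k_0$ processings and state $0$ from $1$ in $k_1$ processings makes the orbit of $0$ periodic, hence finite, which is incompatible with repeated increments reaching all of $\Nat$. You instead use the family of constant operations $s \mapsto \tup{\True,c}$: universality forces the orbit of $0$ to be all of $\Nat$ and hence injective, while the constant-$0$ operation at $M^e(0)$ forces the orbit to return to $0$. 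Both contradictions are sound and of comparable difficulty. Your version has the merit of isolating the orbit restriction as an explicit lemma (the paper leaves it implicit in the counting of processings), which makes the argument easier to audit; the paper's version yields the slightly sharper by-product that already the concrete computable functional unit $\Counter$ -- indeed just $\set{\incr,\decr}$ -- cannot lie below any one-method functional unit, universal or not. The one point you should make sure to spell out when formalizing the orbit lemma is that totality rules out divergence via an infinite loop through backward jumps (not only deadlock), so that $f.m$ is indeed processed only finitely often; Proposition~\ref{prop-cvg-sfreply} together with the absence of plain termination in \PGLBsbt\ gives exactly this, as you indicate.
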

\begin{proof}
We will show that there does not exist a computable $\cH \in \FU(\Nat)$
with one method operation such that $\Counter \leq \cH$.
Here, $\Counter$ is the functional unit introduced at the beginning of
this section.

Assume that there exists a computable $\cH \in \FU(\Nat)$ with one
method operation such that $\Counter \leq \cH$.
Let $\cH' \in \FU(\Nat)$ be such that $\cH'$ has one method operation
and $\Counter \leq \cH'$, and let $m$ be the unique method name such
that $\IF(\cH') = \set{m}$.
Take arbitrary $P_1,P_2 \in \Lf{\IF(\cH')}$ such that
$\moextr{P_1}{\cH'} = \Incr$ and $\moextr{P_2}{\cH'} = \Decr$.
Then $\moextr{P_1}{\cH'}(0) = \tup{\True,1}$ and
$\moextr{P_2}{\cH'}(1) = \tup{\True,0}$.
Instruction $f.m$ is processed at least once if $P_1$ is applied to
$\cH'(0)$ or $P_2$ is applied to $\cH'(1)$.
Let $k_0$ be the number of times that instruction $f.m$ is processed on
application of $P_1$ to $\cH'(0)$ and
let $k_1$ be the number of times that instruction $f.m$ is processed on
application of $P_2$ to $\cH'(1)$ (irrespective of replies).
Then, from state $0$, state $0$ is reached again after $f.m$ is
processed $k_0 + k_1$ times.
Thus, by repeated application of $P_1$ to $\cH'(0)$ at most $k_0 + k_1$
different states can be reached.
This contradicts with $\moextr{P_1}{\cH'} = \Incr$.
Hence, there does not exist a computable $\cH \in \FU(\Nat)$ with one
method operation such that $\Counter \leq \cH$.
\qed
\end{proof}
It is an open problem whether two method operations suffice.

To the best of our knowledge, there are no existing results in
computability theory directly related to
Theorems~\ref{theorem-universal-fu},
\ref{theorem-universal-fu-three-meths}
and~\ref{theorem-universal-fu-one-meth}.
We could not even say which existing notion from computability theory
corresponds to the universality of a functional unit for natural
numbers.

In Section~\ref{sect-func-unit-sbs}, we will give a rough sketch of a
universal functional unit for a state space whose elements can be
understood as the possible contents of the tape of Turing machines
with a particular tape alphabet.
This universal functional unit corresponds to the common part of all
Turing machines with that tape alphabet.
The part that differs for different Turing machines is what is usually
called their ``transition function'' or ``program''.
In the current setting, the role of that part is filled by an
instruction sequence whose instructions correspond to the method
operations of the above-mentioned universal functional unit.
This means that different instruction sequences are needed for
different Turing machines with the tape alphabet concerned, but the
same universal functional unit suffices for all of them.
In particular, the same universal functional unit suffices for
universal Turing machines and non-universal Turing machines.

\section{Functional Units Relating to Turing Machine Tapes}
\label{sect-func-unit-sbs}

In this section, we define some notions that have a bearing on the
halting problem in the setting of \PGLBsbt\ and functional units.
The notions in question are defined in terms of functional units for the
following state space:
\begin{ldispl}
\SBS = \set{v \pebble w \where v,w \in \seqof{\set{0,1,\sep}}}\;.
\end{ldispl}%

The elements of $\SBS$ can be understood as the possible contents of the
tape of a Turing machine whose tape alphabet is $\set{0,1,\sep}$,
including the position of the tape head.
Consider an element $v \pebble w \in \SBS$.
Then $v$ corresponds to the content of the tape to the left of the
position of the tape head and $w$ corresponds to the content of the tape
from the position of the tape head to the right -- the indefinite
numbers of padding blanks at both ends are left out.
The colon serves as a separator of bit sequences.
This is for instance useful if the input of a program consists of
another program and an input to the latter program, both encoded as a
bit sequences.
We could have taken any other tape alphabet whose cardinality is greater
than one, but $\set{0,1,\sep}$ is extremely handy when dealing with
issues relating to the halting problem.
In fact, we could first have introduced the general notation $\SBS_A$,
where $A$ stands for a finite set of tape symbols, for the set
$\set{v \pebble w \where v,w \in \seqof{A}}$ and then have introduced
$\SBS$ as an abbreviation for $\SBS_{\set{0,1,\sep}}$.

Below, we will use a computable injective function
$\funct{\alpha}{\SBS}{\Nat}$ to encode the members of $\SBS$ as natural
numbers.
Because $\SBS$ is a countably infinite set, we assume that it is
understood what is a computable function from $\SBS$ to $\Nat$.
An obvious instance of a computable injective function
$\funct{\alpha}{\SBS}{\Nat}$ is the one where
$\alpha(a_1 \ldots a_n)$ is the natural number represented in the
quinary number-system by $a_1 \ldots a_n$ if the symbols $0$, $1$,
$\sep$ and $\pebble$ are taken as digits representing the numbers $1$,
$2$, $3$ and $4$, respectively.

A method operation $M \in \MO(\SBS)$ is \emph{computable} if there exist
computable functions $\funct{F,G}{\Nat}{\Nat}$ such that
$M(v) = \tup{\beta(F(\alpha(v))),\alpha^{-1}(G(\alpha(v)))}$ for all
$v \in \SBS$, where $\funct{\alpha}{\SBS}{\Nat}$ is a computable
injection and $\funct{\beta}{\Nat}{\Bool}$ is inductively defined by
$\beta(0) = \True$ and $\beta(n + 1) = \False$.
A functional unit $\cH \in \FU(\SBS)$ is \emph{computable} if, for each
$\tup{m,M} \in \cH$, $M$ is computable.

Like in the case of $\FU(\Nat)$, a computable $\cH \in \FU(\SBS)$ is
\emph{universal} if for each computable $\cH' \in \FU(\SBS)$, we have
$\cH' \below \cH$.

An example of a computable functional unit in $\FU(\SBS)$ is the
functional unit whose method operations correspond to the basic steps
that a Turing machine with tape alphabet $\set{0,1,\sep}$ can perform
on its tape.
It turns out that this functional unit is universal, which can be
proved using simple programming in \PGLBbt.

It is assumed that, for each $\cH \in \FU(\SBS)$, a computable
injective function from $\Lf{\IF(\cH)}$ to $\seqof{\set{0,1}}$ with a
computable image has been given that yields, for each
$x \in \Lf{\IF(\cH)}$, an encoding of $x$ as a bit sequence.
If we consider the case where the jump lengths in jump instructions are
character strings representing the jump lengths in decimal notation and
method names are character strings, such an encoding function can
easily be obtained using the ASCII character-encoding.
We use the notation $\ol{x}$ to denote the encoding of $x$ as a bit
sequence.

Let $\cH \in \FU(\SBS)$, and let $I \subseteq \IF(\cH)$.
Then:
\begin{itemize}
\item
$x \in \Lf{\IF(\cH)}$ produces a
\emph{solution of the halting problem} for $\Lf{I}$ with respect to
$\cH$ if:
\begin{ldispl}
x \cvg f.\cterm{\cH(v)}\; \mathrm{for\; all}\; v \in \SBS\;, \\
x \sfreply f.\cterm{\cH(\pebble \ol{y} \sep v)} = \True \Iff
y \cvg f.\cterm{\cH(\pebble v)}\; \mathrm{for\; all}\;
y \in \Lf{I}\; \mathrm{and}\; v \in \seqof{\set{0,1,\sep}}\;;
\end{ldispl}%
\item
$x \in \Lf{\IF(\cH)}$ produces a
\emph{reflexive solution of the halting problem} for $\Lf{I}$ with
respect to $\cH$ if $x$ produces a solution of the halting problem for
$\Lf{I}$ with respect to $\cH$ and $x \in \Lf{I}$;
\item
the halting problem for $\Lf{I}$ with respect to $\cH$ is
\emph{autosolvable} if there exists an $x \in \Lf{\IF(\cH)}$ such that
$x$ produces a reflexive solution of the halting problem for $\Lf{I}$
with respect to $\cH$;
\item
the halting problem for $\Lf{I}$ with respect to $\cH$ is
\emph{potentially autosolvable} if there exists an extension $\cH'$ of
$\cH$ such that the halting problem for $\Lf{\IF(\cH')}$ with respect
to $\cH'$ is autosolvable;
\item
the halting problem for $\Lf{I}$ with respect to $\cH$ is
\emph{potentially recursively autosolvable} if there exists an
extension $\cH'$ of $\cH$ such that the halting problem for
$\Lf{\IF(\cH')}$ with respect to $\cH'$ is autosolvable and $\cH'$ is
computable.
\end{itemize}
These definitions make clear that each combination of an
$\cH \in \FU(\SBS)$ and an $I \subseteq \IF(\cH)$ gives rise to a
\emph{halting problem instance}.

In Section~\ref{sect-interpreters} and~\ref{sect-autosolvability}, we
will make use of a method operation $\Dup \in \MO(\SBS)$ for duplicating
bit sequences.
This method operation is defined as follows:
\begin{ldispl}
\begin{aceqns}
\Dup(v \pebble w) & = & \Dup(\pebble v w)\;, \\
\Dup(\pebble v)   & = & \tup{\True,\pebble v \sep v}
 & \mif v \in \seqof{\set{0,1}}\;, \\
\Dup(\pebble v \sep w) & = & \tup{\True,\pebble v \sep v \sep w}
 & \mif v \in \seqof{\set{0,1}}\;.
\end{aceqns}
\end{ldispl}%

\begin{proposition}
\label{prop-dup}
Let $\cH \in \FU(\SBS)$ be such that $\tup{\dup,\Dup} \in \cH$,
let $I \subseteq  \IF(\cH)$ be such that $\dup \in I$,
let $x \in \Lf{I}$, and
let $v \in \seqof{\set{0,1}}$ and $w \in \seqof{\set{0,1,\sep}}$ be such
that $w = v$ or $w = v \sep w'$ for some $w' \in \seqof{\set{0,1,\sep}}$.
Then
$(f.\dup \conc x) \sfreply f.\cterm{\cH(\pebble w)} =
 x \sfreply f.\cterm{\cH(\pebble v \sep w)}$.
\end{proposition}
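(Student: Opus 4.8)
The plan is to unfold both sides with thread extraction and then peel off the leading $\dup$-instruction with the reply axioms. First I would use the defining equation $P \sfreply u = \extr{P} \sfreply u$, reducing the claim to $\extr{f.\dup \conc x} \sfreply f.\cterm{\cH(\pebble w)} = \extr{x} \sfreply f.\cterm{\cH(\pebble v \sep w)}$. Since $f.\dup$ is a plain basic instruction, the thread-extraction equations give $\extr{f.\dup \conc x} = (f.\dup) \bapf \extr{2, f.\dup \conc x} = \pcc{\extr{2, f.\dup \conc x}}{f.\dup}{\extr{2, f.\dup \conc x}}$.

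Next I would determine the effect of processing $\dup$ in the state $\pebble w$, which is exactly where the hypothesis on $w$ is consumed. A case distinction between $w = v$ and $w = v \sep w'$ matches the second and third defining clauses of $\Dup$ respectively, and in both cases yields $\Dup(\pebble w) = \tup{\True, \pebble v \sep w}$; hence $\sreply{\dup}(\cH(\pebble w)) = \True$ and $\effect{\dup}(\cH(\pebble w)) = \cH(\pebble v \sep w)$. Writing the singleton service family as $f.\cterm{\cH(\pebble w)} \sfcomp \encap{\set{f}}(\emptysf)$ (legitimate by SFE1 and SFC1), axiom R7 then applies and gives $\extr{f.\dup \conc x} \sfreply f.\cterm{\cH(\pebble w)} = \extr{2, f.\dup \conc x} \sfreply f.\cterm{\cH(\pebble v \sep w)}$. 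The whole proposition is thus reduced to the single equation $\extr{2, f.\dup \conc x} \sfreply f.\cterm{\cH(\pebble v \sep w)} = \extr{x} \sfreply f.\cterm{\cH(\pebble v \sep w)}$.

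This remaining equation is where I expect the real work, and it is the main obstacle. Both threads run the instructions of $x$ from its first instruction, and because all jumps of $x$ are relative they hit identical targets in the two settings, with one exception: a backward jump $\bjmp{l}$ at the $j$-th instruction of $x$ with $l \geq j$ leaves the sequence and produces $\DeadEnd$ in $\extr{x}$, whereas in $\extr{2, f.\dup \conc x}$ a jump with $l = j$ instead lands on the prepended instruction $f.\dup$ and re-enters $x$, so the two threads genuinely differ there. This difference is harmless under the relevant use convention governing the reply operator: the term $x \sfreply f.\cterm{\cH(\pebble v \sep w)}$ is only used when $\extr{x} \cvgb f.\cterm{\cH(\pebble v \sep w)}$, so the execution path picked out by the replies terminates in $\StopP$ or $\StopN$ and never reaches a configuration at which $x$ performs an escaping backward jump (such a configuration would force $\DeadEnd$, hence divergence). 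Along this common path the two threads perform the same actions and receive the same replies, so their reply values agree. I would make this precise as in the proof of Proposition~\ref{prop-cvg-sfreply}: by Lemma~\ref{lemma-elimination} and axiom R10 it suffices to argue for the corresponding closed \BTAbt\ terms, and there the equality follows by induction on the length of the converging computation using axioms R5, R7 and R8.
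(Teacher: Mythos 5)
Your proof is correct and follows the same route as the paper's, whose entire argument is the single sentence that the claim ``follows immediately from the definition of $\Dup$ and the axioms for $\sfreply$'': you unfold $\Dup(\pebble w) = \tup{\True,\pebble v \sep w}$ from the case distinction on $w$ and peel off the leading instruction with axiom R7, exactly as intended. Where you go beyond the paper is the residual identity $\extr{2,f.\dup \conc x} \sfreply f.\cterm{\cH(\pebble v \sep w)} = \extr{x} \sfreply f.\cterm{\cH(\pebble v \sep w)}$: the paper treats this as immediate, but you are right that it is not, since a backward jump $\bjmp{j}$ at position $j$ of $x$ yields $\DeadEnd$ in $\extr{x}$ yet re-enters the sequence via the prepended $f.\dup$ in $\extr{2,f.\dup \conc x}$, and one can construct an $\cH$ (with a second, state-sensitive method) and an $x$ for which the two reply values then genuinely differ. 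Your resolution---that the relevant use convention presupposes $x \cvgb f.\cterm{\cH(\pebble v \sep w)}$, so the converging run never executes such an escaping jump and the two threads agree along the whole relevant path---is exactly what is needed, and the formal scaffolding you sketch (reduce to projections and then induct, as in Proposition~\ref{prop-swap-f2d}) is the right one. One small correction: for threads of the form $\extr{\cdot}$, which arise as solutions of finite guarded recursive specifications, the reduction to closed \BTAbt\ terms goes through Lemma~\ref{lemma-projections-BTAbt} together with axiom R10, not Lemma~\ref{lemma-elimination}.
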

\begin{proof}
This follows immediately from the definition of $\Dup$ and the axioms
for~$\sfreply$.
\qed
\end{proof}
The method operation $\Dup$ is a derived method operation of the
above-mentioned functional unit whose method operations correspond to
the basic steps that a Turing machine with tape alphabet
$\set{0,1,\sep}$ can perform on its tape.
This follows immediately from the computability of $\Dup$ and the
universality of this functional unit.

In Sections~\ref{sect-interpreters} and~\ref{sect-autosolvability}, we
will make use of two simple transformations of \PGLBsbt\ instruction
sequences that affect only their termination behaviour on execution and
the Boolean value yielded at termination in the case of termination.
Here, we introduce notations for those transformations.

Let $x$ be a \PGLBsbt\ instruction sequence.
Then we write $\swap(x)$ for $x$ with each occurrence of $\haltP$
replaced by $\haltN$ and each occurrence of $\haltN$ replaced by
$\haltP$, and we write $\ftod(x)$ for $x$ with each occurrence of
$\haltN$ replaced by $\fjmp{0}$.
In the following proposition, the most important properties relating to
these transformations are stated.
\begin{proposition}
\label{prop-swap-f2d}
Let $x$ be a \PGLBsbt\ instruction sequence.
Then:
\begin{enumerate}
\item
if $x \sfreply u = \True$ then $\swap(x) \sfreply u = \False$ and
$\ftod(x) \sfreply u = \True$;
\item
if $x \sfreply u = \False$ then $\swap(x) \sfreply u = \True$ and
$\ftod(x) \sfreply u = \Div$.
\end{enumerate}
\end{proposition}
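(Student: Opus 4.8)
The plan is to reduce the statement to the level of threads and then to closed \BTAbt\ terms. Since $x \sfreply u = \extr{x} \sfreply u$ by definition, and since $\swap$ and $\ftod$ only rewrite termination instructions, thread extraction commutes with them: $\extr{\swap(x)}$ is the thread obtained from $\extr{x}$ by interchanging the constants $\StopP$ and $\StopN$ throughout, and $\extr{\ftod(x)}$ is the thread obtained from $\extr{x}$ by replacing every occurrence of $\StopN$ with $\DeadEnd$ (recall that $\ftod$ turns $\haltN$ into $\fjmp{0}$, and that an instruction $\fjmp{0}$ begins an infinite jump chain, hence extracts to $\DeadEnd$). I would write $\sw(p)$ and $\fd(p)$ for these two operations applied to an arbitrary \BTAbt\ thread $p$; both are defined by the obvious structural recursion and therefore commute with the projection operators by P0--P3.

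First I would prove the core claim for every closed \BTAbt\ term $p$: if $p \sfreply u = \True$ then $\sw(p) \sfreply u = \False$ and $\fd(p) \sfreply u = \True$, and if $p \sfreply u = \False$ then $\sw(p) \sfreply u = \True$ and $\fd(p) \sfreply u = \Div$. This goes by induction on the structure of $p$ (legitimate by Lemma~\ref{lemma-elimination}). The base cases are immediate from R1, R2 and R4: for $p = \StopP$ we have $\sw(p) = \StopN$ and $\fd(p) = \StopP$, and for $p = \StopN$ we have $\sw(p) = \StopP$ and $\fd(p) = \DeadEnd$, whose reply is $\Div$; the cases $p = \Stop$ and $p = \DeadEnd$ are vacuous, since their reply is $\Mea$ resp.\ $\Div$ (and $\Stop$ does not in fact arise, \PGLBsbt\ having no plain termination instruction). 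In the step $p = \pcc{q}{a}{r}$ the operations $\sw$ and $\fd$ distribute over postconditional composition and leave the action $a$ untouched, so each reply axiom applies verbatim to $\sw(p)$ and $\fd(p)$: for $a = \Tau$ one uses R5 (after T1) and the induction hypothesis for $q$; for $a = f.m$ the reply is $\Div$ under R6 and R9 (premise vacuous), while under R7 and R8 it reduces to the reply of $q$, resp.\ $r$, on the derived service family $f.\derive{m}H \sfcomp \encap{\set{f}}(u)$, to which the induction hypothesis applies. Crucially, this recursion descends along the single execution path to the unique termination constant reached, so in the $\True$ case no $\StopN$ is ever met and replacing $\StopN$ by $\DeadEnd$ leaves the outcome $\True$, whereas in the $\False$ case the terminal $\StopN$ becomes $\DeadEnd$ and the outcome degrades to $\Div$.

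Finally I would lift the core claim from closed \BTAbt\ terms to the regular thread $\extr{x}$. When $\extr{x} \sfreply u$ is a Boolean value the computation converges (Proposition~\ref{prop-cvg-sfreply}), so there is an $n$ with $\proj{n}{\extr{x}} \sfreply u = \extr{x} \sfreply u$; by Lemma~\ref{lemma-projections-BTAbt} the term $\proj{n}{\extr{x}}$ may be taken to be a closed \BTAbt\ term, to which the core claim applies. Using that $\sw$ and $\fd$ commute with $\proj{n}{\ph}$ and with thread extraction, we get $\sw(\proj{n}{\extr{x}}) = \proj{n}{\extr{\swap(x)}}$ and $\fd(\proj{n}{\extr{x}}) = \proj{n}{\extr{\ftod(x)}}$, so the core claim yields the asserted replies of the projections; since $\swap(x)$ and $\ftod(x)$ traverse the same path as $x$ up to termination, the same $n$ computes their full replies, giving the result. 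I expect the main obstacle to be precisely this last bookkeeping: justifying that one depth $n$ simultaneously captures the convergent reply computations of $x$, $\swap(x)$ and $\ftod(x)$, and verifying carefully that the transformations commute with both projection and thread extraction. The $\ftod$ half of part~(2), where a terminating $\False$-computation must be shown to turn into genuine divergence rather than into some other non-Boolean value, is where this care is most needed.
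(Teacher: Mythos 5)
Your proposal coincides with the paper's own proof in all essentials: the paper likewise introduces term-level operations $\swap'$ and $\ftod'$ on closed \BTAbt\ terms (interchanging $\StopP$ and $\StopN$, respectively replacing $\StopN$ by $\DeadEnd$), shows $\extr{i,\swap(x)} = \swap'(\extr{i,x})$ and $\extr{i,\ftod(x)} = \ftod'(\extr{i,x})$ by induction on $i$, reduces the statement to closed \BTAbt\ terms via Lemma~\ref{lemma-projections-BTAbt} and axiom R10, and settles the closed-term claim by structural induction. The projection bookkeeping you single out as the main obstacle is exactly what the paper compresses into its terse appeal to Lemma~\ref{lemma-projections-BTAbt} and R10, so your account is, if anything, a slightly more explicit rendering of the same argument.
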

\begin{proof}
Let $p$ be a closed \BTAbt\ term of sort $\Thr$.
Then we write $\swap'(p)$ for $p$ with each occurrence of $\StopP$
replaced by $\StopN$ and each occurrence of $\StopN$ replaced by
$\StopP$, and we write $\ftod'(p)$ for $p$ with each occurrence of
$\StopN$ replaced by $\DeadEnd$.
It is easy to prove by induction on $i$ that
$\extr{i,\swap(x)} = \swap'(\extr{i,x})$ and
$\extr{i,\ftod(x)} = \ftod'(\extr{i,x})$ for all $i \in \Nat$.
By this result, Lemma~\ref{lemma-projections-BTAbt}, and axiom R10, it
is sufficient to prove the following for each closed \BTAbt\ term $p$ of
sort $\Thr$:
\begin{enumerate}
\item[]
if $p \sfreply u = \True$ then $\swap'(p) \sfreply u = \False$ and
$\ftod'(p) \sfreply u = \True$;
\item[]
if $p \sfreply u = \False$ then $\swap'(p) \sfreply u = \True$ and
$\ftod'(p) \sfreply u = \Div$.
\end{enumerate}
This is easy by induction on the structure of $p$.
\qed
\end{proof}

By the use of foci and the introduction of apply and reply operators on
service families, we make it possible to deal with cases that remind of
multi-tape Turing machines, Turing machines that has random access
memory, etc.
However, in this paper, we will only consider the case that reminds of
single-tape Turing machines.
This means that we will use only one focus ($f$) and only
singleton service families.

\section{Interpreters}
\label{sect-interpreters}

It is often mentioned in textbooks on computability that an
interpreter, which is a program for simulating the execution of
programs that it is given as input, cannot solve the halting problem
because the execution of the interpreter will not terminate if the
execution of its input program does not terminate.
In this section, we have a look at the termination behaviour of
interpreters in the setting of \PGLBsbt\ and functional units.

Let $\cH \in \FU(\SBS)$, let $I \subseteq \IF(\cH)$, and
let $I' \subseteq I$.
Then $x \in \Lf{I}$ is an \emph{interpreter} for $\Lf{I'}$ with respect
to $\cH$ if for all $y \in \Lf{I'}$ and $v \in \seqof{\set{0,1,\sep}}$:
\begin{ldispl}
y \cvg f.\cterm{\cH(\pebble v)} \Implies {} \\ \quad
x \cvg f.\cterm{\cH(\pebble \ol{y} \sep v)} \And
x \sfapply f.\cterm{\cH(\pebble \ol{y} \sep v)} =
y \sfapply f.\cterm{\cH(\pebble v)} \And
x \sfreply f.\cterm{\cH(\pebble \ol{y} \sep v)} =
y \sfreply f.\cterm{\cH(\pebble v)}\;.
\end{ldispl}%
Moreover, $x \in \Lf{I}$ is a \emph{reflexive interpreter} for $\Lf{I'}$
with respect to $\cH$ if $x$ is an interpreter for $\Lf{I'}$ with respect
to $\cH$ and $x \in \Lf{I'}$.

The following theorem states that a reflexive interpreter that always
terminates is impossible in the presence of the method operation $\Dup$.
\begin{theorem}
\label{theorem-interpreter}
Let $\cH \in \FU(\SBS)$ be such that $\tup{\dup,\Dup} \in \cH$,
let $I \subseteq \IF(\cH)$ be such that $\dup \in I$, and
let $x \in \Lf{\IF(\cH)}$ be a reflexive interpreter for $\Lf{I}$ with
respect to $\cH$.
Then there exist an $y \in \Lf{I}$ and a $v \in \seqof{\set{0,1,\sep}}$
such that $x \dvg f.\cterm{\cH(\pebble \ol{y} \sep v)}$.
\end{theorem}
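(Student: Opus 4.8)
The plan is a diagonal argument that produces an explicit witness built from $x$ itself. Since $x$ is a \emph{reflexive} interpreter we have $x \in \Lf{I}$, and as swapping the two termination instructions leaves the set of basic instructions unchanged, $\swap(x) \in \Lf{I}$ as well; because $\dup \in I$, the instruction sequence $z = f.\dup \conc \swap(x)$ then also lies in $\Lf{I}$. I claim that $y = z$ together with $v = \ol{z}$ witnesses the theorem, i.e.\ that $x \dvg f.\cterm{\cH(\pebble \ol{z} \sep \ol{z})}$.

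First I would evaluate the reply of $z$ on its own code. Applying Proposition~\ref{prop-dup} with $v = w = \ol{z} \in \seqof{\set{0,1}}$ gives $z \sfreply f.\cterm{\cH(\pebble \ol{z})} = \swap(x) \sfreply f.\cterm{\cH(\pebble \ol{z} \sep \ol{z})}$. Next, instantiating the reply clause of the interpreter definition --- which holds for all $y \in \Lf{I}$ and all $v$ --- with $y = z$ and input $\ol{z}$ gives $x \sfreply f.\cterm{\cH(\pebble \ol{z} \sep \ol{z})} = z \sfreply f.\cterm{\cH(\pebble \ol{z})}$. Writing $b$ for the value $x \sfreply f.\cterm{\cH(\pebble \ol{z} \sep \ol{z})}$, these two identities together yield $\swap(x) \sfreply f.\cterm{\cH(\pebble \ol{z} \sep \ol{z})} = b$ as well.

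The contradiction now comes from Proposition~\ref{prop-swap-f2d}: if $b = \True$ then $\swap(x) \sfreply f.\cterm{\cH(\pebble \ol{z} \sep \ol{z})} = \False \neq b$, and symmetrically if $b = \False$ then that same reply equals $\True \neq b$. Hence $b \notin \set{\True,\False}$, so $b$ is $\Mea$ or $\Div$. The one genuinely load-bearing observation --- and the reason \PGLBsbt\ (strict Boolean termination) is used --- is that $\Mea$ cannot occur here: by axiom R3 a reply of $\Mea$ requires the extracted thread to reach $\Stop$, but thread extraction yields $\Stop$ only for a plain termination instruction $\halt$, which appears in no \PGLBsbt\ instruction sequence. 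Thus $b = \Div$, and by Proposition~\ref{prop-cvg-sfreply} this is exactly $x \dvg f.\cterm{\cH(\pebble \ol{z} \sep \ol{z})}$. Taking $y = z \in \Lf{I}$ and $v = \ol{z} \in \seqof{\set{0,1,\sep}}$ completes the proof. The only subtleties I anticipate are the routine check that $z$ really belongs to $\Lf{I}$ and the exclusion of the $\Mea$ case; the core is the familiar self-application trick transported to the reply operator.
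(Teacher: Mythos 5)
Your proof is correct and takes essentially the same route as the paper's: the same diagonal witness $f.\dup \conc \swap(x)$ run on its own code, with Propositions~\ref{prop-dup}, \ref{prop-swap-f2d} and~\ref{prop-cvg-sfreply} and the reply clause of the reflexive-interpreter definition combining to force the reply of $x$ on $\pebble\, \ol{z} \sep \ol{z}$ out of $\set{\True,\False}$. Your explicit exclusion of $\Mea$ (left implicit in the paper) is a welcome addition; the only nicety is that the interpreter's reply equation should strictly be invoked only once $z \cvg f.\cterm{\cH(\pebble\, \ol{z})}$ is known, which is indeed derivable inside each of your two Boolean cases, so the argument stands.
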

\begin{proof}
Assume the contrary.
Take $y = f.\dup \conc \swap(x)$.
By the assumption, $x \cvg f.\cterm{\cH(\pebble \ol{y} \sep \ol{y})}$.
By Propositions~\ref{prop-cvg-sfreply} and~\ref{prop-swap-f2d}, it
follows that $\swap(x) \cvg f.\cterm{\cH(\pebble \ol{y} \sep \ol{y})}$
and
$\swap(x) \sfreply f.\cterm{\cH(\pebble \ol{y} \sep \ol{y})} \neq
 x \sfreply f.\cterm{\cH(\pebble \ol{y} \sep \ol{y})}$.
By Propositions~\ref{prop-cvg-sfreply} and~\ref{prop-dup}, it follows
that $(f.\dup \conc \swap(x)) \cvg f.\cterm{\cH(\pebble \ol{y})}$ and
$(f.\dup \conc \swap(x)) \sfreply f.\cterm{\cH(\pebble \ol{y})} \neq
 x \sfreply f.\cterm{\cH(\pebble \ol{y} \sep \ol{y})}$.
Since $y = f.\dup \conc \swap(x)$, we have
$y \cvg f.\cterm{\cH(\pebble \ol{y})}$ and
$y \sfreply f.\cterm{\cH(\pebble \ol{y})} \neq
 x \sfreply f.\cterm{\cH(\pebble \ol{y} \sep \ol{y})}$.
Because $x$ is a reflexive interpreter, this implies
$x \sfreply f.\cterm{\cH(\pebble \ol{y} \sep \ol{y})} =
 y \sfreply f.\cterm{\cH(\pebble \ol{y})}$ and
$y \sfreply f.\cterm{\cH(\pebble \ol{y})} \neq
 x \sfreply f.\cterm{\cH(\pebble \ol{y} \sep \ol{y})}$.
This is a contradiction.
\qed
\end{proof}
It is easy to see that Theorem~\ref{theorem-interpreter} goes through
for all functional units for $\SBS$ of which $\Dup$ is a derived method
operation.
Recall that the functional units concerned include the afore-mentioned
functional unit whose method operations correspond to the basic steps
that a Turing machine with tape alphabet $\set{0,1,\sep}$ can perform
on its tape.

For each $\cH \in \FU(\SBS)$, $m \in \IF(\cH)$, and $v \in \SBS$,
we have $(f.m \conc \haltP \conc \haltN) \cvg f.\cterm{\cH(v)}$.
This leads us to the following corollary of
Theorem~\ref{theorem-interpreter}.
\begin{corollary}
\label{corollary-interpreter}
For all $\cH \in \FU(\SBS)$ with $\tup{\dup,\Dup} \in \cH$ and
$I \subseteq \IF(\cH)$ with $\dup \in I$, there does not exist an
$m \in I$ such that $f.m \conc \haltP \conc \haltN$ is a reflexive
interpreter for $\Lf{I}$ with respect to $\cH$.
\end{corollary}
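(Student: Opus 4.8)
The plan is to derive a contradiction from the two facts that immediately surround the statement, treating the corollary as a specialisation of Theorem~\ref{theorem-interpreter} to a concrete candidate interpreter. Suppose, for some $\cH \in \FU(\SBS)$ and $I \subseteq \IF(\cH)$ satisfying the hypotheses $\tup{\dup,\Dup} \in \cH$ and $\dup \in I$, that there were an $m \in I$ such that $x := f.m \conc \haltP \conc \haltN$ is a reflexive interpreter for $\Lf{I}$ with respect to $\cH$. Since $m \in I \subseteq \IF(\cH)$, the sequence $x$ lies in $\Lf{\IF(\cH)}$, so Theorem~\ref{theorem-interpreter} is applicable to it.

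First I would invoke the observation recorded just before the corollary: for every $m \in \IF(\cH)$ and every $v \in \SBS$ one has $(f.m \conc \haltP \conc \haltN) \cvg f.\cterm{\cH(v)}$. This is justified by clauses~2, 4 and~5 of the inductive definition of $\cvg$, since $\extr{x} = \pcc{\StopP}{f.m}{\StopP}$ and $\sreply{m}(\cH(v)) = m_\cH^r(v) \in \Bool$ because $m \in \IF(\cH)$; whichever Boolean reply occurs, the relevant branch is $\StopP$, which converges on any service. In particular $x \cvg f.\cterm{\cH(\pebble \ol{y} \sep v)}$ for all $y \in \Lf{I}$ and all $v \in \seqof{\set{0,1,\sep}}$.

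Next I would confront this with Theorem~\ref{theorem-interpreter}. Because $x$ is assumed to be a reflexive interpreter for $\Lf{I}$ with respect to $\cH$ and the hypotheses on $\cH$ and $I$ hold, that theorem furnishes some $y \in \Lf{I}$ and $v \in \seqof{\set{0,1,\sep}}$ with $x \dvg f.\cterm{\cH(\pebble \ol{y} \sep v)}$, i.e.\ $x$ fails to converge on precisely such a service. This directly contradicts the universal convergence of $x$ established in the previous step, which completes the argument. There is essentially no technical obstacle here: the only point requiring care is the justification of the convergence claim for the always-terminating candidate $f.m \conc \haltP \conc \haltN$, and this is exactly the remark stated immediately above the corollary.
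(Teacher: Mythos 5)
Your argument is correct and is exactly the one the paper intends: the remark preceding the corollary establishes $(f.m \conc \haltP \conc \haltN) \cvg f.\cterm{\cH(v)}$ for all $v \in \SBS$, which contradicts the divergence instance supplied by Theorem~\ref{theorem-interpreter} if that instruction sequence were a reflexive interpreter. Your additional justification of the convergence claim via $\extr{x} = \pcc{\StopP}{f.m}{\StopP}$ and clauses 2, 4 and 5 of the definition of $\cvg$ is accurate and merely makes explicit what the paper leaves implicit.
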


To the best of our knowledge, there are no existing results in
computability theory or elsewhere directly related to
Theorem~\ref{theorem-interpreter}.
It looks as if the closest to this result are results on termination of
particular interpreters for particular logic and functional programming
languages.

\section{Autosolvability of the Halting Problem}
\label{sect-autosolvability}

Because a reflexive interpreter that always terminates is impossible in
the presence of the method operation $\Dup$, we must conclude that
solving the halting problem by means of a reflexive interpreter is out
of the question in the presence of the method operation $\Dup$.
The question arises whether the proviso ``by means of a reflexive
interpreter'' can be dropped.
In this section, we answer this question in the affirmative.
Before we present this negative result concerning autosolvability of the
halting problem, we present a positive result.

Let $M \in \MO(\SBS)$.
Then we say that $M$ \emph{increases the number of colons} if for some
$v \in \SBS$ the number of colons in $M^e(v)$ is greater than the number
of colons in $v$.

\begin{theorem}
\label{theorem-autosolv}
Let $\cH \in \FU(\SBS)$ be such that no method operation of $\cH$
increases the number of colons.
Then there exist an extension $\cH'$ of $\cH$,
an $I' \subseteq \IF(\cH')$, and an $x \in \Lf{\IF(\cH')}$ such that
$x$ produces a reflexive solution of the halting problem for $\Lf{I'}$
with respect to $\cH'$.
\end{theorem}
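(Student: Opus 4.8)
The plan is to leave $\cH$ essentially untouched and adjoin a single, possibly non-computable, halting-oracle method operation; since the theorem asks only for autosolvability and not for computability of $\cH'$, a non-constructive operation is admissible. Concretely, I would take a fresh method name $g$ and the method operation $G \in \MO(\SBS)$ whose effect part is the identity, $G^e(s) = s$ (so $G$ never increases the number of colons), and whose reply part $G^r(s) \in \Bool$ is the halting verdict specified below. Setting $\cH' = \cH \union \set{\tup{g,G}}$ yields a finite functional unit for $\SBS$ extending $\cH$; I would then put $I' = \IF(\cH') = \IF(\cH) \union \set{g}$ and take $x = \ptst{f.g} \conc \haltP \conc \haltN$. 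Because $x \in \Lf{I'}$, any solution it produces is automatically reflexive, and because $G^r$ is always Boolean the thread $\extr{x}$ performs the single action $f.g$ and then terminates; hence $x \cvg f.\cterm{\cH'(v)}$ for every $v \in \SBS$, and $x \sfreply f.\cterm{\cH'(s)} = G^r(s)$ for every $s$. The whole theorem thus reduces to defining $G^r$ so that $G^r(\pebble \ol y \sep v) = \True$ exactly when $y \cvg f.\cterm{\cH'(\pebble v)}$, for all $y \in \Lf{I'}$ and $v \in \seqof{\set{0,1,\sep}}$.

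The point where the hypothesis must be used --- and the step I expect to be the main obstacle --- is making this blatantly self-referential definition consistent: programs $y \in \Lf{I'}$ may themselves invoke $g$, so the behaviour $G^r$ is meant to report can depend on $G^r$. I would control this with the number of colons as a termination measure. Writing $c(s)$ for the number of colons in $s$, note that every method operation named in $I'$ is colon-non-increasing (those of $\cH$ by hypothesis, and $G$ because $G^e$ is the identity); consequently any run of a $y \in \Lf{I'}$ started in $s_0$ visits only states of colon count at most $c(s_0)$, so whether $y \cvg f.\cterm{\cH'(s_0)}$ depends only on the restriction of $G^r$ to states of colon count at most $c(s_0)$. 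A query state $\pebble \ol y \sep v$ has colon count $c(v)+1$, since $\ol y \in \seqof{\set{0,1}}$ contributes no colon, whereas the run it asks about starts from $\pebble v$, of colon count $c(v)$. Hence the verdict for a query of colon count $k$ depends only on $G^r$ at colon counts at most $k-1$, which licenses a definition of $G^r$ by induction on $c(s)$: put $G^r(s) = \False$ on every state not of the form $\pebble \ol y \sep v$ with $\ol y$ the (unique, by injectivity of the encoding) code of some $y \in \Lf{I'}$, and on a genuine query put $G^r(\pebble \ol y \sep v) = \True$ iff $y \cvg f.\cterm{\cH'(\pebble v)}$, an event already fixed by the values of $G^r$ at strictly smaller colon count.

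It then remains to read off the two defining conditions. By construction $G^r(\pebble \ol y \sep v) = \True$ iff $y \cvg f.\cterm{\cH'(\pebble v)}$, and together with $x \sfreply f.\cterm{\cH'(s)} = G^r(s)$ and $x \cvg f.\cterm{\cH'(v)}$ for all $v$ this shows that $x$ produces a reflexive solution of the halting problem for $\Lf{I'}$ with respect to $\cH'$. The crux is the well-foundedness claim of the middle paragraph, namely that a halting query strictly exceeds in colon count every state consulted while answering it; this is exactly what colon-non-increase secures, and it is also precisely what fails for colon-increasing units such as those containing $\Dup$, where the self-application behind Theorem~\ref{theorem-interpreter} needs a colon-producing duplication step that is here unavailable.
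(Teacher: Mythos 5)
Your proposal is correct and follows essentially the same route as the paper: the paper adjoins a single fresh method name $\halting$ with a (generally non-computable) method operation $\Halting$ whose reply encodes the halting verdict, defined by induction on the number of colons in its argument exactly as you justify via the colon-count well-foundedness argument, and then takes $x = \ptst{f.\halting} \conc \haltP \conc \haltN$ with $I' = \IF(\cH) \union \set{\halting}$. The only inessential difference is that the paper's $\Halting$ maps every state to the empty tape $\pebble$ rather than acting as the identity; both choices are colon-non-increasing, so the inductive definition is licensed either way.
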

\begin{proof}
Let $\halting \in \MN$ be such that $\halting \notin \IF(\cH)$.
Take $I' = \IF(\cH) \union \set{\halting}$.
Take $\cH' = \cH \union \set{\tup{\halting,\Halting}}$, where
$\Halting \in \MO(\SBS)$ is defined as follows:
\begin{ldispl}
\begin{aceqns}
\Halting(v \pebble w) & = & \Halting(\pebble v w)\;, \\
\Halting(\pebble v)   & = & \tup{\False,\pebble}
 & \mif v \in \seqof{\set{0,1}}\;, \\
\Halting(\pebble v \sep w) & = & \tup{\False,\pebble}
 & \mif v \in \seqof{\set{0,1}} \And
        \Forall{x \in \Lf{I'}}{v \neq \ol{x}}\;, \\
\Halting(\pebble \ol{x} \sep w) & = & \tup{\False,\pebble}
 & \mif x \in \Lf{I'} \And x \dvg f.\cterm{\cH'(w)}\;, \\
\Halting(\pebble \ol{x} \sep w) & = & \tup{\True,\pebble}
 & \mif x \in \Lf{I'} \And x \cvg f.\cterm{\cH'(w)}\;.
\end{aceqns}
\end{ldispl}%
Then $\ptst{f.\halting} \conc \haltP \conc \haltN$ produces a reflexive
solution of the halting problem for $\Lf{I'}$ with respect to $\cH'$.
\qed
\end{proof}
Theorem~\ref{theorem-autosolv} tells us that there exist functional
units $\cH \in \FU(\SBS)$ with the property that the halting problem is
potentially autosolvable for $\Lf{\IF(\cH)}$ with respect to $\cH$.
Thus, we know that there exist functional units $\cH \in \FU(\SBS)$ with
the property that the halting problem is autosolvable for
$\Lf{\IF(\cH)}$ with respect to~$\cH$.

There exists an $\cH \in \FU(\SBS)$ for which $\Halting$ as defined in
the proof of Theorem~\ref{theorem-autosolv} is computable.
\begin{theorem}
\label{theorem-comput}
Let $\cH = \emptyset$ and
$\cH' = \cH \union \set{\tup{\halting,\Halting}}$, where
$\Halting$ is as defined in the proof
of Theorem~\ref{theorem-autosolv}.
Then, $\Halting$ is computable.
\end{theorem}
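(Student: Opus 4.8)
The plan is to establish computability of $\Halting$ by treating its reply part $\Halting^r$ and its effect part $\Halting^e$ separately, using that a method operation $M \in \MO(\SBS)$ is computable exactly when $M^r$ and $M^e$ are (through the bijection $\funct{\alpha}{\SBS}{\Nat}$ and the coding $\beta$ of the Booleans). The effect part is immediate: every defining clause of $\Halting$ returns $\pebble$ as its second component, so $\Halting^e(v) = \pebble$ for all $v \in \SBS$, which is trivially computable. All of the work concerns $\Halting^r$. Here I first note that $\IF(\cH) = \IF(\emptyset) = \emptyset$, so $I' = \set{\halting}$ and the instruction sequences $x \in \Lf{I'}$ whose convergence governs the value of $\Halting$ have $f.\halting$ as their only basic instruction.

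I would prove that $\Halting^r$ is computable by induction on the number of colons occurring in (the right-hand part of) its argument. After normalising $v \pebble w$ to $\pebble v w$ and inspecting the content $u$ to the right of the head, the outer case analysis of the definition is effective, provided the given injection $\ol{\cdot}$ into $\seqof{\set{0,1}}$ has decidable range and a computable decoding; this is the standing assumption I make on the coding. If $u$ contains no colon, or if its prefix before the first colon is not of the form $\ol{x}$ for some $x \in \Lf{I'}$, the value is $\False$ by definition. Otherwise $u = \ol{x} \sep w$, and the value is $\True$ iff $x \cvg f.\cterm{\cH'(\pebble w)}$, where $w$ has strictly fewer colons than $u$.

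The key observation that breaks the apparent circularity -- convergence being undecidable in general -- is that $\Halting^e$ collapses every state to $\pebble$. Thus, when I simulate the execution of $x$ on $\pebble w$, the service state remains $\pebble w$ until the first time $f.\halting$ is processed (no other basic instruction exists in $\Lf{I'}$), and becomes $\pebble$ from then on, since $\Halting^e(\pebble) = \pebble$ and $\Halting^r(\pebble) = \False$. Hence the simulation splits into three effective phases: (i)~follow the jumps of $x$ from its first instruction, which -- there being finitely many positions -- within boundedly many steps either reaches a termination instruction (so $x$ converges), runs off the program or enters a jump cycle (so $x$ diverges), or reaches the first occurrence of $f.\halting$; (ii)~at that first $f.\halting$ the service state is still $\pebble w$, so its reply is $\Halting^r(\pebble w)$, which is computable by the induction hypothesis because $w$ has fewer colons; (iii)~from the successor position the service is the constant service replying $\False$ with fixed state $\pebble$, so the remaining run is a deterministic walk through finitely many program positions, and termination versus looping is decided within the length of $x$ many steps.

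Assembling these phases yields a total algorithm deciding $x \cvg f.\cterm{\cH'(\pebble w)}$, and hence computing $\Halting^r$ on arguments with a given number of colons from its values on arguments with fewer colons; the induction then delivers computability of $\Halting^r$ on all of $\SBS$, and therewith of $\Halting$. The main obstacle is precisely the self-reference in phase~(ii): one must see that every genuine recursive call strictly decreases the colon count and that only the first $f.\halting$ can trigger such a call, all later ones falling into the base case $\Halting^r(\pebble) = \False$. Once this is isolated, the finiteness of the set of program positions makes each phase a routine decidable computation.
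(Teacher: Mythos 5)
Your proposal is correct and takes essentially the same approach as the paper's proof: induction on the number of colons, exploiting that $\Halting^e$ collapses every state to $\pebble$, so that only the first processing of $f.\halting$ matters (its reply being supplied by the induction hypothesis), after which all replies are $\False$ and the finiteness of the instruction sequence decides the remainder. The paper merely packages your simulation phases as a syntactic transformation -- replacing the $f.\halting$-instructions by jumps, so that convergence on $f.\cterm{\cH'(v)}$ reduces to convergence of a finite jump-only instruction sequence on $\emptysf$ -- and leaves implicit the effectiveness assumption on the encoding that you state explicitly.
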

\begin{proof}
It is sufficient to prove for an arbitrary $x \in \Lf{\IF(\cH')}$ that,
for all $v \in \SBS$, $x \cvg f.\cterm{\cH'(v)}$ is decidable.
We will prove this by induction on the number of colons in $v$.

The basis step.
Because the number of colons in $v$ equals $0$,
$\Halting(v) = \tup{\False,\pebble}$.
It follows that $x \cvg f.\cterm{\cH'(v)} \Iff x' \cvg \emptysf$,
where $x'$ is $x$ with each occurrence of $f.\halting$ and
$\ntst{f.\halting}$ replaced by $\fjmp{1}$ and each occurrence of
$\ptst{f.\halting}$ replaced by $\fjmp{2}$.
Because $x'$ is finite, $x' \cvg \emptysf$ is decidable.
Hence, $x \cvg f.\cterm{\cH'(v)}$ is decidable.

The inductive step.
Because the number of colons in $v$ is greater than $0$, either
$\Halting(v) = \tup{\True,\pebble}$ or
$\Halting(v) = \tup{\False,\pebble}$.
It follows that $x \cvg f.\cterm{\cH'(v)} \Iff x' \cvg \emptysf$, where
$x'$ is $x$ with:
\begin{itemize}
\item
each occurrence of $f.\halting$ and $\ptst{f.\halting}$ replaced by
$\fjmp{1}$ if the occurrence leads to the first application of
$\Halting$ and $\Halting^r(v) = \True$, and by $\fjmp{2}$ otherwise;
\item
each occurrence of $\ntst{f.\halting}$ replaced by
$\fjmp{2}$ if the occurrence leads to the first application of
$\Halting$ and $\Halting^r(v) = \True$, and by $\fjmp{1}$ otherwise.
\end{itemize}
An occurrence of $f.\halting$, $\ptst{f.\halting}$ or
$\ntst{f.\halting}$ in $x$ leads to the first application of $\Halting$
iff $\extr{1,x} = \extr{i,x}$, where $i$ is the position of the
occurrence in $x$.
Because $x$ is finite, it is decidable whether an occurrence of
$f.\halting$, $\ptst{f.\halting}$ or $\ntst{f.\halting}$ leads to the
first processing of $\halting$.
Moreover, by the induction hypothesis, it is decidable whether
$\Halting^r(v) = \True$.
Because $x'$ is finite, it follows that $x' \cvg \emptysf$ is decidable.
Hence, $x \cvg f.\cterm{\cH'(v)}$ is decidable.
\qed
\end{proof}
Theorems~\ref{theorem-autosolv} and~\ref{theorem-comput} together tell
us that there exists a functional unit $\cH \in \FU(\SBS)$, viz.\
$\emptyset$, with the property that the halting problem is potentially
recursively autosolvable for $\Lf{\IF(\cH)}$ with respect to $\cH$.

Let $\cH \in \FU(\SBS)$ be such that all derived method operations of
$\cH$ are computable and do not increase the number of colons.
Then the halting problem is potentially autosolvable for $\Lf{\IF(\cH)}$
with respect to $\cH$.
However, the halting problem is not always potentially recursively
autosolvable for $\Lf{\IF(\cH)}$ with respect to $\cH$ because otherwise
the halting problem would always be decidable.

The following theorem tells us essentially that potential
autosolvability of the halting problem is precluded in the presence of
the method operation $\Dup$.
\begin{theorem}
\label{theorem-non-autosolv}
Let $\cH \in \FU(\SBS)$ be such that $\tup{\dup,\Dup} \in \cH$, and
let $I \subseteq \IF(\cH)$ be such that $\dup \in I$.
Then there does not exist an $x \in \Lf{\IF(\cH)}$ such that $x$
produces a reflexive solution of the halting problem for $\Lf{I}$ with
respect to $\cH$.
\end{theorem}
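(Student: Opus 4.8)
The plan is to run the classical Turing diagonalization, in the same spirit as the proof of Theorem~\ref{theorem-interpreter}, but now diagonalizing over \emph{convergence} rather than over the Boolean reply. I would assume, towards a contradiction, that some $x \in \Lf{\IF(\cH)}$ produces a reflexive solution of the halting problem for $\Lf{I}$ with respect to $\cH$. By definition this $x$ lies in $\Lf{I}$, converges on every $f.\cterm{\cH(v)}$, and satisfies $x \sfreply f.\cterm{\cH(\pebble \ol{y} \sep v)} = \True$ iff $y \cvg f.\cterm{\cH(\pebble v)}$ for all $y \in \Lf{I}$ and all $v \in \seqof{\set{0,1,\sep}}$. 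Since $\Lf{I}$ consists of \PGLBsbt\ instruction sequences, which cannot terminate plainly, $x$ always replies either $\True$ or $\False$.

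The heart of the argument is the construction of a self-referential $y$ that forces the solver into an impossible prediction. I would take $y = f.\dup \conc \ftod(\swap(x))$ and evaluate it on the state $\pebble \ol{y}$. The transformation $\ftod(\swap(x))$ is chosen so that it \emph{inverts convergence relative to the solver's verdict}: applying Proposition~\ref{prop-swap-f2d} twice (first to $x$, then to $\swap(x)$), whenever $x \sfreply u = \True$ we get $\ftod(\swap(x)) \sfreply u = \Div$, and whenever $x \sfreply u = \False$ we get $\ftod(\swap(x)) \sfreply u = \True$. Thus $\ftod(\swap(x))$ terminates exactly when $x$ would answer ``diverges''. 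The leading $f.\dup$ then duplicates the input bit sequence: by Proposition~\ref{prop-dup} (with $v = w = \ol{y} \in \seqof{\set{0,1}}$), I obtain $y \sfreply f.\cterm{\cH(\pebble \ol{y})} = \ftod(\swap(x)) \sfreply f.\cterm{\cH(\pebble \ol{y} \sep \ol{y})}$.

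Next I would discharge the side conditions needed to feed $y$ back into the solver. Reflexivity of the solution requires $y \in \Lf{I}$; this holds because $\dup \in I$, because $x \in \Lf{I}$ by reflexivity, and because $\swap$ and $\ftod$ only rewrite termination instructions and so leave membership in $\Lf{I}$ intact. Instantiating the solver-correctness equivalence at this $y$ and at $v = \ol{y}$ gives $x \sfreply f.\cterm{\cH(\pebble \ol{y} \sep \ol{y})} = \True$ iff $y \cvg f.\cterm{\cH(\pebble \ol{y})}$. I would then split on the (Boolean) value of $x \sfreply f.\cterm{\cH(\pebble \ol{y} \sep \ol{y})}$. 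If it is $\True$, the equivalence forces $y \cvg f.\cterm{\cH(\pebble \ol{y})}$, yet the displayed identity together with Proposition~\ref{prop-cvg-sfreply} yields $y \sfreply f.\cterm{\cH(\pebble \ol{y})} = \Div$, i.e.\ $y \dvg f.\cterm{\cH(\pebble \ol{y})}$. If it is $\False$, the equivalence forces $y \dvg f.\cterm{\cH(\pebble \ol{y})}$, whereas the identity gives $y \sfreply f.\cterm{\cH(\pebble \ol{y})} = \True$, hence $y \cvg f.\cterm{\cH(\pebble \ol{y})}$. Either way we reach a contradiction, so no reflexive solution exists.

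The main obstacle, and the step I would be most careful about, is picking the transformation that realizes the diagonal correctly. In the interpreter theorem it sufficed to \emph{swap} the Boolean reply, because an interpreter faithfully reproduces that reply; here the solver reports a \emph{meta}-fact (whether $y$ halts), so the diagonal program must convert that reported fact into genuine (non)termination of $y$ itself. This is exactly why the composite $\ftod \circ \swap$ --- turning a predicted halt into divergence and a predicted divergence into a halt --- is needed rather than $\swap$ alone, and why the book-keeping tying together Propositions~\ref{prop-swap-f2d}, \ref{prop-dup}, and~\ref{prop-cvg-sfreply} has to be carried out exactly on the self-application $\pebble \ol{y} \sep \ol{y}$.
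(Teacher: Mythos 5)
Your proposal is correct and follows essentially the same route as the paper's first proof of this theorem: the same diagonal instruction sequence $y = f.\dup \conc \ftod(\swap(x))$, the same use of Propositions~\ref{prop-dup}, \ref{prop-swap-f2d} and~\ref{prop-cvg-sfreply}, and the same two-case contradiction (you merely case-split on $x \sfreply f.\cterm{\cH(\pebble \ol{y} \sep \ol{y})}$ where the paper splits on the equivalent value of $\swap(x) \sfreply f.\cterm{\cH(\pebble \ol{y} \sep \ol{y})}$, and you compose the two applications of Proposition~\ref{prop-swap-f2d} up front). Your explicit justification that the reply is Boolean because \PGLBsbt\ excludes plain termination is a point the paper leaves implicit.
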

\begin{proof}
Assume the contrary.
Let $x \in \Lf{\IF(\cH)}$ be such that $x$ produces a reflexive
solution of the halting problem for $\Lf{I}$ with respect to $\cH$, and
let $y = f.\dup \conc \ftod(\swap(x))$.
Then $x \cvg f.\cterm{\cH(\pebble \ol{y} \sep \ol{y})}$.
By Propositions~\ref{prop-cvg-sfreply} and~\ref{prop-swap-f2d}, it
follows that $\swap(x) \cvg f.\cterm{\cH(\pebble \ol{y} \sep \ol{y})}$
and either
$\swap(x) \sfreply f.\cterm{\cH(\pebble \ol{y} \sep \ol{y})} = \True$ or
$\swap(x) \sfreply f.\cterm{\cH(\pebble \ol{y} \sep \ol{y})} = \False$.

In the case where
$\swap(x) \sfreply f.\cterm{\cH(\pebble \ol{y} \sep \ol{y})} = \True$,
we have by Proposition~\ref{prop-swap-f2d} that
(i)~$\ftod(\swap(x)) \sfreply f.\cterm{\cH(\pebble \ol{y} \sep \ol{y})}
       = \True$ and
(ii)~$x \sfreply f.\cterm{\cH(\pebble \ol{y} \sep \ol{y})} = \False$.
By Proposition~\ref{prop-dup}, it follows from~(i) that
$(f.\dup \conc \ftod(\swap(x))) \sfreply f.\cterm{\cH(\pebble \ol{y})} =
 \True$.
Since $y = f.\dup \conc \ftod(\swap(x))$, we have
$y \sfreply f.\cterm{\cH(\pebble \ol{y})} = \True$.
On the other hand, because $x$ produces a reflexive solution, it follows
from~(ii) that $y \dvg f.\cterm{\cH(\pebble \ol{y})}$.
By Proposition~\ref{prop-cvg-sfreply}, this contradicts with
$y \sfreply f.\cterm{\cH(\pebble \ol{y})} = \True$.

In the case where
$\swap(x) \sfreply f.\cterm{\cH(\pebble \ol{y} \sep \ol{y})} = \False$,
we have by Proposition~\ref{prop-swap-f2d} that
(i)~$\ftod(\swap(x)) \sfreply f.\cterm{\cH(\pebble \ol{y} \sep \ol{y})}
       = \Div$ and
(ii)~$x \sfreply f.\cterm{\cH(\pebble \ol{y} \sep \ol{y})} =
\True$.
By Proposition~\ref{prop-dup}, it follows from~(i) that
$(f.\dup \conc \ftod(\swap(x))) \sfreply f.\cterm{\cH(\pebble \ol{y})} =
 \Div$.
Since $y = f.\dup \conc \ftod(\swap(x))$, we have
$y \sfreply f.\cterm{\cH(\pebble \ol{y})} = \Div$.
On the other hand, because $x$ produces a reflexive solution, it follows
from~(ii) that $y \cvg f.\cterm{\cH(\pebble \ol{y})}$.
By Proposition~\ref{prop-cvg-sfreply}, this contradicts with
$y \sfreply f.\cterm{\cH(\pebble \ol{y})} = \Div$.
\qed
\end{proof}
It is easy to see that Theorem~\ref{theorem-non-autosolv} goes through
for all functional units for $\SBS$ of which $\Dup$ is a derived method
operation.
Recall that the functional units concerned include the afore-mentioned
functional unit whose method operations correspond to the basic steps
that a Turing machine with tape alphabet $\set{0,1,\sep}$ can perform
on its tape.
Because of this, the unsolvability of the halting problem for Turing
machines can be understood as a corollary of
Theorem~\ref{theorem-non-autosolv}.

Below, we will give an alternative proof of
Theorem~\ref{theorem-non-autosolv}.
A case distinction is needed in both proofs, but in the alternative
proof it concerns a minor issue.
The issue in question is covered by the following lemma.
\begin{lemma}
\label{lemma-non-autosolv}
Let $\cH \in \FU(\SBS)$, let $I \subseteq \IF(\cH)$,
let $x \in \Lf{\IF(\cH)}$ be such that $x$ produces a reflexive solution
of the halting problem for $\Lf{I}$ with respect to $\cH$,
let $y \in \Lf{I}$, and let $v \in \seqof{\set{0,1,\sep}}$.
Then
$y \cvg f.\cterm{\cH(\pebble v)}$ implies
$y \sfreply f.\cterm{\cH(\pebble v)} =
 x \sfreply f.\cterm{\cH(\pebble \ol{\ftod(y)} \sep v)}$.
\end{lemma}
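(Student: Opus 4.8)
The plan is to reduce the claim to a two-case analysis driven by the Boolean value that $y$ yields on $\cH(\pebble v)$, and in each case to evaluate the right-hand side $x \sfreply f.\cterm{\cH(\pebble \ol{\ftod(y)} \sep v)}$ by combining the defining property of a solution of the halting problem with Propositions~\ref{prop-swap-f2d} and~\ref{prop-cvg-sfreply}. First I would record two preliminary facts. Since $\ftod$ only replaces occurrences of $\haltN$ by $\fjmp{0}$ and introduces no new basic instructions, $\ftod(y)$ is again a \PGLBsbt\ instruction sequence over the same basic instructions, so $\ftod(y) \in \Lf{I}$; hence the solution property of $x$ applies to it and gives $x \sfreply f.\cterm{\cH(\pebble \ol{\ftod(y)} \sep v)} = \True$ exactly when $\ftod(y) \cvg f.\cterm{\cH(\pebble v)}$. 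Moreover, because $y$ is a \PGLBsbt\ instruction sequence it contains no plain termination instruction, so by Proposition~\ref{prop-cvg-sfreply} the hypothesis $y \cvg f.\cterm{\cH(\pebble v)}$ forces $y \sfreply f.\cterm{\cH(\pebble v)}$ to be $\True$ or $\False$ (it is neither $\Div$, as $y$ converges, nor $\Mea$, as there is no $\Stop$).

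Next I would split on these two values. If $y \sfreply f.\cterm{\cH(\pebble v)} = \True$, then by Proposition~\ref{prop-swap-f2d} we get $\ftod(y) \sfreply f.\cterm{\cH(\pebble v)} = \True$, whence $\ftod(y) \cvg f.\cterm{\cH(\pebble v)}$ by Proposition~\ref{prop-cvg-sfreply}; the solution property then yields $x \sfreply f.\cterm{\cH(\pebble \ol{\ftod(y)} \sep v)} = \True$, so both sides of the asserted equation equal $\True$. If instead $y \sfreply f.\cterm{\cH(\pebble v)} = \False$, then Proposition~\ref{prop-swap-f2d} gives $\ftod(y) \sfreply f.\cterm{\cH(\pebble v)} = \Div$, and so by Proposition~\ref{prop-cvg-sfreply} we have $\ftod(y) \dvg f.\cterm{\cH(\pebble v)}$; by the solution property this means $x \sfreply f.\cterm{\cH(\pebble \ol{\ftod(y)} \sep v)} \neq \True$.

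The main obstacle is precisely this second case: the solution property characterizes only when $x$ replies $\True$, so ruling out $\True$ is not yet enough to conclude $\False$. To close the gap I would invoke the first clause of the definition of a solution, namely that $x$ converges on every state of $\SBS$, to obtain $x \cvg f.\cterm{\cH(\pebble \ol{\ftod(y)} \sep v)}$; then, using that $x$ is a \PGLBsbt\ instruction sequence and hence never terminates without yielding a value (no $\Stop$), Proposition~\ref{prop-cvg-sfreply} leaves only the Boolean replies as possibilities, so the reply is $\False$. Thus both sides equal $\False$, completing this case and the proof. I expect the only delicate point to be this careful elimination of the $\Mea$ and $\Div$ alternatives; everything else is a direct application of the cited propositions together with the definitions of $\ftod$ and of a reflexive solution.
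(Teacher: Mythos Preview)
Your proof is correct and follows essentially the same route as the paper's: split on whether $y$ replies $\True$ or $\False$, then use Proposition~\ref{prop-swap-f2d} to determine the convergence behaviour of $\ftod(y)$ and invoke the solution property of $x$. The paper's version is more terse; in particular it writes directly ``and so $x \sfreply f.\cterm{\cH(\pebble \ol{\ftod(y)} \sep v)} = \False$'' in the second case, leaving implicit exactly the eliminations of $\Div$ (via $x \cvg f.\cterm{\cH(v)}$ for all $v$) and of $\Mea$ (via $x$ being a \PGLBsbt\ instruction sequence) that you spell out --- so your extra paragraph is a genuine clarification rather than a different argument.
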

\begin{proof}
By Proposition~\ref{prop-cvg-sfreply}, it follows from
$y \cvg f.\cterm{\cH(\pebble v)}$ that either
$y \sfreply f.\cterm{\cH(\pebble v)} = \True$ or
$y \sfreply f.\cterm{\cH(\pebble v)} = \False$.

In the case where $y \sfreply f.\cterm{\cH(\pebble v)} = \True$, we have
by Propositions~\ref{prop-cvg-sfreply} and~\ref{prop-swap-f2d} that
$\ftod(y) \cvg f.\cterm{\cH(\pebble v)}$ and so
$x \sfreply f.\cterm{\cH(\pebble \ol{\ftod(y)} \sep v)} = \True$.

In the case where $y \sfreply f.\cterm{\cH(\pebble v)} = \False$, we have
by Propositions~\ref{prop-cvg-sfreply} and~\ref{prop-swap-f2d} that
$\ftod(y) \dvg f.\cterm{\cH(\pebble v)}$ and so
$x \sfreply f.\cterm{\cH(\pebble \ol{\ftod(y)} \sep v)} = \False$.
\qed
\end{proof}

\begin{proof}
[Another proof of Theorem~\ref{theorem-non-autosolv}.]
% using Lemma~\ref{lemma-non-autosolv}
Assume the contrary.
Let $x \in \Lf{\IF(\cH)}$ be such that $x$ produces a reflexive solution
of the halting problem for $\Lf{I}$ with re\-spect to $\cH$, and
let $y = \ftod(\swap(f.\dup \conc x))$.
Then $x \cvg f.\cterm{\cH(\pebble \ol{y} \sep \ol{y})}$.
By Propo\-sitions~\ref{prop-cvg-sfreply}, \ref{prop-dup}
and~\ref{prop-swap-f2d}, it follows that
$\swap(f.\dup \conc x) \cvg f.\cterm{\cH(\pebble \ol{y})}$.
By Lemma~\ref{lemma-non-autosolv}, it follows that
$\swap(f.\dup \conc x) \sfreply f.\cterm{\cH(\pebble \ol{y})} =
  x \sfreply f.\cterm{\cH(\pebble \ol{y} \sep \ol{y})}$.
By Proposition~\ref{prop-swap-f2d}, it follows that
$(f.\dup \conc x) \sfreply f.\cterm{\cH(\pebble \ol{y})} \neq
 x \sfreply f.\cterm{\cH(\pebble \ol{y} \sep \ol{y})}$.
On the other hand, by Propo\-sition~\ref{prop-dup}, we have that
$(f.\dup \conc x) \sfreply f.\cterm{\cH(\pebble \ol{y})} =
 x \sfreply f.\cterm{\cH(\pebble \ol{y} \sep \ol{y})}$.
This contradicts with
$(f.\dup \conc x) \sfreply f.\cterm{\cH(\pebble \ol{y})} \neq
 x \sfreply f.\cterm{\cH(\pebble \ol{y} \sep \ol{y})}$.
\qed
\end{proof}
Both proofs of Theorem~\ref{theorem-non-autosolv} given above are
diagonalization proofs in disguise.

Now, let $\cH = \set{\tup{\dup,\Dup}}$.
By Theorem~\ref{theorem-non-autosolv}, the halting problem for
$\Lf{\set{\dup}}$ with respect to $\cH$ is not (potentially)
autosolvable.
However, it is decidable.
\begin{theorem}
\label{theorem-decidable}
Let $\cH = \set{\tup{\dup,\Dup}}$.
Then the halting problem for $\Lf{\set{\dup}}$ with respect to $\cH$ is
decidable.
\end{theorem}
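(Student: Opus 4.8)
The plan is to exploit the fact that the sole method operation $\Dup$ always yields the reply $\True$. Reading off its definition, one sees at once that $\Dup$ is total and that $\Dup^r(s) = \True$ for every $s \in \SBS$. Consequently, in any attempt to establish $y \cvg f.\cterm{\cH(\pebble v)}$ for $y \in \Lf{\set{\dup}}$, each processing of the only available method $\dup$ returns $\True$. Inspecting the inductive definition of $\cvg$, this means that of its clauses only $2$, $3$, $4$ and $6$ can ever be used, and clause~$4$ is always entered through its $\True$-branch. Hence no genuine two-way branching and no rejection of a method ever occurs: the flow of control through $y$ is fully deterministic and, crucially, independent of both the states reached and the initial tape content $v$.

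First I would make this precise by associating with $y = u_1 \conc \ldots \conc u_k$ a deterministic next-position function $\nu$ on $[1,k]$: from a position $i$ with $u_i \in \set{f.\dup, \ptst{f.\dup}}$ control passes to $i + 1$, from $u_i = \ntst{f.\dup}$ to $i + 2$, from $u_i = \fjmp{l}$ to $i + l$, and from $u_i = \bjmp{l}$ to $i \monus l$; positions carrying $\haltP$ or $\haltN$ are terminal, and a step leaving $[1,k]$ or passing through $\fjmp{0}$ or $\bjmp{0}$ yields $\DeadEnd$. Unwinding clauses $2$, $3$, $4$ and $6$ of $\cvg$ on $\extr{y}$, with clause~$4$ always taking the $\True$-branch and merely replacing $\cterm{\cH(s)}$ by $\cterm{\cH(\Dup^e(s))}$, one checks that $\extr{y}$ started at position $1$ traverses exactly the positions obtained by iterating $\nu$, whatever $v$ is. Therefore $y \cvg f.\cterm{\cH(\pebble v)}$ holds iff this deterministic position sequence reaches a terminal position, a condition that does not depend on $v$.

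It then remains to note that this condition is decidable by bounded simulation. Because $\nu$ is deterministic and $[1,k]$ is finite, the sequence $1, \nu(1), \nu(\nu(1)), \ldots$ must, within its first $k + 1$ entries, either reach a terminal position, or leave $[1,k]$ (respectively meet a zero jump), or revisit a position. In the first case $y \cvg f.\cterm{\cH(\pebble v)}$; in the second the computation deadlocks in $\DeadEnd$, so $y \dvg f.\cterm{\cH(\pebble v)}$; in the third, determinism forces an infinite cycle of $\Tau$-steps, so no finite projection terminates and $y \dvg f.\cterm{\cH(\pebble v)}$ by Proposition~\ref{prop-cvg-sfreply}. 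Simulating at most $k + 1$ applications of $\nu$ therefore decides convergence, and with it the halting problem for $\Lf{\set{\dup}}$ with respect to $\cH$.

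The point demanding the most care is the claim that the ever-changing state produced by $\Dup^e$ is irrelevant to convergence. The clean justification is exactly the separation above: convergence is governed solely by the replies, which are constantly $\True$, whereas the states enter only through $\Dup^e$ and never affect which clause of $\cvg$ applies. A revisited position therefore certifies divergence even though the underlying states differ at each visit, so the whole problem collapses to a purely syntactic, bounded analysis of $\nu$.
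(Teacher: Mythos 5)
Your proposal is correct and follows essentially the same route as the paper: both exploit that $\Dup$ is total with $\Dup^r$ constantly $\True$, so convergence is independent of the tape content and reduces to a finite, purely syntactic control-flow check. The paper packages this as a substitution replacing $f.\dup$ and $\ptst{f.\dup}$ by $\fjmp{1}$ and $\ntst{f.\dup}$ by $\fjmp{2}$ and then appeals to decidability of convergence of the resulting finite jump-only instruction sequence on $\emptysf$, which is exactly the bounded simulation of your next-position function $\nu$ spelled out in fewer words.
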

\begin{proof}
Let $x \in \Lf{\set{\dup}}$, and let $x'$ be $x$ with each occurrence of
$f.\dup$ and $\ptst{f.\dup}$ replaced by $\fjmp{1}$ and each occurrence
of $\ntst{f.\dup}$ replaced by $\fjmp{2}$.
For all $v \in \SBS$, $\Dup^r(v) = \True$.
Therefore, $x \cvg f.\cH(v) \Iff x' \cvg \emptysf$ for all $v \in \SBS$.
Because $x'$ is finite, $x' \cvg \emptysf$ is decidable.
\qed
\end{proof}

It follows from Theorem~\ref{theorem-decidable} that there exists a
computable method operation by means of which a solution for the
halting problem for $\Lf{\set{\dup}}$ can be produced.
This leads us to the following corollary of
Theorem~\ref{theorem-decidable}.
\begin{corollary}
\label{corollary-computable}
There exist a computable $\cH \in \FU(\SBS)$ with
$\tup{\dup,\Dup} \in \cH$, an $I \subseteq \IF(\cH)$ with $\dup \in I$,
and an $x \in \Lf{\IF(\cH)}$ such that $x$ produces a solution of the
halting problem for $\Lf{I}$ with respect to $\cH$.
\end{corollary}

To the best of our knowledge, there are no existing results in
computability theory directly related to
Theorems~\ref{theorem-autosolv}, \ref{theorem-comput},
\ref{theorem-non-autosolv} and~\ref{theorem-decidable}.
The closest to these results are probably the positive results in the
setting of Turing machines that have been obtained with restrictions on
the number of states, the minimum of the number of transitions where
the tape head moves to the left and the number of transitions where the
tape head moves to the right, or the number of different combinations
of input symbol, direction of head move, and output symbol occurring
in the transitions (see e.g.~\cite{Pav73a,Mar97a}).

\section{Concluding Remarks}
\label{sect-concl}

We have presented a re-design of the extension of basic thread algebra
that was used in previous work to deal with the interaction between
instruction sequences under execution and components of their execution
environment concerning the processing of instructions.
The changes introduced allow for the material from quite a part of
that work to be streamlined.
Moreover, we have introduced the notion of a functional unit.
Using the resulting setting, we have obtained a novel computability
result about functional units for natural numbers and several novel
results relating to the autosolvability of the halting problem.

The following remarks may clarify the relationship between the setting
that is used in this paper and the setting of Turing machines and the
extent to which the results presented in this paper can be transferred
to the setting of Turing machines.

Each single-tape Turing machine can be simulated by means of a thread
that interacts with a service from a singleton service family.
The thread and service correspond to the finite control and tape of the
single-tape Turing machine.
The threads that correspond to the finite controls of single-tape
Turing machines are examples of regular threads, i.e.\ threads that can
only evolve into a finite number of other threads.
Similar remarks can be made about multi-tape Turing machines, register
machines, multi-stack machines, et cetera.

The results about functional units can probably be transferred to the
setting of Turing machines after the notion of a functional unit has
been linked with that setting.
However, we believe that the setting of Turing machines does not lend
itself well to the investigation of the universality of functional
units for natural numbers.
The results relating to the autosolvability of the halting problem
cannot be transferred to the setting of Turing machines because that
setting corresponds to a restriction to a single fixed functional unit
in our setting.
The point is that all Turing machines have the same tape manipulation
features.
Because of that only the effects of restrictions on the use of these
features on the solvability of the halting problem are open
for investigation in the setting of Turing machines.

The following remarks touch on closely related previous work on the
halting problem and an interesting option for related future work on
the halting problem.

The results relating to the autosolvability of the halting problem
extend and strengthen the results regarding the halting problem for
programs given in~\cite{BP04a} in a setting which looks to be more
adequate to describe and analyse issues regarding the halting problem
for programs.
It happens that decidability depends on the halting problem instance
considered.
This is different in the case of the on-line halting problem for
programs, i.e.\ the problem to forecast during its execution whether a
program will eventually terminate (see~\cite{BP04a}).

The bounded halting problem for programs is the problem to determine,
given a program and an input to the program, whether execution of the
program on that input will terminate after no more than a fixed number
of steps.
An interesting option for future work is to investigate whether we can
find a lower bound for the complexity of solving the bounded halting
problem for programs using an appropriate functional unit.

The following remarks are miscellaneous ones relating to the material
presented in the current paper.

We have proposed three instruction sequence processing operators: the
use operator, the apply operator and the reply operator.
The apply operator fits in with the viewpoint that programs are state
transformers that can be modelled by partial functions.
This viewpoint was first taken in the early days of denotational
semantics, see e.g.~\cite{Mos74a,Sto77b,Ten77a}.

Pursuant to~\cite{BM09g}, we have also proposed to comply with
conventions that exclude the use of terms that can be built by means of
the proposed operators, but are not really intended to denote anything.
The idea to comply with such conventions looks to be more widely
applicable in theoretical computer science.

In the case where the state space is $\Bool$, the state space consists
of only two states.
Because there are four possible unary functions on $\Bool$, there are
precisely $16$ method operations in $\MO(\Bool)$.
There are in principle $2^{16}$ different functional units in
$\FU(\Bool)$, for it is useless to include the same method operation
more than once under different names in a functional unit.
This means that $2^{16}$ is an upper bound of the number of functional
unit degrees in $\FU(\Bool) \mdiv {\equiv}$.
However, it is straightforward to show that $\FU(\Bool) \mdiv {\equiv}$
has only $12$ different functional unit degrees.
In the more general case of a finite state space consisting of $k$
states, say $S_k$, there are in principle $2^{2^k \mmul k^k}$ different
functional units in $\FU(S_k)$.
Already with $k = 3$, it becomes unclear whether the number of
functional unit degrees in $\FU(S_k)$ can be determined manually.
Actually, we do not know at the moment whether it can be determined with
computer support either.

\begin{acknowledgement}
We thank two anonymous referees for carefully reading preliminary
versions of this paper and for suggesting improvements of the
presentation of the paper.
\end{acknowledgement}

\bibliographystyle{spmpsci}
\bibliography{IS}

\end{document}